\theoremstyle{break}
\newtheorem{theorem}{Theorem}[section]
\newtheorem{proposition}[theorem]{Proposition}
\newtheorem{corollary}[theorem]{Corollary}
\newtheorem{lemma}[theorem]{Lemma}
\newtheorem{definition}[theorem]{Definition}
\theoremstyle{nonumberbreak}
\newtheorem{proof}{Proof}
\numberwithin{equation}{section} 
\let\utilde\undertilde
\DeclareMathOperator{\ran}{ran}
\DeclareMathOperator{\dom}{dom}
\DeclareMathOperator{\tr}{tr}
\DeclareMathOperator{\id}{id}
\DeclareMathOperator{\sgn}{sgn}
\DeclareMathOperator{\spec}{spec}
\DeclareMathOperator{\diag}{diag}
\DeclareMathOperator{\dGamma}{d\Gamma}
\DeclareMathOperator{\Fol}{Fol}
\newcommand{\slot}{\,\cdot\,}
\newcommand{\Cinf}{C_{\mathrm{c}}^{\infty}}
\newcommand{\Wcal}{\mathcal{W}}
\newcommand{\Hcal}{\mathcal{H}}
\newcommand{\Mcal}{\mathcal{M}}
\newcommand{\Ncal}{\mathcal{N}}
\newcommand{\Fcal}{\mathcal{F}}
\newcommand{\Bcal}{\mathcal{B}}
\newcommand{\Acal}{\mathcal{A}}
\newcommand{\Rcal}{\mathcal{R}}
\newcommand{\Kcal}{\mathcal{K}}
\newcommand{\Gcal}{\mathcal{G}}
\newcommand{\Ocal}{\mathcal{O}}
\newcommand{\Ffrak}{\mathfrak{F}}
\newcommand{\Ftfrak}{\widetilde{\mathfrak{F}}}
\newcommand{\Tfrak}{\mathfrak{T}}
\newcommand{\Poincare}{\mathscr{P}_+^\uparrow}
\newcommand{\Lorentz}{\mathscr{L}_+^\uparrow}
\newcommand{\Transl}{\mathscr{T}}
\newcommand{\R}{\mathds{R}}
\newcommand{\C}{\mathds{C}}
\newcommand{\M}{\mathds{M}}
\newcommand{\WR}{\mathds{W}}
\newcommand{\restr}{\!\!\upharpoonright}
\newcommand{\WF}{W^{\mathcal{F}}}
\newcommand{\kfbeta}{\boldsymbol{\mathsf{k}}_\beta}
\newcommand{\kf}{\boldsymbol{\mathsf{k}}}
\newcommand{\gf}{\boldsymbol{\mathsf{g}}}
\newcommand{\Kfbeta}{\boldsymbol{\mathsf{K}}_\beta}
\newcommand{\acc}{\mathsf{a}}
\newcommand{\e}{\mathrm{e}}
\newcommand{\Cc}{\mathtt{C}}
\newcommand{\ham}{\mathsf{h}}
\renewcommand*{\Re}{\operatorname{Re}}
\renewcommand*{\Im}{\operatorname{Im}}
\renewcommand*{\vec}[1]{\underline{#1}}
\newcommand{\xu}{\underline{x}}
\newcommand{\ku}{\underline{k}}
\newcommand{\JPmap}{\mathfrak{J}\mkern-2.1mu\mathfrak{P}}
\newcommand{\JP}{Jak\v{s}i\'{c}-Pillet}
\newcommand*\diff{\mathop{}\!\mathrm{d}}
\newcommand*\Diff[1]{\mathop{}\!\mathrm{d}^#1}
\newcommand{\frameu}{\mathsf{u}}
\newcommand{\framew}{\mathsf{w}}
\newcommand*\class[1]{[\![#1]\!]}
\newcommand{\Csymp}{C_{\M}}
\newcommand{\mysepline}{\noindent\makebox[\linewidth]{\resizebox{0.3333\linewidth}{1pt}{$\bullet$}}\bigskip}
\renewcommand{\thefootnote}{\fnsymbol{footnote}}
\title{{\Large Disjointness of inertial KMS states and the \\ role of Lorentz symmetry in thermalization}}
\author[1,2]{Albert Georg Passegger\thanks{e-mail: \texttt{albert\_georg.passegger@uni-leipzig.de}}}
\author[1,3]{Rainer Verch\thanks{e-mail: \texttt{rainer.verch@uni-leipzig.de}}}
\affil[1]{Institut f\"{u}r Theoretische Physik, Universit\"{a}t Leipzig, Leipzig, Germany}
\affil[2]{Max Planck Institute for Mathematics in the Sciences, Leipzig, Germany}
\affil[3]{CY Advanced Studies, CY Cergy Paris Universit\'{e}, Neuville-sur-Oise, France}
\date{\vspace{-5mm}}
\begin{document}
	
\pagenumbering{arabic}

\maketitle

\vspace{-5mm}
\renewcommand{\thefootnote}{\arabic{footnote}}

\begin{abstract}
\noindent For any local, translation-covariant quantum field theory on Minkowski spacetime, we prove that two distinct states that are invariant under the inertial time evolutions in different inertial reference frames are disjoint, i.e.\ neither state is a perturbation of the other, if the states are primary, have separating Gelfand-Naimark-Segal (GNS) vectors, and satisfy a timelike cluster property called the mixing property. These conditions are fulfilled by the inertial Kubo-Martin-Schwinger (KMS) states of the free scalar field, thus showing that a state satisfying the KMS condition relative to one inertial frame is far from thermal equilibrium relative to other inertial frames. We review the property of return to equilibrium (RTE) in open quantum systems theory and discuss the implications of disjointness on the asymptotic behavior of detector systems coupled to states of a free massless scalar field. We argue that the coupled system of an Unruh-DeWitt detector moving with constant velocity relative to the field in a KMS state, or an excitation thereof, cannot thermalize under generic conditions. This leads to an illustration of the physical differences between heat baths in inertial systems and the alleged ``heat bath'' of the Unruh effect. This paper also sketches the construction and RTE property of the quantum dynamical system of an Unruh-DeWitt detector coupled to a massless scalar field in a KMS state relative to the inertial rest frame of the detector.
\bigskip

\noindent {\footnotesize 2020 \textit{Mathematics Subject Classification}: \href{https://zbmath.org/classification/?q=46L60}{46L60}, \href{https://zbmath.org/classification/?q=81T05}{81T05}, \href{https://zbmath.org/classification/?q=82B10}{82B10}, \href{https://zbmath.org/classification/?q=82C10}{82C10}}
\end{abstract}

\newpage

\section{Introduction}
\label{sec:intro}

Thermal equilibrium states in possibly infinitely extended quantum systems are characterized by the Kubo-Martin-Schwinger (KMS) condition \cite{Kubo1957,Martin-Schwinger1959,HHW1967,Bratteli-Robinson2}. It can be obtained from the thermodynamic limit of Gibbs ensembles and is expressed in the algebraic framework of quantum statistical mechanics as an analyticity condition for state functionals on the C$^\ast$-algebra of observables of a quantum system. States satisfying the KMS condition, called KMS states, are characterized by properties that render them as faithful representatives of global thermal equilibrium, such as time invariance, stability under perturbations \cite{Haag-Kastler-Trych-Pohlmeyer}, and a thermodynamic interpretation \cite{Kossakowski1977,Pusz-Woronowicz}. From a mathematical perspective, the KMS condition has a close connection to the theory of faithful normal states on von Neumann algebras, which is the topic of Tomita-Takesaki modular theory \cite{Takesaki1970,Bratteli-Robinson1,Borchers-Tomita}. A wide range of applications may be found in the monographs \cite{Bratteli-Robinson2,Haag1996} and the references cited therein.\medskip

KMS states have also been studied in the context of (relativistic) quantum field theory. For linear quantum fields on static spacetimes, KMS states and their mixtures satisfy the microlocal spectrum condition (Hadamard condition) \cite{Radzikowski1996,Sahlmann-Verch}, a regularity criterion on small scales that distinguishes physical states and is, under certain conditions, equivalent to the implementation of quantum energy inequalities for such states \cite{Fewster-Verch2003}. 

Mimicking the intuitive measurement process in an experimental setup, thermal phenomena in quantum field theory such as the Unruh \cite{Unruh1976,Fulling1973,Davies1975,Sewell1982,Crispino-Higuchi-Matsas} and Hawking effect \cite{Hawking1974,Hawking1975,Wald1975,Fredenhagen-Haag1990,Bachelot} (see also \cite{Wald-book,Hollands-Wald-2015}), which employ the KMS condition in their description, may be investigated by coupling a ``detector'' system to the observables of the field. A simple detector model that is commonly considered in this context is the Unruh-DeWitt detector \cite{Unruh1976,DeWitt,Unruh-Wald1984,Birrell-Davies}, a point-like (or sufficiently spatially localized) non-relativistic quantum system with finitely many (commonly two) energy levels, whose coupling to the field is realized, for example, by a monopole interaction. The process of measuring observables of a quantum field using ``probes'' can be formulated in a model-independent, local and covariant way \cite{FV-measurement,FV-measurement-enc}. One expects that Unruh-DeWitt type detector models can be regarded as a special case of this general theory of quantum field measurement processes; first steps in that direction were taken in \cite[Sec.\ 5.3]{FV-measurement}, and recent developments and discussions can be found in \cite{PoloGomez-Garay-MM2022,Perche-PoloGomez-Torres-MM2024,Torres2024,Papageorgiou-Fraser2024}.

An interesting question is how an Unruh-DeWitt detector system responds to the state of a quantum field depending on the motion of the detector along some timelike worldline. There are two physically different ways of observing what may be attributed to the system as thermal behavior: recording thermal transition rates or probabilities between the energy levels of the detector in time-dependent perturbation theory (``particle detector'' excitation), and late time convergence of an initial state into a thermal equilibrium state under the time evolution (``thermometer'' equilibration), a process referred to as \textit{thermalization}.\medskip

The Unruh effect is a prominent instance in which such setups have been considered. The operational statement of the Unruh effect is that a linearly and uniformly accelerating detector with proper acceleration $\acc>0$ in Minkowski spacetime coupled to the Minkowski vacuum state of a quantum field theory detects the Unruh temperature $\acc/2\pi$. (A review of multiple facets of what is called ``Unruh effect'' in the literature can be found in \cite{Earman2011}.) For an Unruh–DeWitt detector linearly coupled to a free massless scalar field one obtains, in first-order perturbation theory, a transition rate (or probability) corresponding to a Planck (black body) spectrum at the Unruh temperature \cite{Unruh1976,Birrell-Davies,Crispino-Higuchi-Matsas,Unruh-Wald1984,Takagi1986}. In this approach, thermality of a two-level Unruh-DeWitt detector can be expressed as the detailed balancing of the quotient of excitation and de-excitation probabilities (see, e.g., \cite{Takagi1986,Waiting-for-Unruh,Perche-thermalization}, and the recent \cite{Lima-Patterson-Tjoa-Mann2023} for a discussion of peculiarities in higher dimensional detector systems).

The asymptotic state behavior is a fundamentally different property of the system. For instance, \cite{Moustos-Anastopoulos2017,Moustos2018} showed that the transition rate of a uniformly accelerated detector coupled to the Minkowski vacuum can be non-Planckian at early times, depending on the type of coupling, but the Unruh effect is still observed in the sense that the asymptotic (reduced density matrix) state of the detector at late times is a Gibbs equilibrium state at the Unruh temperature. Moreover, the shape of the response function of the uniformly accelerated detector coupled to the Minkowski vacuum can depend on the spacetime dimension: For the scalar field in odd spacetime dimensions the distribution has a Fermi-Dirac form, as opposed to the Planck-type (Bose-Einstein) distribution that appears in even dimensions \cite{Takagi1986,Unruh1986}. (By contrast, the response function of the detector coupled to the Dirac field is Planckian in all spacetime dimensions \cite{Louko-Toussaint2016}.) Despite this ``inversion of statistics'', the detector still thermalizes at late times \cite{DeB-M,inversion-of-statistics-2021,Moustos2022}. 

The behavior of Unruh-DeWitt detectors and more general ``particle detectors'' is a quite active research area. A selection of recent investigations is given by \cite{Moustos-Anastopoulos2017,Moustos2018} (open quantum systems analysis \cite{Breuer-Petruccione} of the reduced density matrix of uniformly accelerated detectors), \cite{Biermann-et-al2020,Good-JA-Moustos-Temirkhan2020,Bunney-Parry-Perche-Louko2024} (Unruh-like effects for Unruh-DeWitt detectors along circular and other stationary accelerated trajectories in Minkowski spacetime), and \cite{JuarezAubry2019,Perche-thermalization} (detectors following timelike worldlines while being coupled to quantum fields in curved spacetimes in KMS states relative to time evolution along the worldline); see the references therein for further literature.\medskip 

A strong form of thermalization for the full detector-field system (not just the detector alone) that will be the focus in our work has been considered by De Bi\`{e}vre \& Merkli \cite{DeB-M} in the case of the Unruh effect: The open quantum system that is obtained from linearly coupling a two-level detector (using a monopole) to the free scalar field satisfies the property of ``return to equilibrium'' (RTE), an ergodicity property of quantum dynamical systems (see, e.g., \cite{JP2,JP-thermal-relaxation,BFS2000,Pillet-Attal2006}, and Section \ref{sec:th-rte} for more references). The coupled detector-field system converges to the KMS state corresponding to the Unruh temperature, which entails the approach of the reduced state of the detector to the associated Gibbs state. The result does not only hold if the initial state is given by the detector in the ground state and the field in the Minkowski vacuum state, but also if the initial detector-field state is an excitation of that. Therefore, the asymptotic state behavior of the coupled system is stable in a quite strong sense, as thermal equilibrium is reached from a whole folium of initial states.\medskip

Even though the Unruh effect appears to hint towards a thermal interpretation for the Minkowski vacuum, one might ask if the Unruh temperature is a quantity that actually pertains to a property of the quantum field as a heat bath of ``Rindler particles'' in the vacuum state from the perspective of a uniformly accelerated observer (as claimed in, e.g., \cite{Crispino-Higuchi-Matsas}). The comparison with characteristics of conventional heat baths in inertial reference frames (Lorentz frames) can be taken as a litmus test in that regard. A fundamental result against the heat bath picture of the Unruh effect is obtained from the interpretation of KMS states. The Unruh effect is related to the property that the Minkowski vacuum state is a KMS state with KMS parameter $2\pi$ with respect to the dynamics given by the Lorentz boost isometries, which leave the right wedge of Minkowski spacetime invariant \cite{Fulling1973,Unruh1976,Bisognano-Wichmann,Bisognano-Wichmann2,Sewell-BW1980,Sewell1982}. However, while the interpretation of inertial KMS states representing thermal equilibrium at positive temperature in the sense of the zeroth and the second law is well-established \cite{Kossakowski1977,Pusz-Woronowicz} (see also \cite{Haag1996,Bratteli-Robinson2}), the inverse of the KMS parameter does not represent a temperature in non-inertial situations \cite{Buchholz-Solveen,Solveen2012}. It turns out that under the influence of external forces, such as the accelerating force in the Unruh effect, Unruh-DeWitt detectors cannot be used as reliable ``thermometers'' for the quantum field in the sense of the zeroth law of thermodynamics (i.e.\ measuring devices attaining their temperature by purely thermal contact) \cite{Buchholz-Solveen,Buchholz-Verch2015,Buchholz-Verch2016}. Still, it is remarkable that detector models can exhibit thermality under uniform acceleration (cf.\ the remarks in \cite[Sec.\ 1]{DeB-M}). Hence Unruh-DeWitt detectors are useful instruments to study conceptual questions on idealized detector systems coupled to quantum field theories.\medskip 

Complementary to the above results, this paper treats the behavior of Unruh-DeWitt detectors under the action of Lorentz boosts to reveal further differences between the Unruh effect and thermalization in inertial heat baths. It has been noted in \cite{Candelas-Deutsch-Sciama} that the thermality of a uniformly accelerated detector is preserved under change of the Lorentz frame: 

\begin{quotation}
	\noindent ``For example, the temperature depends on the acceleration of the observer but is independent of his velocity. Thus two neighbouring observers with the same acceleration, but with velocities differing by nearly the velocity of light would observe the \textit{same} temperature–-there would be no Doppler transformation from one temperature to the other.'' \cite[p.\ 334]{Candelas-Deutsch-Sciama}
\end{quotation}

\noindent In other words, while the detector follows the orbit of a Lorentz boost describing its uniformly accelerated motion and is therefore stationary relative to the Killing flow (in turn, stationary relative to the Minkowski vacuum to which it couples), changing the Lorentz frame by boosting to a different velocity relative to an inertial observer does not affect the asymptotic temperature registered by the detector. 

The goal of this paper is to contrast that statement with the situation of an inertially moving detector coupled to an inertial heat bath. We do this at the level of asymptotic state properties of coupled detector-field systems (in the spirit of \cite{DeB-M}). This further reveals the innate differences between proper, inertial heat baths and the heuristic heat bath picture of the Unruh effect.\medskip

For our discussion we prove the following algebraic property of states: Let $\sigma$ and $\omega$ be two distinct states of a local quantum field theory on Minkowski spacetime that are invariant under the time translations in different inertial frames (time evolution along two timelike vectors that are related by a Lorentz boost). Then the Gelfand-Naimark-Segal (GNS) representations associated to the two states are disjoint, i.e.\ $\sigma$ and $\omega$ are not in the same folium of states and thus are not a perturbation of one another, if they have separating GNS vectors and satisfy certain physically reasonable properties: primarity, which formalizes a ``pure phase'', and an asymptotic timelike cluster property called the mixing property. The algebraic conditions of primarity and mixing are characterized by spectral properties of the generator of the time evolution on the GNS Hilbert space, which are shown to be fulfilled by the quasi-free inertial KMS states of the free scalar quantum field on four-dimensional Minkowski spacetime. Hence two such KMS states with respect to different inertial frames are disjoint.

The result can be expressed as Lorentz symmetry breaking in primary, mixing states that are not Lorentz boost invariant. We formulate this explicitly by showing that Lorentz boosts are not implementable by unitary operators in the GNS representation of such states. In the case of inertial KMS states, for which the breakdown of Lorentz symmetry has been discussed before by Ojima \cite{Ojima1986}, one can interpret this as the formalization of the observation that there is no Lorentz transformation law relating temperatures relative to different inertial frames \cite{Landsberg-Matsas1996,Landsberg-Matsas2004,Sewell2008,Sewell-rep2009} (see also \cite{Sewell2010}, where this is shown with respect to macroscopic properties).\footnote{The question how temperature transforms between different inertial frames is a fundamental, highly debated topic in relativistic thermodynamics, and there are several conflicting results based on various approaches and arguments. A brief overview that includes a list of references to some relevant works can be found in \cite{Sewell2008} and the review \cite{Farias-Pinto-Moya}.}\medskip

The disjointness of inertial KMS states strengthens the result by Sewell \cite{Sewell2008,Sewell-rep2009} that a state of a quantum field theory which satisfies the KMS condition with respect to time translations in one inertial frame cannot be a KMS state with respect to time translations in a different, Lorentz boosted frame. Sewell concluded that a detector moving relative to an inertial heat bath with constant non-zero velocity does not approach a KMS state, since the state of the heat bath does not satisfy the KMS condition relative to the rest frame of the detector. For a heat bath given by a massless scalar field, this is supported by the non-thermal transition rate resulting from the Doppler shifted quanta that are registered by the detector \cite{Costa-Matsas-background1995,Costa-Matsas1995}; changing to an inertial frame that is moving relative to the rest frame of the heat bath thus has an influence on the excitation spectrum of the detector.\medskip

In our work we present arguments corroborating these results for the full, coupled detector-field system in the Hamiltonian approach to open quantum systems theory \cite{Attal2006}. As the initial state of the heat bath is far from equilibrium with respect to the notion of thermal equilibrium determined by the rest frame of the inertially moving detector, it is suggested that the coupled detector-field system does not approach a KMS state for generic couplings. We motivate this conclusion by combining the disjointness of inertial KMS states with the RTE property of an Unruh-DeWitt detector coupled to an inertial heat bath (massless scalar field) in its rest frame. By contrast, the Minkowski vacuum state is Poincar\'{e}-invariant, which implies that an accelerated detector will return to equilibrium in the sense of \cite{DeB-M} in every boosted Lorentz frame. Our discussion thereby places the above statement from \cite{Candelas-Deutsch-Sciama} in a broader context and illustrates that the uniformly accelerated detector in the Unruh effect does not behave as if it was coupled to a heat bath.\medskip

This work is structured as follows. In Section \ref{sec:prelim} we introduce necessary notions from algebraic quantum field theory and quantum statistical mechanics, in particular primarity and the mixing property, with complementary material being collected in Appendix \ref{appendix:qds-add}. Section \ref{sec:disjointness} contains the disjointness results for invariant states relative to different inertial reference frames, first for general primary, mixing states (Theorem \ref{thm:primary-disjoint}), then applied to inertial KMS states of the free scalar field (Theorem \ref{thm:KMS-disjoint}). The formulation in terms of the breakdown of Lorentz symmetry in these states is presented in Section \ref{sec:breaking}. We discuss the relation to the literature and compare our disjointness result for KMS states with a similar result due to \cite{Herman-Takesaki1970} in Section \ref{sec:discussion-result}. In Section \ref{sec:detectors} the suggested implications of disjointness on the asymptotic state properties of Unruh-DeWitt detectors are discussed. After an introduction to the algebraic formalization of thermalization, we discuss the RTE property for a detector that rests relative to a heat bath of a massless scalar field (Section \ref{sec:inertial-rte}). The necessary field theoretic formalism and a useful implementation of the GNS representation for KMS states of the massless scalar field (the \JP\ glued representation) are compiled in Appendix \ref{appendix:scalar-field} and Appendix \ref{appendix:JP-glued-rep}, which shows the connection to literature on RTE for spin-boson models. If the detector moves inertially relative to such a heat bath, the disjointness of inertial KMS states indicates that the coupled detector-field system will not thermalize (Section \ref{sec:moving-inertial}). In Section \ref{sec:unruh} we compare the results with the Unruh effect. An outlook and concluding remarks pointing out open questions are given in Section \ref{sec:conclusions}.

\paragraph{Conventions} We use the ``mostly minus'' metric signature $(+,-,\ldots,-)$, and physical units where the vacuum speed of light, reduced Planck constant, and Boltzmann constant are set to $1$. By default or assumption, all Hilbert spaces are complex and separable, operator algebras are unital (with unit element $\mathds{1}$), and morphisms between them are unit-preserving.

\section{Setting and preliminaries}
\label{sec:prelim}

In this section we introduce the setting and concepts needed in this work.

\subsection{Algebraic quantum field theory}
\label{sec:prelim-aqft}

We consider a local quantum field theory on Minkowski spacetime, described in the model-independent algebraic framework of Haag \& Kastler \cite{Haag-Kastler,Haag1996,Borchers1996} (see also \cite{Fredenhagen-Intro-AQFT} for a concise introduction). The quantum field theory is given by a so-called ``net'' of C$^\ast$-algebras: To every open, bounded, causally convex subset $O\subset\M$ of $(1+d)$-dimensional Minkowski spacetime\footnote{$\M$ will be used to denote Minkowski spacetime as a manifold, but also the underlying set (which can be identified with $\R^{1+d}$) without notational distinction.} $\M=\R^{1,d}$ (for spatial dimension $d\geq 1$), referred to as \textit{region}, one associates a unital C$^\ast$-algebra $\Acal(O)$, called a \textit{local observable algebra}, subject to the following physically motivated assumptions.
\begin{itemize}[leftmargin=2em]
	\item[(A)] \textbf{Isotony}: If $O_1 \subset O_2$ then $\Acal(O_1)\subset\Acal(O_2)$ (i.e.\ there is a $^\ast$-monomorphism $\Acal(O_1)\hookrightarrow\Acal(O_2)$, and $\Acal(O_1)$ is identified with the image under this map).
\end{itemize}
The C$^\ast$-algebra $\Acal = \overline{\bigcup_O \Acal(O)}$ of the theory, called the \textit{quasi-local algebra}, is the norm-closed direct limit of the family $\{\Acal(O)\}_{O\subset\M}$, where $O$ runs over the direct system of all regions in $\M$ (with set inclusion relation). The C$^\ast$-algebras $\Acal(O)$ can be viewed as $^\ast$-subalgebras of $\Acal$.
\begin{itemize}[leftmargin=2em]
	\item[(B)] \textbf{Locality} (Einstein causality): If $O_1 \subset O^\perp$, where $O^\perp$ is the causal complement of $O$, then $[\Acal(O_1),\Acal(O)]=\{0\}$. This means that operators associated to spacelike separated regions commute.
	\item[(C)] \textbf{Translation covariance}: The additive group of spacetime translations $\Transl := (\R^{1+d},+)$ is represented on $\Acal$ as a group of $^\ast$-automorphisms $\{\updelta_x\}_{x\in\R^{1+d}}$ such that $\updelta_x(\Acal(O))=\Acal(O+x)$ for all regions $O\subset\M$ and $x\in\R^{1+d}$, where $O+x:=\{y+x : y\in O\}$.
\end{itemize}
Covariance under spacetime translations is sufficient for our main statements on the disjointness of states, where we will be interested in $^\ast$-automorphisms $\{\updelta_{t\frameu}\}_{t\in\R}$ representing the time evolution along the time direction of an inertial frame prescribed by a future-directed timelike unit vector $\frameu$. Assumption (C) encompasses theories that are covariant under the full symmetry group of Minkowski spacetime, the proper orthochronous Poincar\'{e} group $\Poincare$, of which $\Transl$ forms an abelian subgroup. This stronger form of covariance will be needed in the discussion of Lorentz symmetry breaking and the application to the thermalization of detector systems.
\begin{itemize}[leftmargin=2em]
	\item[(C')] \textbf{Poincar\'{e} covariance}: The proper orthochronous Poincar\'{e} group $\Poincare$ of $\M$ is represented on $\Acal$ as a group of $^\ast$-automorphisms $\{\updelta_{(x,\Uplambda)}\}_{(x,\Uplambda)\in\Poincare}$ such that $\updelta_{(x,\Uplambda)}(\Acal(O))=\Acal(\Uplambda O+x)$ for all regions $O\subset\M$ and $(x,\Uplambda)\in\Poincare$, where $\Uplambda O+x:=\{\Uplambda y+x : y\in O\}$.
\end{itemize}
Here we identified the group elements of the semidirect product $\Poincare=\Transl\rtimes\Lorentz$ with pairs $(x,\Uplambda)$ for vectors $x\in\R^{1+d}$ of spacetime translations and proper orthochronous Lorentz transformations $\Uplambda\in\Lorentz$ on $\M$, with the usual group law $(x,\Uplambda)\cdot(x',\Uplambda')=(x+\Uplambda x',\Uplambda\Uplambda')$. (An introduction to these groups can be found in, e.g., \cite{Sexl-Urbantke}.) The $^\ast$-automorphisms of spacetime translations from assumption (C) are obtained from (C') via $\updelta_x = \updelta_{(x,\mathds{1})}$ for $x\in\R^{1+d}$. The representation of Lorentz transformations on $\Acal$ will be denoted by $\updelta^{\Uplambda} := \updelta_{(0,\Uplambda)}$. From the group homomorphism property of $(x,\Uplambda)\mapsto\updelta_{(x,\Uplambda)}$ and the group law of $\Poincare$ we get $\updelta^{\Uplambda^{-1}} = (\updelta^{\Uplambda})^{-1}$ and the relation
\begin{gather}
	\label{eq:boost-translation-transf}
	\updelta^{\Uplambda} \circ \updelta_x = \updelta_{\Uplambda x} \circ \updelta^{\Uplambda} \quad \text{for all}\ (x,\Uplambda)\in\Poincare \, .
\end{gather}

Other typical assumptions on local quantum field theories, like the spectrum condition, nuclearity, or a dynamical law (time slice axiom) \cite{Haag1996,Fewster-Verch-AQFT2015}, will not be needed in this work. The axioms (A)--(C') fit into the general underlying structure of quantum field theories on globally hyperbolic spacetimes \cite{BFV2003,Fewster-Verch-AQFT2015} (with covariance being a consequence of functoriality of locally covariant quantum field theory) and can be shown to be fulfilled, for example, for the free scalar (Klein-Gordon) field \cite{Borchers1996,Fewster-Rejzner-AQFT,Dimock1980} and for free Dirac (spin $\frac{1}{2}$) and Proca (spin $1$) fields \cite{Borchers1996,Dimock1982,Sanders2010,Dappiaggi2011}.

\subsection{Quantum dynamical systems and invariant states}
\label{sec:prelim-qds}

For the reader's convenience and to fix notation we summarize some terminology and results from quantum statistical mechanics. Basic familiarity with operator algebras will be assumed. Additional material is put into Appendix \ref{appendix:qds-add}. For a thorough exposition, including proofs of all statements, we refer to \cite{Bratteli-Robinson1,Bratteli-Robinson2,Emch,Haag1996,Pillet-Attal2006}.\medskip

Throughout this section, $\Acal$ is a complex C$^\ast$-algebra, $\Bcal(\Hcal)$ is the C$^\ast$-algebra of bounded operators on a Hilbert space $\Hcal$, and von Neumann algebras, i.e.\ $^\ast$-subalgebras of some $\Bcal(\Hcal)$ closed in weak operator topology, will be denoted $\Mcal$. A detailed discussion of operator topologies can be found in \cite[Sec.\ 2.1.e]{Emch}, \cite[Sec.\ 2.4.1]{Bratteli-Robinson1}, and \cite[Secs.\ 1.7, 1.8, 1.15]{Sakai1998}. Here we just recall that the \textit{weak operator topology} on $\Bcal(\Hcal)$ is the weakest (coarsest) topology such that $\langle v, (\slot)w\rangle: \Bcal(\Hcal)\to\C$ is continuous for every $v,w\in\Hcal$, so a net $(A_i)_{i\in I}$ in $\Bcal(\Hcal)$ converges to $A\in\Bcal(\Hcal)$ in weak operator topology if and only if $\lim_{i\in I} \langle v,A_i w\rangle = \langle v,Aw\rangle$ for all $v,w\in\Hcal$ in the standard topology of $\C$. The \textit{ultraweak} (also called weak-$^\ast$ or $\sigma$-weak) \textit{topology} on $\Bcal(\Hcal)$ is the weakest topology such that the functionals $\tr(\rho(\slot)):\Bcal(\Hcal)\to\C$ are continuous for every trace class operator $\rho$ on $\Hcal$. The ultraweak topology is stronger (finer) than the weak operator topology, the topologies agree on norm-bounded subsets, and the closures of a $^\ast$-subalgebra of $\Bcal(\Hcal)$ in weak operator and ultraweak topology coincide.\medskip

Algebraic states are positive, normalized, linear functionals on an operator algebra. Any state can be represented as a vector in a Hilbert space.

\begin{proposition}[GNS representation {\cite[Thm.\ 2.3.16]{Bratteli-Robinson1}}]
	For every state $\omega:\Acal\to\C$ there is a $^\ast$-representation $\pi_\omega : \Acal \to \Bcal(\Hcal_\omega)$ on a Hilbert space $\Hcal_\omega$ and a unit vector $\Omega_\omega \in \Hcal_\omega$ such that $\omega=\langle\Omega_\omega , \pi_\omega(\slot)\Omega_\omega \rangle$ and $\pi_\omega(\Acal)\Omega_\omega$ is norm dense in $\Hcal_\omega$. The triple $(\pi_\omega,\Hcal_\omega,\Omega_\omega)$, or $\pi_\omega$ for short, is called the \emph{GNS (Gelfand-Naimark-Segal) representation} of $\omega$. It is unique up to unitary equivalence.
\end{proposition}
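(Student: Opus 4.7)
The plan is the standard GNS construction: turn the state $\omega$ into an inner product on $\Acal$, divide out its null space, complete to a Hilbert space, and represent $\Acal$ by left multiplication. Concretely, I would first introduce the sesquilinear form $\langle A,B\rangle_\omega := \omega(A^\ast B)$ on $\Acal$. Positivity of $\omega$ gives $\langle A,A\rangle_\omega\geq 0$, and a standard argument (expanding $\omega((A+\lambda B)^\ast(A+\lambda B))\geq 0$ and optimizing in $\lambda\in\C$) yields the Cauchy--Schwarz inequality $|\omega(A^\ast B)|^2\leq \omega(A^\ast A)\,\omega(B^\ast B)$.

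Next, I would set $\Ncal_\omega := \{A\in\Acal : \omega(A^\ast A)=0\}$. Cauchy--Schwarz shows $\Ncal_\omega$ is a closed linear subspace; to see it is a \emph{left} ideal, I would estimate $\omega((BA)^\ast(BA)) = \omega(A^\ast B^\ast B A)\leq \|B^\ast B\|\,\omega(A^\ast A)=0$ for $A\in\Ncal_\omega$, using that $\|B^\ast B\|\mathds{1} - B^\ast B\geq 0$ in $\Acal$ combined with positivity of $\omega$. The quotient $\Acal/\Ncal_\omega$ thus inherits a well-defined inner product $\langle [A],[B]\rangle := \omega(A^\ast B)$, and I denote by $\Hcal_\omega$ its Hilbert space completion.

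I would then define the representation by left multiplication, $\pi_\omega(A)[B] := [AB]$, which is well-defined precisely because $\Ncal_\omega$ is a left ideal. The same estimate as above, $\|\pi_\omega(A)[B]\|^2 = \omega(B^\ast A^\ast AB)\leq \|A\|^2\,\omega(B^\ast B) = \|A\|^2\,\|[B]\|^2$, shows $\pi_\omega(A)$ is bounded on the dense subspace $\Acal/\Ncal_\omega$ with $\|\pi_\omega(A)\|\leq \|A\|$, so it extends uniquely to $\Bcal(\Hcal_\omega)$. Linearity, multiplicativity, and the $^\ast$-property $\langle [B],\pi_\omega(A)[C]\rangle = \omega(B^\ast AC) = \langle \pi_\omega(A^\ast)[B],[C]\rangle$ are immediate from the definitions. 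Setting $\Omega_\omega := [\mathds{1}]$ (available because $\Acal$ is unital), one has $\|\Omega_\omega\|^2 = \omega(\mathds{1})=1$, cyclicity $\overline{\pi_\omega(\Acal)\Omega_\omega} = \overline{\Acal/\Ncal_\omega} = \Hcal_\omega$ by construction, and $\langle\Omega_\omega,\pi_\omega(A)\Omega_\omega\rangle = \omega(\mathds{1}^\ast A\mathds{1}) = \omega(A)$.

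For uniqueness up to unitary equivalence, given another triple $(\pi',\Hcal',\Omega')$ with the stated properties, I would define $U:\pi_\omega(\Acal)\Omega_\omega\to\pi'(\Acal)\Omega'$ by $U\pi_\omega(A)\Omega_\omega := \pi'(A)\Omega'$. This is well-defined and isometric on the dense cyclic subspaces because $\|\pi'(A)\Omega'\|^2 = \omega(A^\ast A) = \|\pi_\omega(A)\Omega_\omega\|^2$, and by cyclicity of both vectors $U$ extends to a unitary from $\Hcal_\omega$ onto $\Hcal'$ satisfying $U\pi_\omega(A)U^\ast = \pi'(A)$ and $U\Omega_\omega = \Omega'$. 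The only real subtlety is the boundedness step for $\pi_\omega(A)$, which genuinely uses the C$^\ast$-algebraic fact $\|A\|^2\mathds{1}\geq A^\ast A$; everything else is formal linear algebra and completion.
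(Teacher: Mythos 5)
Your proof is correct and is precisely the standard GNS construction; the paper itself gives no proof but cites \cite[Thm.\ 2.3.16]{Bratteli-Robinson1}, and your argument (quotient by the left ideal $\Ncal_\omega$, completion, left multiplication, boundedness via $A^\ast A\leq\|A\|^2\mathds{1}$, cyclic vector $[\mathds{1}]$, and the isometric intertwiner for uniqueness) is exactly the argument given there. No gaps.
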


\begin{definition}[Commutant, cyclic, separating, normal, faithful]
	The \textit{commutant} of a $^\ast$-subalgebra $\Rcal\subseteq\Bcal(\Hcal)$ is $\Rcal':=\{X\in\Bcal(\Hcal) \, : \, XY=YX \; \text{for all}\ Y\in\Rcal\}\subseteq\Bcal(\Hcal)$. If $\Mcal\subseteq\Bcal(\Hcal)$ is a von Neumann algebra, a vector $\Omega\in\Hcal$ is \textit{cyclic} for $\Mcal$ if $\Mcal\Omega$ is dense in $\Hcal$, and \textit{separating} if $X\in\Mcal$ and $X\Omega=0$ implies $X=0$. A state $\varphi$ on $\Mcal$ is \textit{normal} if it is given by a density operator, i.e.\ $\varphi=\tr(\rho(\slot))$ for a self-adjoint, positive trace class operator $\rho$ on $\Hcal$ with unit trace. A normal state $\varphi$ is \textit{faithful} if $\varphi (X^\ast X)=0$ implies $X=0$. 
\end{definition}
A vector $\Omega\in\Hcal$ is separating for $\Mcal$ if and only if it is cyclic for $\Mcal'$. If $\varphi=\langle\Psi_\varphi , (\slot)\Psi_\varphi \rangle$ is a vector state for a unit vector $\Psi_\varphi \in \Hcal$, then $\varphi$ is faithful if and only if $\Psi_\varphi$ is separating for $\Mcal$. 

\begin{definition}[Induced von Neumann algebra]
	Let $\omega$ be a state on $\Acal$ with GNS representation $(\pi_\omega,\Hcal_\omega,\Omega_\omega)$. The \textit{von Neumann algebra induced by} $\omega$ is $\Mcal_\omega := \pi_\omega(\Acal)''\subseteq\Bcal(\Hcal_\omega)$, where $(\slot)'' = ((\slot)')'$ is the bicommutant.
\end{definition}
The bicommutant of a $^\ast$-subalgebra of some $\Bcal(\Hcal)$ coincides with the closure in weak operator topology by von Neumann's bicommutant theorem (see \cite[Thm.\ 2.4.11]{Bratteli-Robinson1}).

\begin{definition}[Folium, quasi-equivalence, disjointness]
	\label{def:folium}
	Let $\omega$ be a state on $\Acal$ with GNS representation $\pi_\omega$. The \textit{folium} $\Fol(\omega)$ of $\omega$ is the collection of density operator states $\tr(\rho\pi_\omega (\slot))$, called states \textit{normal to} $\omega$. Two states $\sigma,\omega$ on $\Acal$ are \textit{quasi-equivalent} if $\Fol(\sigma)=\Fol(\omega)$, and \textit{disjoint} if $\Fol(\sigma)\cap\Fol(\omega)=\emptyset$.
\end{definition} 
The folium $\Fol(\omega)$ of $\omega$ can be characterized as the norm closure of the convex hull of states $\omega(B^\ast (\slot) B)/\omega(B^\ast B)$ for $B\in\Acal$ such that $\omega(B^\ast B)\neq 0$, which are interpreted as ``perturbations'' or ``excitations'' of $\omega$. A state $\sigma$ is in the folium $\Fol(\omega)$ if and only if there is a unique normal state $\tilde{\sigma}$ on $\Mcal_\omega = \pi_\omega (\Acal)''$ such that $\sigma=\tilde{\sigma} \circ \pi_\omega$. This establishes an isomorphism between $\Fol(\omega)$ and normal states on $\Mcal_\omega$ (see the discussion preceding \cite[Def.\ 2.20]{Pillet-Attal2006}).

\begin{definition}[Quantum and W$^\ast$-dynamical system, invariant state]
	\label{def:dynamical-system}
	A \textit{quantum dynamical system} $(\Acal,\alpha)$ is a pair consisting of a C$^\ast$-algebra $\Acal$ and a one-parameter group $\alpha=\{\alpha_t\}_{t\in\R}$ of $^\ast$-automorphisms on $\Acal$, called the \textit{dynamics} on $\Acal$. If $\Mcal\subseteq\Bcal(\Hcal)$ is a von Neumann algebra equipped with dynamics $\gamma=\{\gamma_t\}_{t\in\R}$ that is ultraweakly continuous, i.e.\ $t \mapsto \tr(\rho \gamma_t (X))$ is continuous for all positive trace class operators $\rho$ on $\Hcal$ and $X\in\Mcal$, the quantum dynamical system $(\Mcal,\gamma)$ is called a \textit{W$^\ast$-dynamical system} with W$^\ast$-dynamics $\gamma$. A state $\omega$ on $(\Acal,\alpha)$ is called \textit{invariant under $\alpha$} (or \textit{$\alpha$-invariant}) if $\omega\circ\alpha_t = \omega$ for all $t\in\R$.
\end{definition}
As done in the above definition, if $\omega$ is a state on a C$^\ast$-algebra $\mathcal{A}$, and $(\mathcal{A},\alpha)$ is a quantum dynamical system, we will say that $\omega$ is a state on $(\mathcal{A},\alpha)$. The one-parameter group $\alpha$ is usually interpreted as ``time evolution'' in the Heisenberg picture, acting on the system with ``observable algebra'' $\Acal$. In general we do not demand any continuity for $t\mapsto\alpha_t$. A particular family of invariant states on a quantum dynamical system $(\Acal,\alpha)$ is given by \textit{$(\alpha,\beta)$-KMS (Kubo-Martin-Schwinger) states} \cite{Kubo1957,Martin-Schwinger1959,HHW1967,Bratteli-Robinson2}, which are used to describe quantum systems in thermal equilibrium at temperature $\beta^{-1}$.

\begin{definition}[KMS state]
	\label{def:kms-state}
	A state $\omega$ on a quantum dynamical system $(\Acal,\alpha)$ is called a \textit{$(\alpha, \beta)$-KMS state} (or $\beta$-KMS state with respect to $\alpha$, or $\beta$-KMS state on $(\Acal,\alpha)$) for $\beta\in\R\setminus\{0\}$, the \textit{KMS parameter}, if for every $A, B \in \Acal$ there exists a function $z \mapsto F_{A,B}^\beta (z)$ that is bounded and continuous on $\{z \in \C \, : \, 0 \leq \sgn(\beta) \Im (z) \leq |\beta| \}$ and analytic on the interior of this strip, such that $F_{A,B}^\beta (t) = \omega(A \alpha_t (B))$ and $F_{A,B}^\beta (t+i\beta) = \omega(\alpha_t (B) A)$ for all $t\in\R$.
\end{definition}

\begin{definition}[Unitary implementation]
	\label{def:unitary-imp}
	Let $\omega$ be a state on a C$^\ast$-algebra $\Acal$ with GNS representation $(\pi_\omega,\Hcal_\omega,\Omega_\omega)$. A $^\ast$-automorphism $\kappa$ on $\Acal$ is said to be \textit{unitarily implemented in the GNS representation of $\omega$} if there exists a unitary operator $U$ on $\Hcal_\omega$ such that $\pi_\omega (\kappa(A)) = U\pi_\omega (A)U^{-1}$ for all $A\in\Acal$. A group that is represented on $\Acal$ by a group of $^\ast$-automorphisms is \textit{unitarily implemented in $\omega$} if every representing $^\ast$-automorphism is unitarily implemented in the GNS representation of $\omega$ (resulting in a one-parameter group of unitary operators).
\end{definition}

The following proposition is compiled from \cite[Cor.\ 2.3.17]{Bratteli-Robinson1}, \cite[Lemma 4.26]{Pillet-Attal2006}, \cite[Prop.\ 5.1.17]{Khavkine-Moretti}, and \cite[Thm.\ 13.3]{Takesaki1970} (see also \cite[Cor.\ 5.3.4]{Bratteli-Robinson2} for the special case of strongly continuous dynamics).
\begin{proposition}[Induced W$^\ast$-dynamical system, Liouvillean]
	\label{prop:weak-Liouvillean}
	Let $\omega$ be an $\alpha$-invariant state on a quantum dynamical system $(\Acal,\alpha)$ with GNS representation $(\pi_\omega , \Hcal_\omega , \Omega_\omega)$. Then there is a one-parameter group $\{U_\omega (t)\}_{t\in\R}$ of unitary operators on $\Hcal_\omega$ that unitarily implements $\alpha$ in the GNS representation of $\omega$, uniquely determined by 
	\begin{gather}
		\label{eq:unitary-dynamics}
		\pi_\omega (\alpha_t (A)) = U_\omega (t) \pi_\omega (A) U_\omega (t)^{-1} \, , \quad U_\omega (t)\Omega_\omega = \Omega_\omega \quad \text{for all}\ A\in\Acal \, , \; t\in\R \, ,
	\end{gather}
	which uniquely defines a one-parameter group $\gamma=\{\gamma_t\}_{t\in\R}$ of $^\ast$-automorphisms
	\begin{gather*}
		\gamma_t = U_\omega (t) (\slot) U_\omega(t)^{-1}
	\end{gather*}
	on $\Mcal_\omega = \pi_\omega (\Acal)''$. The GNS vector state $\langle \Omega_\omega , (\slot) \Omega_\omega \rangle$, which represents $\omega$ on $\Mcal_\omega$, is a $\gamma$-invariant state.
	\begin{itemize}[leftmargin=*]
		\item The one-parameter group $\{U_\omega (t)\}_{t\in\R}$ is strongly continuous if and only if
		\begin{gather}
			\label{eq:continuity-condition}
			\R\ni t \mapsto \omega(A^\ast \alpha_t (A))\in\C \; \, \text{is continuous for all}\ A\in\Acal \, .
		\end{gather}
		In this case, $\gamma$ is generated by a unique (possibly unbounded) self-adjoint operator $L$ on $\Hcal_\omega$ by Stone's theorem, i.e.\ $\gamma_t = \e^{iLt} (\slot) \e^{-iLt}$ for all $t\in\R$, such that, by Eq.\ \eqref{eq:unitary-dynamics},
		\begin{gather}
			\label{eq:Liouvillean}
			\gamma_t (\pi_\omega (A)) = \pi_\omega (\alpha_t (A)) = \e^{iLt} \pi_\omega (A) \e^{-iLt} \quad \text{for all}\ A\in\Acal \, , \; t\in\R \, , \qquad  L\Omega_\omega = 0 \, ,
		\end{gather}
		and $\gamma$ is a W$^\ast$-dynamics on $\Mcal_\omega$. The operator $L$ is called the \emph{Liouvillean} of the W$^\ast$-dynamical system $(\Mcal_\omega , \gamma)$ induced by $\omega$.
		\item If $\omega$ is a $\beta$-KMS state on $(\Acal,\alpha)$ for $\beta \in \R\setminus\{0\}$, then $\gamma$ is generated by a Liouvillean as above, and $\langle \Omega_\omega , (\slot) \Omega_\omega \rangle$ is a faithful, normal $\beta$-KMS state on the W$^\ast$-dynamical system $(\Mcal_\omega , \gamma)$. In particular, the GNS vector $\Omega_\omega$ of the KMS state $\omega$ is cyclic and separating for $\Mcal_\omega$.
	\end{itemize}
\end{proposition}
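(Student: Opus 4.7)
The plan is to construct $U_\omega(t)$ directly on the dense subspace $\pi_\omega(\Acal)\Omega_\omega$ by the natural prescription
\[ U_\omega(t)\,\pi_\omega(A)\Omega_\omega := \pi_\omega(\alpha_t(A))\Omega_\omega \, . \]
$\alpha$-invariance of $\omega$ supplies the key isometry estimate $\|\pi_\omega(\alpha_t(A))\Omega_\omega\|^2 = \omega(\alpha_t(A^\ast A)) = \omega(A^\ast A)$, so $U_\omega(t)$ is well-defined and extends by continuity to an isometry on $\Hcal_\omega$; running the same construction with $\alpha_{-t}$ supplies a two-sided inverse, hence $U_\omega(t)$ is unitary. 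The group law, the intertwining identity, and $U_\omega(t)\Omega_\omega = \Omega_\omega$ (from $A=\mathds{1}$) are then routine computations on the dense subspace, transported to all of $\Hcal_\omega$ by continuity. Uniqueness follows at once from the identity $V(t)\pi_\omega(A)\Omega_\omega = V(t)\pi_\omega(A)V(t)^{-1}V(t)\Omega_\omega = \pi_\omega(\alpha_t(A))\Omega_\omega$ that any candidate $V(t)$ satisfying \eqref{eq:unitary-dynamics} must obey.

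Conjugation by a unitary is a weak-operator continuous $^\ast$-automorphism of $\Bcal(\Hcal_\omega)$ that preserves $\pi_\omega(\Acal)$ by the intertwining identity, so by the bicommutant theorem $\gamma_t := U_\omega(t)(\slot)U_\omega(t)^{-1}$ preserves $\Mcal_\omega = \pi_\omega(\Acal)''$; $\gamma$-invariance of the GNS vector state is a one-line consequence of $U_\omega(t)\Omega_\omega = \Omega_\omega$. For the continuity equivalence, expanding $\|U_\omega(t)v - v\|^2$ for $v = \pi_\omega(A)\Omega_\omega$ via $\alpha$-invariance reduces strong continuity at $t=0$ on the dense subspace $\pi_\omega(\Acal)\Omega_\omega$ to condition \eqref{eq:continuity-condition}; the uniform bound $\|U_\omega(t)\| = 1$ propagates this to all of $\Hcal_\omega$, and the group law promotes continuity at zero to global strong continuity. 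Stone's theorem then yields the Liouvillean $L$, and $L\Omega_\omega = 0$ follows either by differentiating $U_\omega(t)\Omega_\omega = \Omega_\omega$ or, equivalently, by noting that $\Omega_\omega$ lies in the spectral subspace for $\{0\}$. Strong continuity of $U_\omega$ implies ultraweak continuity of $\gamma$, so $(\Mcal_\omega,\gamma)$ is a W$^\ast$-dynamical system.

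In the KMS case, $\alpha$-invariance of $\omega$ is a standard consequence of the KMS condition, and condition \eqref{eq:continuity-condition} is immediate from continuity of the KMS boundary function $t \mapsto F^\beta_{A^\ast,A}(t) = \omega(A^\ast \alpha_t(A))$, so the preceding construction produces $L$ and $\gamma$. The main obstacle is then promoting the KMS condition from $\pi_\omega(\Acal)$ to the von Neumann closure $\Mcal_\omega$ and establishing the cyclic/separating property of $\Omega_\omega$. The plan is to invoke Kaplansky density: every $X\in\Mcal_\omega$ is a strong-operator limit of a uniformly bounded net in $\pi_\omega(\Acal)$, and the corresponding KMS functions form a normal family on the open strip, so a Vitali or Phragm\'{e}n--Lindel\"{o}f argument yields a holomorphic limit on the strip with the required boundary values $\langle\Omega_\omega, X\gamma_t(Y)\Omega_\omega\rangle$ and $\langle\Omega_\omega, \gamma_t(Y)X\Omega_\omega\rangle$. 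Once the vector state $\langle\Omega_\omega,(\slot)\Omega_\omega\rangle$ is known to be a normal $\beta$-KMS state on $(\Mcal_\omega,\gamma)$, standard Tomita--Takesaki arguments --- the KMS condition forces the modular automorphism group of the vector state to coincide, up to rescaling, with $\gamma$ --- imply that $\Omega_\omega$ is separating for $\Mcal_\omega$; cyclicity is inherited from the GNS construction, and being separating implies that the vector state is faithful.
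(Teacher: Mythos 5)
The paper does not prove this proposition itself (it is compiled from the cited references), so your argument has to stand on its own. The bulk of it does: the definition of $U_\omega(t)$ on $\pi_\omega(\Acal)\Omega_\omega$, the isometry estimate from $\alpha$-invariance, well-definedness, the two-sided inverse from $\alpha_{-t}$, the uniqueness computation, the preservation of $\Mcal_\omega$ via the bicommutant, the equivalence of strong continuity with \eqref{eq:continuity-condition} by expanding $\|U_\omega(t)\pi_\omega(A)\Omega_\omega-\pi_\omega(A)\Omega_\omega\|^2$, and the passage to $L$ via Stone's theorem are all correct and are precisely the standard route. Two steps in the KMS part are not yet sound. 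The lesser one: in extending the KMS condition from $\pi_\omega(\Acal)$ to $\Mcal_\omega$, Kaplansky density gives a bounded strongly convergent net $X_i\to X$, but pointwise convergence of the boundary functions $F_i(t)$ and $F_i(t+i\beta)$ does not feed directly into Vitali's theorem (which needs convergence on a set with an accumulation point in the \emph{open} strip), nor does Phragm\'{e}n--Lindel\"{o}f help without uniform control on the boundary. The standard repair is to represent each $F_i$ in the open strip as the Poisson integral of its boundary values and apply dominated convergence, or to use Montel to extract a locally uniformly convergent subnet and then identify the boundary values of the limit.

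The more serious gap is the last step: deducing that $\Omega_\omega$ is separating from "the modular automorphism group of the vector state coincides with $\gamma$ up to rescaling" is circular, because Tomita--Takesaki theory (Proposition \ref{prop:modular}) presupposes a cyclic \emph{and separating} vector before any modular objects exist. The separating property must be extracted directly from the KMS condition. A correct argument: if $X\in\Mcal_\omega$ with $X\Omega_\omega=0$, then for any $Y\in\Mcal_\omega$ the KMS function for the pair $(Y,X)$ satisfies $F(t)=\langle\Omega_\omega,Y\gamma_t(X)\Omega_\omega\rangle=0$ for all real $t$ (since $\gamma_t(X)\Omega_\omega=U_\omega(t)X\Omega_\omega=0$), hence vanishes on the whole strip by Schwarz reflection and the identity theorem, so $F(i\beta)=\omega(XY)=0$ for all $Y$; taking $Y=X^\ast$ gives $X^\ast\Omega_\omega=0$. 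A second application, to the pair $(Y^\ast X^\ast, XY)$, shows the KMS function vanishes on the boundary line $\Im z=\beta$ (because $Y^\ast X^\ast\Omega_\omega=0$), hence everywhere, and evaluating at $z=0$ gives $\|XY\Omega_\omega\|^2=\omega(Y^\ast X^\ast XY)=0$ for all $Y$; cyclicity then forces $X=0$. Only after this is established may one invoke the modular objects of $(\Mcal_\omega,\Omega_\omega)$ as in Proposition \ref{prop:modular-KMS}.
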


The operator $L$ is also referred to as the ``$\omega$-Liouvillean'' or ``$\Omega_\omega$-Liouvillean'' in the theory of standard representations \cite{DJP2003,Pillet-Attal2006}, which coincides in the present case with the concept of a ``standard Liouvillean'' (see \cite[Prop.\ 2.4]{DJP2003}). For systems at positive temperature the Liouvillean is sometimes also called the ``thermal Hamiltonian'' \cite{BFS2000}. At zero temperature one typically writes $H$ instead of $L$ and refers to it as the ``Hamiltonian''; if $H\geq 0$, the state $\omega$ is called a \textit{ground state} \cite{Sanders2013,Khavkine-Moretti}. The continuity condition \eqref{eq:continuity-condition} is fulfilled, for instance, if $\alpha$ is strongly continuous, i.e.\ $\lim_{t\to 0} \|\alpha_t (A) -A\| = 0$ for all $A\in\Acal$ (for any invariant state $\omega$), if $\omega$ is a KMS state (for any dynamics $\alpha$, cf.\ Definition \ref{def:kms-state}), or if $(\Acal,\alpha)$ is a W$^\ast$-dynamical system and $\omega$ is a normal invariant state (cf.\ \cite[Sec.\ 4.4]{Pillet-Attal2006}).

\subsection{Primarity and mixing property}
\label{sec:prelim-mixing-primary}

Two main assumptions for the states we consider in our theorems are primarity \cite{Bratteli-Robinson1,Kadison-Ringrose-II,Haag1996} and the (strong) mixing property \cite{Bratteli-Robinson1,JP2,Pillet-Attal2006}. We introduce these notions, discuss their interpretation, and present sufficient criteria for their validity that can be checked in concrete models. Let $(\Acal,\alpha)$ be a quantum dynamical system.

\begin{definition}[Primary state]
	\label{def:primary}
	Let $\omega$ be a state on $\Acal$ with GNS representation $(\pi_\omega,\Hcal_\omega,\Omega_\omega)$. The state is called \textit{primary} (or a ``factor state'') if the induced von Neumann algebra $\Mcal_\omega = \pi_\omega(\Acal)''\subseteq\Bcal(\Hcal_\omega)$ is a \textit{factor}, i.e.\ has trivial center $\Mcal_\omega \cap {\Mcal_\omega}' = \C\mathds{1}$.
\end{definition}
Since pure states have irreducible GNS representations, they are primary. Primary states are precisely those states that do not possess a non-trivial decomposition into disjoint states. Two primary states are either disjoint or quasi-equivalent (see \cite[Prop.\ 10.3.12 (ii)]{Kadison-Ringrose-II}).

For strongly continuous dynamics $\alpha$, i.e.\ when $t\mapsto\alpha_t (A)$ is norm-continuous in $\Acal$ for all $A\in\Acal$, the convex set $K_{(\alpha,\beta)}$ of $(\alpha,\beta)$-KMS states on $\Acal$ is a weak-$^\ast$ compact subset of the set of states, and an $(\alpha,\beta)$-KMS state is primary if and only if it is an extremal point of $K_{(\alpha,\beta)}$ \cite[Thm.\ 5.3.30]{Bratteli-Robinson2}. In that case, primary $(\alpha,\beta)$-KMS states are those which are not a mixture of other $(\alpha,\beta)$-KMS states, and thus describe pure thermodynamic phases \cite[Sec.\ V.1.5]{Haag1996} (see also \cite{Emch-Knops-Verboven1970}). Without the continuity assumption on $\alpha$ (in particular for W$^\ast$-dynamics), $K_{(\alpha,\beta)}$ is not necessarily weak-$^\ast$ compact and may lack extremal points \cite[Sec.\ §4]{Takesaki-Winnink1973} (see also the comments in \cite[p.\ 118]{Bratteli-Robinson2}). \medskip

Without any assumption on the continuity of the dynamics, a sufficient, spectral condition for the primarity of a KMS state is given by the simplicity of the zero eigenvalue of the Liouvillean, which signifies the uniqueness (up to a phase) of the invariant vector under the induced W$^\ast$-dynamics (see Proposition \ref{prop:cone}). For completeness we include a proof in Appendix \ref{appendix:qds-add} (Proposition \ref{prop:cond-primary-KMS-app}). 
\begin{proposition}[Criterion for primarity of KMS states]
	\label{prop:cond-primary-KMS}
	Let $\omega$ be a KMS state on $(\Acal,\alpha)$ with GNS representation $(\pi_\omega , \Hcal_\omega , \Omega_\omega)$ and Liouvillean $L$ generating the dynamics $\gamma$ of the induced W$^\ast$-dynamical system $(\Mcal_\omega,\gamma)$. If $L$ has a simple eigenvalue at $0$, i.e.\ $\ker L=\C\Omega_\omega$, then $\omega$ is primary.
\end{proposition}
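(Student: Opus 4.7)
The plan is to show that every element $X$ of the center $\Mcal_\omega \cap \Mcal_\omega'$ is a scalar multiple of $\mathds{1}$. By Proposition \ref{prop:weak-Liouvillean} the GNS vector $\Omega_\omega$ is cyclic and separating for $\Mcal_\omega$, and $\tilde\omega := \langle \Omega_\omega , (\slot) \Omega_\omega \rangle$ is a faithful normal $\beta$-KMS state on the W$^\ast$-dynamical system $(\Mcal_\omega , \gamma)$, with the dynamics $\gamma_t = U_\omega(t)(\slot)U_\omega(t)^{-1}$ strongly continuous and generated by the Liouvillean $L$. The strategy is, first, to use the KMS condition to show that $\gamma_t$ fixes central elements pointwise, second, to read off invariance of $X\Omega_\omega$ under $U_\omega(t)$ and use the hypothesis on $\ker L$, and third, to close the argument with the separating property.

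For Step 1, I would fix $X \in \Mcal_\omega \cap \Mcal_\omega'$ and arbitrary $A \in \Mcal_\omega$. Since $\gamma_t$ is a $^\ast$-automorphism of $\Mcal_\omega$, it preserves the center intrinsically, so $\gamma_t(X)$ is again central and $A\gamma_t(X) = \gamma_t(X)A$. The $\beta$-KMS condition for $\tilde\omega$ provides a function $F^\beta_{A,X}$, bounded and continuous on the strip $\{0 \leq \sgn(\beta)\Im z \leq |\beta|\}$ and analytic in its interior, with boundary values $F^\beta_{A,X}(t) = \tilde\omega(A\gamma_t(X))$ and $F^\beta_{A,X}(t+i\beta) = \tilde\omega(\gamma_t(X)A)$. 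These boundary values coincide, so periodic continuation (applying Morera across the lines $\Im z = k\beta$) extends $F^\beta_{A,X}$ to a bounded entire function, which is constant by Liouville's theorem. Therefore $\langle \Omega_\omega , A(\gamma_t(X) - X)\Omega_\omega\rangle = 0$ for all $A \in \Mcal_\omega$; taking $A = (\gamma_t(X) - X)^\ast$ gives $(\gamma_t(X) - X)\Omega_\omega = 0$, and separating of $\Omega_\omega$ yields $\gamma_t(X) = X$ for every $t\in\R$.

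Step 2 is then short: from $U_\omega(t) X U_\omega(t)^{-1} = X$ and $U_\omega(t)\Omega_\omega = \Omega_\omega$ we obtain $U_\omega(t) X\Omega_\omega = X\Omega_\omega$ for all $t$. Strong continuity of $\{U_\omega(t)\}_{t\in\R}$ and Stone's theorem give $X\Omega_\omega \in \ker L = \C\Omega_\omega$, so $X\Omega_\omega = c\Omega_\omega$ for some $c\in\C$. Since $X - c\mathds{1} \in \Mcal_\omega$ annihilates the separating vector $\Omega_\omega$, I conclude $X = c\mathds{1}$. Hence $\Mcal_\omega \cap \Mcal_\omega' = \C\mathds{1}$ and $\omega$ is primary.

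I do not anticipate a real obstacle. The only delicate point is the extension of the KMS function from the strip to a bounded entire function via its imaginary periodicity, which is standard but deserves to be noted explicitly. Everything else is routine bookkeeping of KMS analyticity, the automorphism property of $\gamma_t$, and the cyclic/separating structure guaranteed by Proposition \ref{prop:weak-Liouvillean}.
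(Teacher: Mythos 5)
Your proof is correct, but it takes a genuinely different route from the paper's. The paper argues via the standard-form machinery: the natural-cone bijection (Proposition \ref{prop:cone}) turns the hypothesis $\ker L=\C\Omega_\omega$ into uniqueness of the normal $\gamma$-invariant state, hence uniqueness of the $(-1)$-KMS state with respect to the modular automorphism group (Proposition \ref{prop:modular-KMS}), and then invokes the equivalence between this uniqueness and factoriality from \cite[Prop.\ 5.3.29]{Bratteli-Robinson2}. You instead work directly with the center: the KMS analyticity plus the periodicity of the boundary values for central $X$ (since $A\gamma_t(X)=\gamma_t(X)A$) yields a bounded entire function, Liouville gives $(\gamma_t(X)-X)\Omega_\omega=0$, and then $U_\omega(t)X\Omega_\omega=X\Omega_\omega$ places $X\Omega_\omega$ in $\ker L=\C\Omega_\omega$, so the separating property forces $X\in\C\mathds{1}$. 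In effect your Step 1 re-proves, in this special case, the relevant ingredient hidden inside the cited Bratteli--Robinson result (that the center lies in the centralizer of a faithful normal KMS state), so your argument is more self-contained and elementary, at the cost of redoing a standard analytic-continuation argument; the paper's version is shorter given the cited machinery and dovetails with its later use of uniqueness of invariant states and the Herman--Takesaki discussion. One small streamlining: once you have $(\gamma_t(X)-X)\Omega_\omega=0$ you can pass directly to $U_\omega(t)X\Omega_\omega=X\Omega_\omega$ without first upgrading to the operator identity $\gamma_t(X)=X$. All ingredients you rely on (strong continuity of $U_\omega(t)$, faithfulness and normality of $\langle\Omega_\omega,(\slot)\Omega_\omega\rangle$, the cyclic and separating property) are indeed supplied by Proposition \ref{prop:weak-Liouvillean} for KMS states.
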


Now we turn to the cluster property that will be of interest in this work.
\begin{definition}[Mixing property]
	\label{def:mixing}
	An $\alpha$-invariant state $\omega$ on $(\Acal,\alpha)$ is called \textit{mixing}, or the triple $(\Acal,\alpha,\omega)$ \textit{satisfies the mixing property}, if
	\begin{gather*}
		\lim\limits_{t\to\infty} \omega(B\alpha_t (A))=\omega(A)\omega(B) \quad \text{for all}\ A,B\in\Acal \, .
	\end{gather*}
\end{definition}
The mixing property is one of many cluster properties, i.e.\ asymptotic factorization properties for expectation values of states under some dynamics (see \cite[Sec.\ 4.3]{Bratteli-Robinson1} and the introductory lecture notes \cite{Verboven1969}). In thermal quantum field theory this property appears as the ``time-clustering'' of correlation functions \cite{Bros-Buchholz1996}. Variants of the mixing property for Galilei-invariant fermionic systems have been investigated in \cite{Narnhofer-Thirring_mixing} (see also \cite{Narnhofer-Thirring_Galilei1991}). For a local quantum field theory on Minkowski spacetime it has been shown in \cite{Jaekel-Narnhofer-Wreszinski} that certain states can be weakly approximated by states that satisfy weak asymptotic abelianness, a property closely related to the mixing property (see at the end of this section).\medskip 

The mixing property is the quantum version of a similar property for classical ergodic systems \cite{Arnold-Avez}, where it constitutes one level of randomness in a whole hierarchy \cite{Berkovitz2006}. Ergodicity properties of invariant states on quantum dynamical systems can be characterized by spectral properties of the generator of the dynamics, a correspondence referred to as ``quantum Koopmanism'' \cite{JP2,Pillet-Attal2006} due to the analogy to Koopman's classical results \cite{Koopman1931} (see \cite[Sec.\ §9]{Arnold-Avez} and also \cite[Sec.\ 3.2]{Pillet-Attal2006}, \cite[Sec.\ VII.4]{Reed-Simon-I}).
\begin{proposition}[Quantum Koopman mixing criterion]
	\label{prop:cond-mixing}
	Let $\omega$ be an $\alpha$-invariant state on $(\Acal,\alpha)$ with GNS representation $(\pi_\omega , \Hcal_\omega , \Omega_\omega)$, such that $\Omega_\omega$ is separating for $\Mcal_\omega = \pi_\omega (\Acal)''$ (i.e., $\langle \Omega_\omega , (\slot)\Omega_\omega \rangle$ is faithful) and $\omega$ satisfies the continuity condition \eqref{eq:continuity-condition}. Let $L$ be the Liouvillean of the induced W$^\ast$-dynamical system $(\Mcal_\omega,\gamma)$. If the spectrum of $L$ is absolutely continuous apart from a simple eigenvalue at $0$, i.e.\ $\ker L=\C\Omega_\omega$, then $\lim_{t\to\infty} \e^{iLt} = |\Omega_\omega \rangle\langle \Omega_\omega |$ in weak operator topology and thus $(\Acal,\alpha,\omega)$ satisfies the mixing property.
\end{proposition}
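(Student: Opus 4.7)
The plan is to first establish the weak operator topology limit $\lim_{t\to\infty} e^{iLt} = P_0$, where $P_0 = |\Omega_\omega\rangle\langle\Omega_\omega|$ is the orthogonal projection onto $\ker L = \C\Omega_\omega$, and then to derive the mixing property as a direct consequence of this limit together with the implementation formula \eqref{eq:Liouvillean}.

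For the first step I would invoke the spectral theorem for the self-adjoint operator $L$ together with the Riemann-Lebesgue lemma. By the spectral hypothesis, $\Hcal_\omega$ splits as the orthogonal sum $\C\Omega_\omega \oplus \Hcal_\omega^{\mathrm{ac}}(L)$, where $\Hcal_\omega^{\mathrm{ac}}(L)$ is the absolutely continuous spectral subspace of $L$; both summands are invariant under $\{e^{iLt}\}_{t\in\R}$. Writing $v = v_0 + v'$ and $w = w_0 + w'$ accordingly, for any $v,w\in\Hcal_\omega$, one obtains
\begin{equation*}
\langle v, e^{iLt}w\rangle \;=\; \langle v_0, w_0\rangle + \langle v', e^{iLt}w'\rangle \;=\; \langle v, P_0 w\rangle + \int_{\R} e^{i\lambda t}\, d\mu_{v',w'}(\lambda),
\end{equation*}
where $\mu_{v',w'}$ is the complex spectral measure of $L$ associated with the pair $(v',w')$, which is absolutely continuous with respect to Lebesgue measure by the choice of $v',w'$. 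Riemann-Lebesgue then gives $\int_{\R} e^{i\lambda t}\,d\mu_{v',w'}(\lambda) \to 0$ as $t\to\infty$, which yields the claimed weak-operator limit.

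For the mixing property, take $A,B\in\Acal$. Using $L\Omega_\omega = 0$ (so that $e^{\pm iLt}\Omega_\omega = \Omega_\omega$) and the implementation formula \eqref{eq:Liouvillean}, one rewrites
\begin{equation*}
\omega(B\alpha_t(A)) \;=\; \langle\Omega_\omega,\, \pi_\omega(B)\,e^{iLt}\pi_\omega(A)\,e^{-iLt}\Omega_\omega\rangle \;=\; \langle\pi_\omega(B^\ast)\Omega_\omega,\; e^{iLt}\pi_\omega(A)\Omega_\omega\rangle.
\end{equation*}
Applying the weak-operator limit with $v = \pi_\omega(B^\ast)\Omega_\omega$ and $w = \pi_\omega(A)\Omega_\omega$ yields
\begin{equation*}
\lim_{t\to\infty}\omega(B\alpha_t(A)) \;=\; \langle\pi_\omega(B^\ast)\Omega_\omega,\,\Omega_\omega\rangle\,\langle\Omega_\omega,\,\pi_\omega(A)\Omega_\omega\rangle \;=\; \omega(B)\,\omega(A),
\end{equation*}
which is precisely the mixing property.

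The only nontrivial ingredient is ensuring that the orthogonal complement of $\ker L$ lies entirely in the absolutely continuous subspace — exactly the content of the spectral hypothesis — so that no singular continuous part or residual point spectrum can obstruct the Riemann-Lebesgue argument. The separating assumption on $\Omega_\omega$ does not enter explicitly in this direction of the implication; it is presumably imposed because it places the setting into the standard-form framework underlying the Liouvillean and is needed for related converse statements in quantum Koopmanism.
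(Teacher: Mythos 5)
Your argument is correct and is essentially the proof that the paper delegates to the literature (\cite[Thm.\ 4.2]{JP2}, \cite[Lemma III.1]{BFS2000}): the orthogonal splitting $\Hcal_\omega=\C\Omega_\omega\oplus\Hcal_\omega^{\mathrm{ac}}(L)$ into $e^{iLt}$-invariant subspaces, the Riemann--Lebesgue lemma applied to the absolutely continuous spectral measures, and the rewriting $\omega(B\alpha_t(A))=\langle\pi_\omega(B^\ast)\Omega_\omega,\,e^{iLt}\pi_\omega(A)\Omega_\omega\rangle$ via Eq.\ \eqref{eq:Liouvillean} and $e^{-iLt}\Omega_\omega=\Omega_\omega$. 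Your closing observation is also consistent with the paper's own remarks: the separating hypothesis is not invoked in this implication, but it is essential for the converse spectral \emph{characterization} (e.g.\ the bijection between $\ker L\cap\Hcal_\omega^{+}$ and normal invariant states in Proposition \ref{prop:cone}), which is the reason it appears among the standing assumptions.
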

The proof can be found in \cite[Sec.\ 4]{JP2}, among the spectral characterizations of other ergodicity properties; see also \cite[Lemma III.1]{BFS2000} and \cite[Cor.\ 4.66]{Pillet-Attal2006}. In \cite[Def.\ 4.1]{JP2} the mixing property is formulated for the Liouvillean in the GNS representation of a faithful, normal, invariant state on a W$^\ast$-dynamical system. In Proposition \ref{prop:cond-mixing}, the GNS representation of the faithful, normal, $\gamma$-invariant state $\langle \Omega_\omega , (\slot) \Omega_\omega \rangle$ on $\Mcal_\omega\subseteq\Bcal(\Hcal_\omega)$ is the identity morphism on $\Mcal_\omega$, hence the mixing property for $(\Mcal_\omega , \gamma , \langle \Omega_\omega , (\slot) \Omega_\omega \rangle)$, and thereby for $(\Acal,\alpha,\omega)$, is obtained from \cite[Thm.\ 4.2 \& Cor.\ 4.3]{JP2} by setting $\pi\equiv\id$.

We mention that the faithfulness of the state $\langle \Omega_\omega , (\slot) \Omega_\omega \rangle$ that extends $\omega$ to $\Mcal_\omega$ is essential for this spectral characterization of the mixing property. If $\omega$ is a $\beta$-KMS state for some $\beta>0$, faithfulness is guaranteed by Proposition \ref{prop:weak-Liouvillean}, and \eqref{eq:continuity-condition} is fulfilled, so Proposition \ref{prop:cond-mixing} can be applied. In the ``zero temperature'', ground state limit $\beta\to\infty$ faithfulness is lost (see \cite[Prop.\ 2]{Requardt-Wreszinski}), and Proposition \ref{prop:cond-mixing} is not valid for such states (see Remark 1 in \cite[Sec.\ 4]{JP2}). However, there is a generalization that applies to possibly non-faithful invariant states and concerns the spectral properties of a certain restriction of the Liouvillean \cite[Thm.\ 4.65]{Pillet-Attal2006}. \medskip 

For KMS states the mixing property is closely related to the stability under some perturbation of the dynamics \cite{Haag-Kastler-Trych-Pohlmeyer,Bratteli-Kishimoto-Robinson}. Most important for our purposes, the mixing property of KMS states is equivalent to the RTE property (see \cite[Thm.\ 4.2]{JP2}), to be discussed later in Section \ref{sec:th-rte} in the context of thermalization of Unruh-DeWitt detectors coupled to KMS states of a quantum field. Under the conditions of Proposition \ref{prop:cond-mixing}, it holds $\ker L=\C\Omega_\omega$ if and only if $(\Acal,\alpha,\omega)$ satisfies RTE in the ergodic mean (see Definition \ref{def:th-rte-general}) \cite[Thm.\ 4.2]{JP2}, which is weaker than mixing (respectively, RTE).\medskip

A cluster property that is closely related to mixing is weak asymptotic abelianness \cite{Doplicher-Kadison-Kastler-Robinson1967,Doplicher-Kastler-Stormer1969,Bratteli-Robinson1} (here we use the formulation appearing in, e.g., \cite{Narnhofer-Thirring_mixing,Narnhofer-Thirring_Galilei1991,Jaekel-Narnhofer-Wreszinski}). The triple $(\Acal,\alpha,\omega)$ is \textit{weakly asymptotically abelian} for an $\alpha$-invariant state $\omega$ if
\begin{gather}
	\label{eq:weak-asymp-abelian}
	\lim\limits_{t\to\infty} \omega(C[\alpha_t (A),B]C')=0 \quad \text{for all}\ A,B,C,C'\in\Acal \, ,
\end{gather}
or, equivalently, $\lim_{t\to\infty} [\pi_\omega (\alpha_t (A)),\pi_\omega (B)] = 0$ for all $A,B\in\Acal$ in weak operator topology under the GNS representation $\pi_\omega$ of $\omega$. If $\omega$ is primary and has a separating GNS vector, then $(\Acal,\alpha,\omega)$ is mixing if and only if it is weakly asymptotically abelian (see Lemma \ref{lem:weak-asymp-ab}, cf.\ \cite{Narnhofer-Thirring_Galilei1991} or \cite[p.\ 403]{Bratteli-Robinson1}). In our main Theorem \ref{thm:primary-disjoint} the mixing property can therefore be replaced by weak asymptotic abelianness. Nevertheless, the proofs will be formulated using the mixing property, which has a useful spectral characterization by Proposition \ref{prop:cond-mixing}.

\section{Disjointness of invariant states in different inertial frames}
\label{sec:disjointness}

In this section we present our main results on the disjointness of invariant states with respect to different inertial reference frames. For the proofs we will need some results regarding limits of $^\ast$-automorphism groups in quantum dynamical systems and algebraic quantum field theory, which are discussed in Section \ref{sec:lemmata-limits}. After that, we prove the disjointness of primary, mixing states in Section \ref{sec:disjoint-primary}. In Section \ref{sec:disjoint-KMS} we show that the result applies to states of the free scalar field satisfying the KMS condition with respect to inertial time translations.

\subsection{Three lemmata on limits}
\label{sec:lemmata-limits}

A cluster point (accumulation point) of a net (Moore-Smith sequence) in a topological space is a point for which there exists a subnet converging to it. There are several non-equivalent definitions for the notion of a subnet \cite[7.14 ff.]{Schechter}. For the sake of a concrete definition we may consider Kelley subnets: Let $(I,\preceq_I),(J,\preceq_J)$ be directed sets, and $(a_i)_{i\in I}$ a net in a topological space. A (Kelley) \textit{subnet} of $(a_i)_{i\in I}$ is given by $(a_{f(j)})_{j\in J}$, where $f:J\to I$ is a map such that for every $i \in I$ there exists $j_0 \in J$ so that $i \preceq_I f(j)$ whenever $j_0 \preceq_J j$. It is easy to see that if a net converges to a point, so does every subnet. Moreover, in a compact space every net has a cluster point. These fundamental properties are shared by any of the common definitions of a subnet, as explained in \cite[7.19, 15.38, 17.2]{Schechter}.\medskip

The first lemma is a consequence of the mixing property (Definition \ref{def:mixing}). The result and variants of it have been presented before in different contexts, see for example \cite{Haag-Kastler-Trych-Pohlmeyer,Narnhofer-Thirring_Galilei1991,Buchholz-Verch2015}. For systems with ``discrete'' dynamics given by a single non-trivial $^\ast$-automorphism an analogous statement is proven in \cite{Narnhofer-Thirring-Wiklicky} (see also \cite[Prop.\ (3.1)]{Thirring1992}).
\begin{lemma}
	\label{lem:mixing-limit}
	Let $\omega$ be a mixing state on a quantum dynamical system $(\Acal,\alpha)$ with a GNS representation $(\pi_\omega , \Hcal_\omega , \Omega_\omega)$ such that $\Omega_\omega$ is separating for $\Mcal_\omega = \pi_\omega (\Acal)''$. Then
	\begin{gather*}
		\lim\limits_{t\to\pm\infty} \pi_\omega (\alpha_t (A)) = \omega(A)\mathds{1} \quad \text{for all}\ A\in\Acal
	\end{gather*}
	in weak operator topology.
\end{lemma}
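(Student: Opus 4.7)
The plan is to obtain the result in two successive density arguments, exploiting first the cyclicity of $\Omega_\omega$ for $\pi_\omega(\Acal)$ (from the GNS property) and then the cyclicity of $\Omega_\omega$ for $\Mcal_\omega'$ (from the separating hypothesis on $\Mcal_\omega$). Throughout, I will use the uniform bound $\|\pi_\omega(\alpha_t(A))\|\leq \|A\|$, which makes it enough to check the convergence of matrix elements on a dense subspace.

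First, I would reduce the two-sided limit $t\to\pm\infty$ to the positive one that is given by Definition \ref{def:mixing}. A mixing state is by assumption $\alpha$-invariant, so applying complex conjugation to $\lim_{t\to\infty}\omega(B^{\ast}\alpha_t(A^{\ast}))=\omega(A^{\ast})\omega(B^{\ast})$ (valid for arbitrary $A,B\in\Acal$) yields $\lim_{t\to\infty}\omega(\alpha_t(A)B)=\omega(A)\omega(B)$. For $t\to-\infty$ set $s=-t$ and use invariance to rewrite $\omega(B\alpha_t(A))=\omega(\alpha_{s}(B)A)$; the previous identity (with $A$ and $B$ interchanged) then shows that this still tends to $\omega(A)\omega(B)$. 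Hence mixing gives $\lim_{t\to\pm\infty}\omega(B\alpha_t(A))=\omega(A)\omega(B)$ for all $A,B\in\Acal$.

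Second, I would show that $\pi_\omega(\alpha_t(A))\Omega_\omega\to\omega(A)\Omega_\omega$ weakly in $\Hcal_\omega$. For $B\in\Acal$ the identity $\langle \pi_\omega(B)\Omega_\omega,\pi_\omega(\alpha_t(A))\Omega_\omega\rangle=\omega(B^{\ast}\alpha_t(A))$ together with the first step gives the limit $\omega(A)\langle\pi_\omega(B)\Omega_\omega,\Omega_\omega\rangle$. Since $\pi_\omega(\Acal)\Omega_\omega$ is norm-dense in $\Hcal_\omega$ and $\|\pi_\omega(\alpha_t(A))\Omega_\omega\|\leq \|A\|$ uniformly in $t$, a standard $\varepsilon/3$ argument promotes this to $\langle v,\pi_\omega(\alpha_t(A))\Omega_\omega\rangle\to\omega(A)\langle v,\Omega_\omega\rangle$ for every $v\in\Hcal_\omega$.

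Third, I would upgrade weak vector convergence to weak operator convergence by moving the dynamics ``off'' $\Omega_\omega$ using the commutant. Since $\Omega_\omega$ is separating for $\Mcal_\omega$, it is cyclic for $\Mcal_\omega'$, so vectors of the form $X'\Omega_\omega$, $Y'\Omega_\omega$ with $X',Y'\in\Mcal_\omega'$ are dense in $\Hcal_\omega$. Because $\pi_\omega(\alpha_t(A))\in\Mcal_\omega$ commutes with $Y'\in\Mcal_\omega'$, one has
\begin{equation*}
	\langle X'\Omega_\omega,\pi_\omega(\alpha_t(A))Y'\Omega_\omega\rangle
	=\langle Y'^{\ast}X'\Omega_\omega,\pi_\omega(\alpha_t(A))\Omega_\omega\rangle
	\longrightarrow \omega(A)\langle Y'^{\ast}X'\Omega_\omega,\Omega_\omega\rangle
	=\omega(A)\langle X'\Omega_\omega,Y'\Omega_\omega\rangle
\end{equation*}
as $t\to\pm\infty$, by the previous step. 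Another $\varepsilon/3$ argument based on $\|\pi_\omega(\alpha_t(A))\|\leq \|A\|$ extends this to all $v,w\in\Hcal_\omega$, yielding the desired weak operator limit $\pi_\omega(\alpha_t(A))\to\omega(A)\mathds{1}$. The only subtlety is this last commutation step, which is where the separating assumption on $\Omega_\omega$ is essential; without it one only gets convergence of the single column $\pi_\omega(\alpha_t(A))\Omega_\omega$, not of the full operator.
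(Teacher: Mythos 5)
Your proof is correct, and it takes a genuinely different route from the one in the paper. The paper argues by compactness: the norm-bounded net $(\pi_\omega(\alpha_t(A)))_{t\in\R}$ lies in a weakly compact ball of $\Mcal_\omega$ (Banach-Alaoglu), so it has a weak operator cluster point $X\in\Mcal_\omega$; the mixing property together with cyclicity of $\Omega_\omega$ forces $(X-\omega(A)\mathds{1})\Omega_\omega=0$, the separating hypothesis then pins down $X=\omega(A)\mathds{1}$ as the \emph{unique} cluster point, and convergence follows because a net in a compact space with a unique cluster point converges to it. You instead avoid nets, subnets and compactness altogether: you first establish weak convergence of the single vector $\pi_\omega(\alpha_t(A))\Omega_\omega$ to $\omega(A)\Omega_\omega$ via the GNS cyclicity, and then commute the dynamics past elements of $\Mcal_\omega'$ to reach matrix elements between the dense set $\Mcal_\omega'\Omega_\omega$, closing with a uniform-boundedness $\varepsilon/3$ argument. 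The two proofs invoke the separating hypothesis in equivalent but differently packaged ways --- the paper to conclude $X=\omega(A)\mathds{1}$ from $X\Omega_\omega=\omega(A)\Omega_\omega$, you to get cyclicity of $\Omega_\omega$ for the commutant --- and your handling of the $t\to-\infty$ case via invariance and complex conjugation of the mixing identity matches the paper's. Your version is arguably more elementary and self-contained; the paper's cluster-point formulation has the advantage of slotting directly into the hypotheses of its Lemma \ref{lem:equality}, where cluster points (rather than limits) are the operative objects. Both are complete proofs.
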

\begin{proof}
	Let $A\in\Acal$. We first prove the statement for the limit $t\to\infty$. The net $(\pi_\omega(\alpha_t (A)))_{t\in\R}$ is norm-bounded by $\|\pi_\omega(\alpha_t (A))\|\leq\|\alpha_t (A)\|=\|A\|$ (as $^\ast$-homomorphisms between C$^\ast$-algebras are norm-decreasing, and $^\ast$-automorphisms are isometric), thus it is contained in a norm-closed ball in $\Mcal_\omega$, which is compact in ultraweak (weak-$^\ast$) topology by the Banach-Alaoglu theorem. Hence the net has a cluster point $X \in\Mcal_\omega$ with respect to weak operator topology, i.e.\ there exists some directed set $I$ and a map $f:I\to\R$ defining a subnet $(\pi_\omega(\alpha_{f(i)} (A)))_{i\in I}$ that converges to $X$ in weak operator topology. In particular,
	\begin{gather*}
		\lim\limits_{i\in I} \langle \pi_\omega (B) \Omega_\omega , \pi_\omega(\alpha_{f(i)} (A)) \Omega_\omega \rangle = \langle \pi_\omega (B) \Omega_\omega , X \Omega_\omega \rangle
	\end{gather*}
	for all $B\in\Acal$. On the other hand, the mixing property of $\omega$ implies that
	\begin{gather*}
		\lim\limits_{t\to\infty} \langle \pi_\omega (B) \Omega_\omega , \pi_\omega(\alpha_{t} (A)) \Omega_\omega \rangle = \omega(A)\omega(B^\ast) \equiv \langle \pi_\omega (B) \Omega_\omega , \omega(A)\mathds{1} \Omega_\omega \rangle
	\end{gather*}
	for all $B\in\Acal$. Since $(\langle \pi_\omega (B) \Omega_\omega , \pi_\omega(\alpha_{f(i)} (A)) \Omega_\omega \rangle)_{i\in I}$ is a subnet of the convergent net $(\langle \pi_\omega (B) \Omega_\omega , \pi_\omega(\alpha_{t} (A)) \Omega_\omega \rangle)_{t\in\R}$ the limits must coincide, i.e.\ $\langle \pi_\omega (B) \Omega_\omega , (X-\omega(A)\mathds{1}) \Omega_\omega \rangle = 0$. This holds for all $B\in\Acal$, and $\Omega_\omega$ is cyclic, so $(X-\omega(A)\mathds{1}) \Omega_\omega = 0$, which implies $X=\omega(A)\mathds{1}$ by the assumption that $\Omega_\omega$ is separating. As this applies to all cluster points, $\omega(A)\mathds{1}$ is the unique cluster point of $(\pi_\omega(\alpha_t (A)))_{t\in\R}$, which is a net in a weakly compact subspace and thus has to converge to $\omega(A)\mathds{1}$.
	
	For the limit $t\to -\infty$ we show the equivalent statement that $\lim_{t\to\infty} \pi_\omega (\alpha_{-t} (A)) = \omega(A)\mathds{1}$ in weak operator topology for $A\in\Acal$. Notice that the mixing property is equivalent to $\lim_{t\to\infty} \omega(\alpha_{-t} (A)B)=\omega(A)\omega(B)$ for all $A,B\in\Acal$, since $\omega(\alpha_{-t} (A)B) = \omega(A\alpha_t (B))$ by the $\alpha$-invariance of $\omega$. The proof therefore proceeds in a similar way as before. 
\end{proof}

Our second lemma concerns dynamical systems of von Neumann algebras. A somewhat analogous result for KMS states is \cite[Sec.\ §1,\ Thm.\ 4]{Herman-Takesaki1970} (see the discussion in Section \ref{sec:herman-takesaki}). We formulate the statement in a slightly more general form than needed for the proof of our main theorem. From now on we choose to write the concatenation of automorphisms as juxtaposition to simplify notation. Furthermore, two $^\ast$-automorphism groups $\{\alpha_t\}_{t\in\R}$ and $\{\alpha'_t\}_{t\in\R}$ are said to commute if $\alpha_t \alpha'_s = \alpha'_s \alpha_t$ for all $s,t\in\R$.
\begin{lemma}
	\label{lem:equality}
	Let $\Mcal\subseteq\Bcal(\Hcal)$ be a von Neumann algebra of operators on a Hilbert space $\Hcal$. Let $\varphi=\langle\Psi_\varphi , (\slot)\Psi_\varphi \rangle$ and $\psi=\langle\Psi_\psi , (\slot)\Psi_\psi \rangle$ for $\Psi_\varphi , \Psi_\psi \in \Hcal$ be vector states on $\Mcal$ that are invariant under commuting one-parameter $^\ast$-automorphism groups $\gamma^\varphi = \{\gamma^\varphi_t\}_{t\in\R}$ and $\gamma^\psi = \{\gamma^\psi_t\}_{t\in\R}$, respectively. Assume that there is a $^\ast$-subalgebra $\Ncal \subseteq \Mcal$ with $\Ncal''=\Mcal$ such that for every $X\in\Ncal$ the nets $(\gamma^\varphi_t (X))_{t\in\R}$, $(\gamma^\psi_t (X))_{t\in\R}$, respectively $(\gamma^\psi_{-t} \gamma^\varphi_t (X))_{t\in\R}$, have cluster points in $\C\mathds{1}$ in weak operator topology as $t\to\pm\infty$, respectively $t\to\infty$. Then $\lim_{t\to\pm\infty} \gamma^\varphi_t (X) = \varphi(X) \mathds{1}$ and $\lim_{t\to\pm\infty} \gamma^\psi_t (X) = \psi(X) \mathds{1}$ for all $X\in\Ncal$ in weak operator topology, and $\varphi=\psi$ on $\Mcal$.
\end{lemma}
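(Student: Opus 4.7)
The plan is to use the three cluster-point hypotheses in turn, identifying the scalar multiple of $\mathds{1}$ appearing in each cluster point by pairing with the invariant state, and then extending the resulting equality from $\Ncal$ to $\Mcal$ by density. For $X \in \Ncal$ the net $(\gamma^\varphi_t(X))_{t\in\R}$ is norm-bounded by $\|X\|$ (since $\gamma^\varphi_t$ is an isometry), hence WOT-relatively compact in $\Mcal$ by Banach--Alaoglu. Given any cluster point $c\mathds{1}$ obtained as a WOT-limit of $\gamma^\varphi_{f(i)}(X)$ along a subnet with $f(i) \to \pm\infty$, the $\gamma^\varphi$-invariance of $\varphi$ together with the WOT-continuity of the vector functional $\varphi$ forces
\[
c \;=\; \lim_i \varphi(\gamma^\varphi_{f(i)}(X)) \;=\; \varphi(X).
\]
Uniqueness of the cluster point combined with compactness then forces $\gamma^\varphi_t(X) \to \varphi(X)\mathds{1}$ in WOT as $t \to \pm\infty$ (this is the mechanism already at work in Lemma~\ref{lem:mixing-limit}), and the same reasoning applied to $\gamma^\psi$ gives $\gamma^\psi_t(X) \to \psi(X)\mathds{1}$ in WOT as $t \to \pm\infty$.

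I would then apply the same scheme to the third net $(\gamma^\psi_{-t}\gamma^\varphi_t(X))_{t\in\R}$ as $t \to +\infty$, but now extract the scalar by pairing with \emph{both} states. For a WOT-convergent subnet $\gamma^\psi_{-f(i)}\gamma^\varphi_{f(i)}(X) \to c\mathds{1}$ with $f(i) \to +\infty$, the commutativity $\gamma^\psi_{-t}\gamma^\varphi_t = \gamma^\varphi_t\gamma^\psi_{-t}$, the $\gamma^\varphi$-invariance of $\varphi$, the WOT-continuity of $\varphi$, and the first-step limit applied at $-f(i) \to -\infty$ together produce
\[
c \;=\; \lim_i \varphi(\gamma^\varphi_{f(i)}\gamma^\psi_{-f(i)}(X)) \;=\; \lim_i \varphi(\gamma^\psi_{-f(i)}(X)) \;=\; \psi(X).
\]
Symmetrically, the $\gamma^\psi$-invariance of $\psi$ and the first-step limit $\gamma^\varphi_{f(i)}(X) \to \varphi(X)\mathds{1}$ yield
\[
c \;=\; \lim_i \psi(\gamma^\psi_{-f(i)}\gamma^\varphi_{f(i)}(X)) \;=\; \lim_i \psi(\gamma^\varphi_{f(i)}(X)) \;=\; \varphi(X).
\]
Hence $\varphi(X) = c = \psi(X)$ for every $X \in \Ncal$.

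To pass from $\Ncal$ to $\Mcal = \Ncal''$ I would invoke the bicommutant theorem, identifying $\Mcal$ with the WOT-closure of $\Ncal$, together with the WOT-continuity on $\Bcal(\Hcal)$ of the vector functionals $\varphi$ and $\psi$; these jointly extend $\varphi|_\Ncal = \psi|_\Ncal$ to $\varphi = \psi$ on all of $\Mcal$. I expect the main obstacle to be the bookkeeping in the third net---keeping track that $\gamma^\varphi$ runs toward $+\infty$ while $\gamma^\psi$ runs toward $-\infty$, so that the correct branches of the first-step limits (some as $t\to+\infty$, some as $t\to-\infty$) are inserted at the right places. Beyond this, the argument reduces to WOT-continuity of vector states combined with the invariance identities.
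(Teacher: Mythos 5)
Your proposal is correct and follows essentially the same route as the paper's proof: identify each cluster point of the first two nets as $\varphi(X)\mathds{1}$, resp.\ $\psi(X)\mathds{1}$, by pairing with the invariant vector state and upgrade to convergence via uniqueness of cluster points in a weakly compact set, then evaluate the cluster point of the mixed net $\gamma^\psi_{-t}\gamma^\varphi_t(X)$ against both states (using commutativity and invariance) to get $\varphi(X)=c=\psi(X)$, and finish by weak density of $\Ncal$ in $\Mcal=\Ncal''$ together with continuity of the vector states. The bookkeeping you flag (which branch, $t\to+\infty$ or $t\to-\infty$, of the first-step limits enters where) is handled exactly as you describe and poses no obstacle.
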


\begin{proof}
	Fix $X\in\Ncal$. The existence of cluster points for the stated nets is guaranteed from the outset by the Banach-Alaoglu theorem; for example, the net $(\gamma^\varphi_t (X))_{t\in\R}$ over the directed set $(\R,\leq)$ (i.e.\ we consider the limit $t\to\infty$) is norm-bounded by $\|\gamma^\varphi_t (X)\|=\|X\|$ (as $^\ast$-automorphisms are isometric). The assumption is that the cluster points are multiples of the identity, so there exists some directed set $I$ and a map $f:I\to\R$ defining a subnet $(\gamma^\varphi_{f(i)} (X))_{i\in I}$ of $(\gamma^\varphi_t (X))_{t\in\R}$ such that $\lim_{i\in I} \langle v, \gamma^\varphi_{f(i)} (X)w\rangle = c^\varphi_X \langle v, w\rangle$ for some $c^\varphi_X \in\C$ for all $v,w\in\Hcal$. In particular, we get $c^\varphi_X = \lim_{i\in I} \langle \Psi_\varphi , \gamma^\varphi_{f(i)} (X) \Psi_\varphi \rangle = \varphi (X)$, as $\varphi$ is $\gamma^\varphi$-invariant. This shows that $\varphi(X)\mathds{1}$ is the unique cluster point and therefore (as in the proof of Lemma \ref{lem:mixing-limit}) $(\gamma^\varphi_t (X))_{t\in\R}$ converges to $\varphi(X)\mathds{1}$. Similarly, the net $(\gamma^\psi_t (X))_{t\in\R}$ converges to $\psi(X)\mathds{1}$ in weak operator topology. The proof for the limit $t\to -\infty$ is identical.
	
	For $X\in\Ncal$ the net $(\gamma^\psi_{-t} \gamma^\varphi_t (X))_{t\in\R}$ has a cluster point $c^{\psi\varphi}_X \mathds{1}$ for some $c^{\psi\varphi}_X \in\C$. Let $(\gamma^\psi_{-g(j)} \gamma^\varphi_{g(j)} (X))_{j\in J}$ be a subnet (given by a directed set $J$ and a map $g:J\to\R$) converging to $c^{\psi\varphi}_X \mathds{1}$. Then 
	\begin{gather*}
		c^{\psi\varphi}_X = \lim\limits_{j\in J} \langle \Psi_\psi , \gamma^\psi_{-g(j)} \gamma^\varphi_{g(j)} (X) \Psi_\psi \rangle = \lim\limits_{j\in J} \langle \Psi_\psi , \gamma^\varphi_{g(j)} (X) \Psi_\psi \rangle = \varphi(X)
	\end{gather*}
	by $\gamma^\psi$-invariance of $\psi$ and the weak operator limit properties (note that $(\gamma^\varphi_{g(j)} (X))_{j\in J}$ is a subnet of the convergent net $(\gamma^\varphi_t (X))_{t\in\R}$). On the other hand, 
	\begin{gather*}
		c^{\psi\varphi}_X = \lim\limits_{j\in J} \langle \Psi_\varphi , \gamma^\psi_{-g(j)} \gamma^\varphi_{g(j)} (X) \Psi_\varphi \rangle = \lim\limits_{j\in J} \langle \Psi_\varphi , \gamma^\psi_{-g(j)} (X) \Psi_\varphi \rangle = \psi(X) \, ,
	\end{gather*}
	using that $\gamma^\varphi$ and $\gamma^\psi$ commute, $\varphi$ is $\gamma^\varphi$-invariant, and the limit properties. One therefore concludes that $\varphi(X)=c^{\psi\varphi}_X =\psi(X)$ for all $X\in\Ncal$. Since $\Ncal\subseteq\Mcal$ is ultraweakly dense by von Neumann's bicommutant theorem \cite[Thm.\ 2.4.11]{Bratteli-Robinson1}, and the states $\varphi$ and $\psi$ are normal, they are already uniquely determined on $\Ncal$ and so $\varphi\restr_{\Ncal}=\psi\restr_{\Ncal}$ implies $\varphi=\psi$.
\end{proof}

Finally, we state (without proof) a cluster property regarding the asymptotic commutativity of observables of a local quantum field theory under spacelike translations.
\begin{lemma}[{\cite[Lemma IV.4.2]{Borchers1996}}]
	\label{lem:spacelike-abelian}
	Let $\Acal$ be the quasi-local algebra of an algebraic quantum field theory on $(1+d)$-dimensional Minkowski spacetime $\M\cong\R^{1+d}$ ($d \geq 1$), satisfying isotony, locality and translation covariance (assumptions (A)--(C) in Section \ref{sec:prelim-aqft}), where spacetime translations are represented on $\Acal$ by $^\ast$-automorphisms $\{\updelta_x\}_{x\in\R^{1+d}}$. Let $\pi:\Acal\to\Bcal(\Hcal)$ be a $^\ast$-representation of $\Acal$ on a Hilbert space $\Hcal$, with induced von Neumann algebra $\Mcal:=\pi(\Acal)''$. Then for every $A\in\Acal$, $X'\in(\Mcal\cap\Mcal')'$, and spacelike vector $a\in\R^{1+d}$, one has
	\begin{gather*}
		\lim\limits_{t\to\infty} [\pi(\updelta_{ta} (A)),X'] = 0
	\end{gather*}
	in weak operator topology.
\end{lemma}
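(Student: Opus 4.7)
The plan is to combine locality with the compactness of norm-bounded balls in the weak operator topology (WOT) via a cluster-point argument similar to those used in Lemmata \ref{lem:mixing-limit} and \ref{lem:equality}. First I would reduce to the case $A\in\Acal(O)$ for some bounded causally convex region $O$: if $A_n\in\bigcup_O\Acal(O)$ approximates $A$ in norm, then $\|[\pi(\updelta_{ta}(A-A_n)),X']\|\leq 2\|A-A_n\|\|X'\|$ uniformly in $t$, so a diagonal $\varepsilon/2$-argument in the matrix elements $\langle v,[\,\cdot\,,X']w\rangle$ transfers WOT-convergence from the local case to general $A$. Write $B_t:=\pi(\updelta_{ta}(A))$ and note $\|B_t\|=\|A\|$ uniformly in $t$. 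I would also record the elementary identity $(\Mcal\cap\Mcal')'=\Mcal\vee\Mcal'$, which follows from von Neumann's bicommutant theorem via $(\Mcal\vee\Mcal')'=\Mcal'\cap\Mcal''=\Mcal'\cap\Mcal$; in particular $X'$ lies in the von Neumann algebra generated by $\Mcal$ and $\Mcal'$.

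The net $([B_t,X'])_{t\geq 0}$ is contained in the closed ball of radius $2\|A\|\|X'\|$ in $\Bcal(\Hcal)$, which is WOT-compact by Banach--Alaoglu (using that WOT and the ultraweak topology coincide on norm-bounded sets, as recalled in Section \ref{sec:prelim-qds}). Hence it suffices to show that every WOT cluster point of this net is $0$; uniqueness of the cluster point in a compact space then forces WOT-convergence to $0$, exactly as in the proof of Lemma \ref{lem:mixing-limit}. So suppose $[B_{f(i)},X']\to Y$ along some subnet. Since $(B_{f(i)})_i$ is itself contained in the WOT-compact ball of radius $\|A\|$, I would extract a further subnet $(B_{g(j)})_j$ converging in WOT to some $B_\infty\in\Bcal(\Hcal)$; separate WOT-continuity of left and right multiplication by the fixed operator $X'$ then yields $Y=[B_\infty,X']$.

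The core of the argument, and the main technical step, is to show that $B_\infty$ lies in the center $\Mcal\cap\Mcal'$, because then $B_\infty$ commutes with every element of $(\Mcal\cap\Mcal')'=\Mcal\vee\Mcal'$, giving $Y=[B_\infty,X']=0$. Membership in $\Mcal$ is immediate, since $B_t\in\pi(\Acal)\subseteq\Mcal$ for every $t$ and $\Mcal$ is WOT-closed. For the commutant part I would invoke spacelikeness of $a$: for any bounded region $O_1$, there is a threshold $T(O_1)$ such that $O+ta$ is spacelike separated from $O_1$ whenever $|t|\geq T(O_1)$ (since $\eta(x+ta-y,x+ta-y)=t^2\eta(a,a)+O(t)$ uniformly for $x\in O$, $y\in O_1$). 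By locality (axiom (B)) this gives $[\updelta_{ta}(A),\Acal(O_1)]=0$, hence $[B_t,\pi(A_1)]=0$ for every $A_1\in\Acal(O_1)$ and every $|t|\geq T(O_1)$. Passing to the subnet limit yields $[B_\infty,\pi(A_1)]=0$, and by isotony together with the norm-density of $\bigcup_{O_1}\Acal(O_1)$ in $\Acal$ one concludes $B_\infty\in\pi(\Acal)'=\Mcal'$. The main delicate point is the two successive subnet extractions, which is unavoidable because only \emph{separate} WOT-continuity of multiplication is available: this is precisely why the argument must be phrased in terms of cluster points of the net $(B_t)_t$ itself, rather than via any uniform estimate on the commutator $[B_t,X']$.
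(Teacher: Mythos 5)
Your argument is correct. Note that the paper itself states this lemma without proof, citing Borchers [Lemma IV.4.2], so there is no in-text proof to compare against; your route --- reduce to local $A$ by norm density, use WOT-compactness of bounded balls to pass to a cluster point $B_\infty$ of $(\pi(\updelta_{ta}(A)))_t$, show $B_\infty\in\Mcal\cap\Mcal'$ via locality together with the quadratic dominance of $t^2\eta(a,a)<0$, and conclude $[B_\infty,X']=0$ since $X'\in(\Mcal\cap\Mcal')'$ --- is the standard argument and is exactly in the spirit of the cluster-point machinery the paper deploys in Lemmata \ref{lem:mixing-limit} and \ref{lem:equality}. All the delicate points (separate WOT-continuity of multiplication, cofinality of the subnet beyond the threshold $T(O_1)$, uniqueness of cluster points forcing convergence) are handled correctly.
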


\subsection{Disjointness of primary, mixing states}
\label{sec:disjoint-primary}

Consider a quantum field theory on $(1+d)$-dimensional Minkowski spacetime $\M=\R^{1,d}$ ($d \geq 1$), given by a family $\{\Acal(O)\}_{O\subset\M}$ of C$^\ast$-algebras satisfying (A)--(C) as introduced in Section \ref{sec:prelim-aqft}, with quasi-local algebra $\Acal$. We want to compare states that are invariant according to different inertial observers. Each observer measures time relative to the respective inertial rest frame of reference (Lorentz frame). The worldline of an inertial observer located at some point $x\in\M$ at time $0$ is given by $t\mapsto x + t\frameu$, where $\frameu$ is a future-directed timelike unit vector specifying the time direction of the observer's reference frame, which we also call $\frameu$ to keep the notation simple. Time evolution by the amount $t\in\R$ relative to $\frameu$ is therefore described by spacetime translation with $t\frameu$. By assumption (C) there is a one-parameter group of $^\ast$-automorphisms $\{\updelta_{t\frameu}\}_{t\in\R}$ on $\Acal$ describing the time evolution relative to $\frameu$ for observables.\medskip

Let $\frameu$ and $\framew$ be two linear independent future-directed timelike unit vectors, specifying the time directions of two inertial reference frames that are related by a non-trivial Lorentz boost. (Without loss of generality one may realize these reference frames using the isomorphism $\M\cong\R^{1+d}$ with coordinates $(x_0 , x_1 , x_2 , \ldots , x_d)$ on $\R^{1+d}$ adapted to the frame $\frameu$, such that $\frameu=(1,0,0,\ldots,0)$ and $\framew=(1-v_{\mathrm{rel}}^2)^{-1/2}(1,-v_{\mathrm{rel}},0,\ldots,0)$ for relative velocity $-1<v_{\mathrm{rel}}<1$, $v_{\mathrm{rel}} \neq 0$, along the $x_1$-axis.) The time translations relative to $\frameu$ and $\framew$ are represented by commuting one-parameter groups $\alpha^{(\frameu)} = \{\alpha_t^{(\frameu)}\}_{t\in\R}$ and $\alpha^{(\framew)} = \{\alpha_t^{(\framew)}\}_{t\in\R}$ of $^\ast$-automorphisms
\begin{gather*}
	\alpha_t^{(\frameu)} = \updelta_{t\frameu} \, , \quad \alpha_t^{(\framew)} = \updelta_{t\framew}
\end{gather*}
on $\Acal$, respectively. The proof of the following theorem, which provides conditions for the disjointness of two distinct invariant states on the quantum dynamical systems $(\Acal,\alpha^{(\frameu)})$ and $(\Acal,\alpha^{(\framew)})$, relies on the three lemmata presented in the previous section.

\begin{theorem}[Disjointness of primary, mixing states]
	\label{thm:primary-disjoint}
	Let $\sigma,\omega$ be two distinct states on $\Acal$ that are invariant under the commuting one-parameter $^\ast$-automorphism groups $\alpha^{(\frameu)},\alpha^{(\framew)}$, respectively. Assume that $\sigma,\omega$ are primary, $(\Acal,\alpha^{(\frameu)},\sigma)$ and $(\Acal,\alpha^{(\framew)},\omega)$ have the mixing property, and the GNS vectors of $\sigma,\omega$ are separating for the respective induced von Neumann algebras. Then $\sigma$ and $\omega$ are disjoint.
\end{theorem}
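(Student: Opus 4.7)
The plan is to argue by contradiction using the dichotomy (cited earlier in the excerpt from Kadison-Ringrose) that two primary states are either disjoint or quasi-equivalent. Supposing $\sigma$ and $\omega$ are not disjoint, one obtains a $^\ast$-isomorphism $\Phi:\Mcal_\sigma\to\Mcal_\omega$ with $\Phi\circ\pi_\sigma=\pi_\omega$ and a unique normal extension $\tilde\sigma$ of $\sigma$ to $\Mcal_\omega$ with $\sigma=\tilde\sigma\circ\pi_\omega$. I will then invoke Lemma \ref{lem:equality} on $\Mcal=\Mcal_\omega$ with $\Ncal=\pi_\omega(\Acal)$, which satisfies $\Ncal''=\Mcal_\omega$ by definition, to deduce $\tilde\sigma=\langle\Omega_\omega,(\slot)\Omega_\omega\rangle$ as normal states on $\Mcal_\omega$, hence $\sigma=\omega$ upon restriction to $\pi_\omega(\Acal)$, contradicting $\sigma\neq\omega$.

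First I need to produce two commuting one-parameter W$^\ast$-dynamics on $\Mcal_\omega$. Proposition \ref{prop:weak-Liouvillean} applied to the $\alpha^{(\framew)}$-invariant state $\omega$ yields $\gamma^{(\framew)}$ with $\gamma^{(\framew)}_t\circ\pi_\omega=\pi_\omega\circ\alpha^{(\framew)}_t$; transporting the analogous dynamics on $\Mcal_\sigma$ (which exists by the $\alpha^{(\frameu)}$-invariance of $\sigma$) across the W$^\ast$-isomorphism $\Phi$ gives $\tilde\gamma^{(\frameu)}$ on $\Mcal_\omega$ with $\tilde\gamma^{(\frameu)}_t\circ\pi_\omega=\pi_\omega\circ\alpha^{(\frameu)}_t$. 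These two groups commute on the ultraweakly dense subalgebra $\pi_\omega(\Acal)$ because $\alpha^{(\frameu)}$ and $\alpha^{(\framew)}$ commute on $\Acal$, and the commutation passes to $\Mcal_\omega$ by the automatic ultraweak continuity of W$^\ast$-automorphisms. Both states appear as vector states on $\Mcal_\omega$: $\omega=\langle\Omega_\omega,(\slot)\Omega_\omega\rangle$ trivially, and since $\Omega_\omega$ is separating for $\Mcal_\omega$ (hence cyclic for $\Mcal_\omega'$) the natural-cone representation of normal states yields a vector $\Psi_\sigma\in\Hcal_\omega$ with $\tilde\sigma=\langle\Psi_\sigma,(\slot)\Psi_\sigma\rangle$.

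The core of the argument is verifying the three cluster-point hypotheses of Lemma \ref{lem:equality} on $\Ncal=\pi_\omega(\Acal)$. For $X=\pi_\omega(A)$ with $A\in\Acal$, the net $(\gamma^{(\framew)}_t(\pi_\omega(A)))_{t\in\R}$ converges in weak operator topology to $\omega(A)\mathds{1}$ by Lemma \ref{lem:mixing-limit} applied to $\omega$. The net $(\tilde\gamma^{(\frameu)}_t(\pi_\omega(A)))_{t\in\R}$ converges to $\sigma(A)\mathds{1}$: Lemma \ref{lem:mixing-limit} in the GNS space of $\sigma$ gives the corresponding weak operator (equivalently, ultraweak) limit in $\Mcal_\sigma$, and $\Phi$, being a $^\ast$-isomorphism of von Neumann algebras, is ultraweakly bicontinuous, so the transported net in $\Mcal_\omega$ has the same scalar limit. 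The hard part, and the place where the inertial-frame geometry enters decisively, is the third condition on $(\gamma^{(\framew)}_{-t}\tilde\gamma^{(\frameu)}_t(\pi_\omega(A)))_{t>0}$: this equals $(\pi_\omega(\updelta_{t(\frameu-\framew)}(A)))_{t>0}$, and the key geometric observation is that $\frameu-\framew$ is spacelike. Indeed, for linearly independent future-directed timelike unit vectors the reverse Cauchy-Schwarz inequality gives $\langle\frameu,\framew\rangle>1$, whence $\langle\frameu-\framew,\frameu-\framew\rangle=2(1-\langle\frameu,\framew\rangle)<0$ in the $(+,-,\ldots,-)$ signature. Lemma \ref{lem:spacelike-abelian} then applies with the spacelike vector $\frameu-\framew$, and the primarity of $\omega$ forces $(\Mcal_\omega\cap\Mcal_\omega')'=(\C\mathds{1})'=\Bcal(\Hcal_\omega)$, so every weak operator cluster point of the norm-bounded net commutes with all of $\Bcal(\Hcal_\omega)$ and therefore lies in $\C\mathds{1}$, as required.

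With the three hypotheses in place, Lemma \ref{lem:equality} yields $\tilde\sigma=\omega$ on $\Mcal_\omega$ and hence $\sigma=\omega$, the desired contradiction. The two points I expect to require care are the transport of the mixing limit for $\sigma$ through the possibly non-spatial isomorphism $\Phi$ (handled by ultraweak bicontinuity of W$^\ast$-isomorphisms and the coincidence of ultraweak and weak operator topology on norm-bounded subsets) and the combined use of Lemma \ref{lem:spacelike-abelian} with primarity to upgrade asymptotic commutation with the commutant of the center to asymptotic scalarity; both of these reduce to features already packaged in the preparatory results.
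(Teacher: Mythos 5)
Your proof is correct and follows essentially the same route as the paper's: reduce to quasi-equivalence via primarity, verify the three cluster-point hypotheses of Lemma \ref{lem:equality} using Lemma \ref{lem:mixing-limit} for the two individual dynamics and Lemma \ref{lem:spacelike-abelian} plus primarity for the combined spacelike translation, and derive the contradiction $\sigma=\omega$. The only (harmless) cosmetic difference is that you realize $\tilde\sigma$ as a vector state via the natural cone and transport the mixing limit through the normal isomorphism $\Phi$, whereas the paper implements the quasi-equivalence by a unitary (using that the GNS vectors are cyclic and separating) and conjugates by it.
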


\begin{proof}
	Since $\sigma$ and $\omega$ are primary, they are either disjoint or quasi-equivalent (see \cite[Prop.\ 10.3.12 (ii)]{Kadison-Ringrose-II}). Suppose that $\sigma$ and $\omega$ are quasi-equivalent.\medskip 
	
	Let $(\pi_\sigma,\Hcal_\sigma,\Omega_\sigma)$ and $(\pi_\omega,\Hcal_\omega,\Omega_\omega)$ be the GNS representations of $\sigma$ and $\omega$, respectively, and denote the induced von Neumann algebras by $\Mcal_\sigma := \pi_\sigma(\Acal)''\subseteq\Bcal(\Hcal_\sigma)$ and $\Mcal_\omega := \pi_\omega(\Acal)''\subseteq\Bcal(\Hcal_\omega)$. Then there exists a $^\ast$-isomorphism $\Theta:\Mcal_\omega \to\Mcal_\sigma$ such that $\Theta\circ\pi_\omega = \pi_\sigma$ \cite[Thm.\ 2.4.26]{Bratteli-Robinson1}. By \cite[Thm.\ 7.2.9]{Kadison-Ringrose-II}, a $^\ast$-isomorphism between von Neumann algebras with cyclic and separating vectors is unitarily implemented, so quasi-equivalent states with separating GNS vectors are unitarily equivalent. Hence there is a unitary $U:\Hcal_\omega \to \Hcal_\sigma$ such that $\Theta = U(\slot)U^{-1}$ and $\pi_\sigma = U \pi_\omega (\slot)U^{-1}$. 
	
	Define the states 
	\begin{gather*}
		\tilde{\sigma} = \langle U^{-1} \Omega_\sigma , (\slot)U^{-1} \Omega_\sigma \rangle \, , \quad \tilde{\omega} = \langle\Omega_\omega , (\slot)\Omega_\omega \rangle
	\end{gather*}
	on $\Mcal_\omega$, which represent $\sigma$ and $\omega$ in the sense that $\sigma=\tilde{\sigma} \circ \pi_\omega$ and $\omega=\tilde{\omega} \circ \pi_\omega$ on $\Acal$. Since $U$ is unitary, $U^{-1} \Omega_\sigma$ is cyclic and separating for $\Mcal_\omega$, and the GNS representation of $\sigma$ is given by $(\pi_\omega , \Hcal_\omega , U^{-1} \Omega_\sigma )$.
	
	The induced one-parameter $^\ast$-automorphism groups $\gamma^{(\frameu)}$ and $\gamma^{(\framew)}$ on $\Mcal_\omega$ are defined by $\pi_\omega \circ \alpha_t^{(\frameu)} = \gamma_t^{(\frameu)} \circ \pi_\omega$ (by transporting the induced dynamics on $\Mcal_\sigma$ to $\Mcal_\omega$ via $\Theta$, or by directly applying Proposition \ref{prop:weak-Liouvillean}) and $\pi_\omega \circ \alpha_t^{(\framew)} = \gamma_t^{(\framew)} \circ \pi_\omega$. They commute, i.e.\ $\gamma_t^{(\frameu)} \gamma_s^{(\framew)} = \gamma_s^{(\framew)} \gamma_t^{(\frameu)}$ for all $s,t\in\R$, because $\alpha^{(\frameu)}$ and $\alpha^{(\framew)}$ commute, and the states $\tilde{\sigma}$ and $\tilde{\omega}$ are invariant under $\gamma^{(\frameu)}$ and $\gamma^{(\framew)}$, respectively.\medskip
	
	Let $X=\pi_\omega(A)\in\pi_\omega(\Acal)$. Then $\gamma_t^{(\frameu)} (X) = \pi_\omega(\alpha_t^{(\frameu)} (A))$ and $\gamma_t^{(\framew)} (X) = \pi_\omega(\alpha_t^{(\framew)} (A))$, and since $\sigma$ and $\omega$ are mixing and have separating GNS vectors, Lemma \ref{lem:mixing-limit} implies that $\lim_{t\to\pm\infty} \gamma_t^{(\frameu)} (X) = \tilde{\sigma}(X)\mathds{1}$ and $\lim_{t\to\pm\infty} \gamma_t^{(\framew)} (X) = \tilde{\omega}(X)\mathds{1}$. Furthermore, as $\frameu$ and $\framew$ are linear independent future-directed timelike vectors, $\alpha_t^{(\frameu,\framew)} := \alpha_{-t}^{(\framew)} \alpha_t^{(\frameu)} = \updelta_{t(\frameu-\framew)}$, which is represented under $\pi_\omega$ by $\gamma_t^{(\frameu,\framew)} := \gamma_{-t}^{(\framew)} \gamma_t^{(\frameu)}$ on $\Mcal_\omega$, represents a translation along the spacelike vector $\frameu-\framew$ for $t\in\R$. The net $(\gamma_t^{(\frameu,\framew)} (X))_{t\in\R}$, where $\gamma_t^{(\frameu,\framew)} (X) = \pi_\omega(\alpha_t^{(\frameu,\framew)} (A))$, is norm-bounded, so the Banach-Alaoglu theorem provides the existence of a weak cluster point $Y$, which is the limit of a subnet $(\pi_\omega(\alpha_{f(i)}^{(\frameu,\framew)} (A)))_{i\in I}$ for some directed set $I$ and map $f:I\to\R$. Spacetime translations in spacelike directions act asymptotically abelian by Lemma \ref{lem:spacelike-abelian}, thus $\lim_{t\to\infty} [\pi_\omega(\alpha_t^{(\frameu,\framew)} (A)),X'] = 0$ in weak operator topology for every $X'\in(\Mcal_\omega \cap {\Mcal_\omega}')'=\Bcal(\Hcal_\omega)$, where we use the primarity of $\omega$. For the subnet this implies
	\begin{gather*}
		[Y,X']=\lim\limits_{i\in I} [\pi_\omega(\alpha_{f(i)}^{(\frameu,\framew)} (A)),X'] = 0 \quad \text{for all}\ X'\in\Bcal(\Hcal_\omega)
	\end{gather*}
	in weak operator topology, and therefore $Y \in \C\mathds{1}$. Hence, for every $X\in\pi_\omega(\Acal)$, the net $(\gamma_t^{(\frameu,\framew)} (X))_{t\in\R}$ has a cluster point in weak operator topology that lies in $\C\mathds{1}$.\medskip
	
	Now we can apply Lemma \ref{lem:equality} (for $\Mcal\equiv\Mcal_\omega$, $\Ncal\equiv\pi_\omega(\Acal)$, $\varphi\equiv\tilde{\sigma}$, $\psi\equiv\tilde{\omega}$, $\gamma^\varphi \equiv \gamma^{(\frameu)}$, $\gamma^{\psi} \equiv \gamma^{(\framew)}$), which gives $\sigma=\tilde{\sigma} \circ \pi_\omega = \tilde{\omega} \circ \pi_\omega = \omega$, a contradiction to the assumption that $\sigma,\omega$ are distinct states on $\Acal$.
\end{proof}

\subsection{Disjointness of KMS states}
\label{sec:disjoint-KMS}

The disjointness result proven in Theorem \ref{thm:primary-disjoint} can be applied, in particular, to two KMS states relative to different (linear independent) inertial time directions $\frameu,\framew$. By Propositions \ref{prop:cond-primary-KMS} \& \ref{prop:cond-mixing}, primarity and the mixing property are satisfied if the Liouvillean (Proposition \ref{prop:weak-Liouvillean}) of each state has absolutely continuous spectrum, except for a simple zero eigenvalue. The other assumptions put forward in Theorem \ref{thm:primary-disjoint} are automatically fulfilled: KMS states have separating GNS vectors for the respective induced von Neumann algebras (see Proposition \ref{prop:weak-Liouvillean}). Furthermore, one usually does not have to assume that the two states are distinct, because (under a certain assumption we discuss below) an inertial KMS state of a local quantum field theory with respect to time translations in some inertial frame cannot satisfy the KMS condition with respect to time translations in any other inertial frame moving with constant non-zero relative velocity, irrespective of the KMS parameters, as shown by Sewell in \cite[Prop.\ 3.1]{Sewell2008} (see also \cite{Sewell-rep2009}).

Sewell's result relies on the rather mild assumption that the automorphism group of spacetime translations acts non-trivially in the GNS representation of the state, i.e.\ if $\omega$ is a state on the quasi-local algebra $\Acal$ of the theory and $x$ is a non-zero vector, then there exists some $A\in\Acal$ and $t\in\R$ such that $\pi_\omega (\updelta_{tx} (A)) \neq \pi_\omega (A)$ (see \cite[Def.\ 3.1]{Sewell2008}). We note that non-triviality of translations in spacelike directions always holds for primary states, except in the trivial case that the induced von Neumann algebra is $\C\mathds{1}$. Indeed, let $\omega$ be a primary state on $\Acal$ and $a$ any spacelike vector, and suppose that $\pi_\omega (\updelta_{ta} (A)) = \pi_\omega (A)$ for all $A\in\Acal$ and all $t\in\R$. For fixed $A\in\Acal$, primarity of $\omega$ and Lemma \ref{lem:spacelike-abelian} imply that the net $(\pi_\omega (\updelta_{ta} (A)))_{t\in\R}$ has a cluster point in weak operator topology lying in $\C\mathds{1}$ (using the same argument as in the proof of Theorem \ref{thm:primary-disjoint}). By assumption the corresponding subnet converging to that cluster point is a constant net with value $\pi_\omega (A)$, hence $\pi_\omega (A)\in\C\mathds{1}$. As this holds for all $A\in\Acal$ one concludes that the translations in a spacelike direction in the GNS representation of a primary state $\omega$ can only be trivial if $\pi_\omega (\Acal)'' = \C\mathds{1}$.\medskip

We therefore arrive at the following corollary to Theorem \ref{thm:primary-disjoint}.
\begin{corollary}[Disjointness of inertial KMS states]
	\label{cor:disjointness-kms}
	Let $\sigma,\omega$ be two KMS states on $\Acal$ with respect to the inertial time translation $^\ast$-automorphism groups $\alpha^{(\frameu)},\alpha^{(\framew)}$ associated to two inertial frames with linear independent time directions specified by $\frameu,\framew$, respectively. If $\sigma$ and $\omega$ are primary, $(\Acal,\alpha^{(\frameu)},\sigma)$ and $(\Acal,\alpha^{(\framew)},\omega)$ satisfy the mixing property, and the states are distinct, then they are disjoint.
\end{corollary}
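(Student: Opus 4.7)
The corollary follows by direct application of Theorem~\ref{thm:primary-disjoint}, so the plan is simply to verify its hypotheses in the KMS setting. Four of the six conditions of Theorem~\ref{thm:primary-disjoint} -- primarity of both states, the mixing property for both triples, distinctness of $\sigma$ and $\omega$, and linear independence of $\frameu,\framew$ (which enters the proof of Theorem~\ref{thm:primary-disjoint} through the spacelike translation argument using Lemma~\ref{lem:spacelike-abelian}) -- are already part of the hypotheses of the corollary. The remaining points to check are (i) $\alpha^{(\frameu)}$-invariance of $\sigma$ and $\alpha^{(\framew)}$-invariance of $\omega$, (ii) commutativity of the two one-parameter groups, and (iii) that the GNS vectors $\Omega_\sigma,\Omega_\omega$ are separating for $\Mcal_\sigma,\Mcal_\omega$.

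Condition (i) is standard: taking $B=\mathds{1}$ in Definition~\ref{def:kms-state} gives a bounded analytic function $F^\beta_{\mathds{1},A}$ on the KMS strip with $F^\beta_{\mathds{1},A}(t)=\omega(\alpha_t(A))=F^\beta_{\mathds{1},A}(t+i\beta)$, so $\beta$-periodicity together with a Phragm\'en--Lindel\"of type argument forces $F^\beta_{\mathds{1},A}$ to be constant and hence $\omega\circ\alpha_t=\omega$; the same conclusion is independently contained in the second bullet of Proposition~\ref{prop:weak-Liouvillean}. Condition (ii) is immediate from axiom~(C): since $\{\updelta_x\}_{x\in\R^{1+d}}$ represents the abelian translation group $\Transl$,
\begin{gather*}
	\alpha^{(\frameu)}_t\,\alpha^{(\framew)}_s = \updelta_{t\frameu}\,\updelta_{s\framew} = \updelta_{t\frameu+s\framew} = \updelta_{s\framew}\,\updelta_{t\frameu} = \alpha^{(\framew)}_s\,\alpha^{(\frameu)}_t \, .
\end{gather*}
Condition (iii) is precisely the last assertion of the second bullet of Proposition~\ref{prop:weak-Liouvillean}, which states that for any KMS state $\omega$ the GNS vector $\Omega_\omega$ is cyclic and separating for $\Mcal_\omega=\pi_\omega(\Acal)''$.

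With all hypotheses verified, Theorem~\ref{thm:primary-disjoint} applies to $(\sigma,\omega)$ and yields their disjointness. There is no real obstacle to overcome here; the corollary is essentially a repackaging of Theorem~\ref{thm:primary-disjoint} in the KMS setting, exploiting the fact that the separating property of the GNS vector -- typically the most delicate assumption to check for a general invariant state -- comes for free from the KMS condition. What remains nontrivial in concrete models is the verification of primarity and mixing, and this is what makes Propositions~\ref{prop:cond-primary-KMS} and~\ref{prop:cond-mixing} essential: they reduce both to spectral conditions on the Liouvillean (simplicity of the zero eigenvalue, respectively absolute continuity of the spectrum away from zero), which can be explicitly verified for quasi-free KMS states of the free scalar field.
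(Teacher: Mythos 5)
Your proposal is correct and follows essentially the same route as the paper: the corollary is obtained by direct application of Theorem~\ref{thm:primary-disjoint}, with the separating property of the GNS vectors supplied by Proposition~\ref{prop:weak-Liouvillean}, invariance and commutativity being standard consequences of the KMS condition and of translation covariance (C), and distinctness taken as a hypothesis. Your additional remarks on the Phragm\'en--Lindel\"of argument for invariance and on the role of Propositions~\ref{prop:cond-primary-KMS} and~\ref{prop:cond-mixing} are accurate but not needed beyond what the paper already records.
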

One can see this as an extension of Sewell's result: Under the stated conditions, a KMS state with respect to $\frameu$ is not only non-KMS with respect to $\framew$, but it is not even ``close to equilibrium'' in the sense of being in the folium of a KMS state relative to $\framew$. We remark that the degree of ``non-equilibrium'' of a KMS state viewed from a moving inertial frame can be quantified by the maximum amount of work that can be performed in a cyclic process, which is formalized using the property of ``semipassivity'' \cite{Kuckert2002}.\medskip

As an example for the disjointness of inertial KMS states, we now consider the quasi-free KMS states of the free real scalar (Klein-Gordon) field of mass $m\geq 0$ on four-dimensional Minkowski spacetime $\M=\R^{1,3}$, given by a net $\{\Acal_F (O)\}_{O\subset\M}$ of C$^\ast$-algebras satisfying assumptions (A)--(C). The construction of the theory and its inertial KMS states in terms of one-particle structures is summarized in Appendix \ref{appendix:scalar-field}, and we use the notation introduced there. The KMS states are defined in Appendix \ref{appendix:scalar-field}, without loss of generality, using coordinates such that the time direction is $(1,\vec{0})$ (Eq.\ \eqref{eq:dynamics-KG}). The free scalar field on $\M$ also satisfies the Poincar\'{e} covariance condition (C') of Section \ref{sec:prelim-aqft}. By applying Lorentz boosts one obtains KMS states with respect to time translations in different inertial frames. If $\omega_{F,\beta}$ is the unique $\beta$-KMS state in a frame $\frameu$, and $\Uplambda\in\Lorentz$ is a boost represented on $\Acal$ by the $^\ast$-automorphism $\updelta^{\Uplambda}$, then $\omega_{F,\beta} \circ (\updelta^\Uplambda)^{-1}$ is the unique $\beta$-KMS state with respect to time translations $\updelta^{\Uplambda} \circ \updelta_{t\frameu} \circ (\updelta^\Uplambda)^{-1} = \updelta_{t\Uplambda\frameu}$ in the frame $\Uplambda\frameu$ (cf.\ Eq.\ \eqref{eq:boost-translation-transf} and Definition \ref{def:kms-state}). A consequence of the disjointness of inertial KMS states is that the automorphisms $\updelta^{\Uplambda}$ relating the KMS states of different inertial frames are not implemented as unitary operators in the GNS representation (see also Section \ref{sec:breaking}). Note also that the proof of disjointness only makes use of the translation covariance condition (C) (by virtue of Theorem \ref{thm:primary-disjoint}); Poincar\'{e} covariance (C') is not needed. 

As a side remark, we also mention that in relativistic field theory the natural concept for KMS states is the relativistic KMS condition, first introduced in \cite{Bros-Buchholz}, as it incorporates the time direction of the reference frame under consideration. The analyticity requirements in the relativistic KMS condition are stronger than in the regular KMS condition, but they are satisfied by the KMS states of the free scalar field \cite{Bros-Buchholz}.\medskip

Let us now check that the conditions of primarity and mixing are satisfied by the inertial KMS states of the scalar field. We do this by employing the criteria from Propositions \ref{prop:cond-primary-KMS} \& \ref{prop:cond-mixing}, which demonstrates the usage of these general results.
\begin{lemma}[KMS states of the free scalar field are primary and mixing]
	\label{lem:KMS-scalar-primary-mixing}
	Let $\omega$ be a quasi-free KMS state on the quasi-local algebra $\Acal_F$ of the free scalar field on four-dimensional Minkowski spacetime $\M$ with respect to the $^\ast$-automorphism group $\alpha$ of time translations in an inertial reference frame. Then $\omega$ is primary, and $(\Acal_F , \alpha , \omega)$ satisfies the mixing property.
\end{lemma}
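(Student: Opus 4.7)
The plan is to verify the spectral criteria of Propositions \ref{prop:cond-primary-KMS} and \ref{prop:cond-mixing}. Both conclusions follow once the Liouvillean $L$ of the W$^\ast$-dynamical system induced by $\omega$ (Proposition \ref{prop:weak-Liouvillean}) is shown to satisfy $\ker L = \C\Omega_\omega$ and to have purely absolutely continuous spectrum on $(\C\Omega_\omega)^\perp$. The continuity condition \eqref{eq:continuity-condition} needed for the existence of $L$ is automatic for KMS states, and the GNS vector $\Omega_\omega$ is separating for $\Mcal_\omega$ by Proposition \ref{prop:weak-Liouvillean}, so both criteria are applicable.

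To carry this out I would work in an explicit realisation of the GNS data. Since $\omega$ is quasi-free, the Araki-Woods construction, equivalently the \JP{} glued representation recalled in Appendix \ref{appendix:JP-glued-rep}, realises $\Hcal_\omega$ as a symmetric Fock space over a doubled one-particle Hilbert space, with $\Omega_\omega$ the corresponding Fock vacuum. Under this realisation the inertial time translations are implemented by the second quantisation $L = \dGamma(\ell)$ of a self-adjoint one-particle operator $\ell$. Choosing Minkowski coordinates adapted to the frame so that the one-particle Klein-Gordon Hamiltonian is multiplication by $\omega_m(\vec k) = \sqrt{|\vec k|^2 + m^2}$ on $L^2(\R^3,\Diff 3 k)$, the gluing absorbs the Bose-Einstein weight of the KMS two-point function into a unitary change of variables and exhibits $\ell$ as multiplication by the spectral parameter $u$, acting on $L^2(\Sigma,\diff u)\otimes L^2(S^2,\diff\Omega)$, with $\Sigma = \R$ in the massless case and $\Sigma = (-\infty,-m]\cup[m,\infty)$ in the massive case. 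In either case $\ell$ has purely Lebesgue absolutely continuous spectrum and no eigenvalues.

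The transfer to Fock space is then standard. On the $n$-particle sector $\dGamma(\ell)$ acts, after symmetrisation, as multiplication by $u_1 + \cdots + u_n$ on the $n$-fold tensor product, whose spectral measure is the $n$-fold convolution of an absolutely continuous measure on $\R$ and therefore remains absolutely continuous; in particular, no atom appears at the origin for any $n \geq 1$. Consequently the only eigenvector of $L$ is the Fock vacuum, giving $\ker L = \C\Omega_\omega$, while the spectrum on $(\C\Omega_\omega)^\perp$ is purely absolutely continuous. Propositions \ref{prop:cond-primary-KMS} and \ref{prop:cond-mixing} then immediately yield primarity of $\omega$ and the mixing property of $(\Acal_F,\alpha,\omega)$.

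The step I expect to require the most care is the explicit identification of $\ell$: one has to show that the Bose-Einstein thermal weights in the KMS two-point function can be absorbed into a unitary change of variables without generating any singular continuous or point spectrum, and, in the massless case, that the potential singularity of $\omega_m(\vec k)^{-1}$ at the origin does not produce an atom at $u = 0$. This is precisely the role of the \JP{} glued representation developed in Appendix \ref{appendix:JP-glued-rep}, so the proof reduces to invoking that construction and quoting standard facts about absolute continuity under convolution. Everything else, including the verification of the two hypotheses of Propositions \ref{prop:cond-primary-KMS} and \ref{prop:cond-mixing}, is then automatic.
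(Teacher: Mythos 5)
Your proposal is correct and follows essentially the same route as the paper: verify the hypotheses of Propositions \ref{prop:cond-primary-KMS} and \ref{prop:cond-mixing} by identifying the Liouvillean as the second quantization of a one-particle operator with purely absolutely continuous spectrum (the paper works directly with $L=\dGamma(\ham_0\oplus-\ham_0)$ on the Fock space over $\Kcal=\Gcal\oplus\Gcal$ and cites \cite[Thm.\ 5.3]{Arai} for the spectral transfer, rather than passing through the \JP{} glued picture or spelling out the convolution argument). The only caveat is that the glued representation of Appendix \ref{appendix:JP-glued-rep} is constructed there only for $m=0$, so for $m>0$ you should either extend the gluing to $\Sigma=(-\infty,-m]\cup[m,\infty)$ as you indicate or simply work in the unitarily equivalent direct-sum Fock representation, where the same spectral facts hold.
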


\begin{proof}
	As in Appendix \ref{appendix:scalar-field} we choose Minkowski coordinates $(t , \xu)$ on $\M\cong\R^4$ such that the time direction of the reference frame is given by $(1,\vec{0})$ with dynamics $\alpha_F$ (Eq.\ \eqref{eq:dynamics-KG}), and we use the setup presented in that appendix, with $\omega=\omega_{F,\beta}$ defined by Eq.\ \eqref{eq:field-KMS-state} (for any $\beta>0$), and $m\geq 0$. The Schr\"{o}dinger operator $-\Delta+m^2$ on $L^2 (\R^3,\Diff 3\xu)$ has the spectrum $[m^2,\infty)$ which is absolutely continuous (see, e.g., \cite{Teschl2014}), hence the self-adjoint operator $\ham_0 = (-\Delta+m^2)^{1/2}$ has purely absolutely continuous spectrum $[m,\infty)$ by spectral calculus. The one-particle Hamiltonian $\ham=\ham_0 \oplus -\ham_0$ on $\Kcal = L^2 (\R^3,\Diff 3\xu) \oplus L^2 (\R^3,\Diff 3\xu)$ (Eq.\ \eqref{eq:one-particle-Hamiltonian-KMS}) also has purely absolutely continuous spectrum. In the Fock space GNS representation of $\omega$, the Liouvillean is given by $L=\dGamma(\ham)$ (Eq.\ \eqref{eq:Liouv-Fock}), which, by \cite[Thm.\ 5.3]{Arai}, has the spectrum $\R$ that is absolutely continuous except for a simple (embedded) eigenvalue at $0$ corresponding to the Fock vacuum vector. Primarity and mixing hence follow from Propositions \ref{prop:cond-primary-KMS} \& \ref{prop:cond-mixing}. 
\end{proof}
A complementary elaboration of the mixing property in terms of one-particle structures is given in Appendix \ref{appendix:comments-mixing}.

\begin{theorem}[Disjointness of inertial KMS states of the free scalar field]
	\label{thm:KMS-disjoint}
	Let $\sigma,\omega$ be two quasi-free KMS states on the quasi-local algebra $\Acal_F$ of the free scalar field on four-dimensional Minkowski spacetime with respect to the inertial time translation $^\ast$-automorphism groups $\alpha^{(\frameu)},\alpha^{(\framew)}$ associated to two inertial frames with linear independent time directions specified by $\frameu,\framew$, respectively. Then $\sigma$ and $\omega$ are disjoint.
\end{theorem}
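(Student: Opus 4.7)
The plan is to verify the three hypotheses of Corollary \ref{cor:disjointness-kms} for the pair $\sigma, \omega$ and then simply invoke it. The hard analytical content has already been packaged into Lemma \ref{lem:KMS-scalar-primary-mixing} (primarity and mixing of the quasi-free inertial KMS states via the spectral structure of the Liouvillean $\dGamma(\ham)$) and into the corollary itself (which rests on Theorem \ref{thm:primary-disjoint} together with Sewell's non-KMS result), so the proof should amount to a bookkeeping assembly of these ingredients.

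First I would apply Lemma \ref{lem:KMS-scalar-primary-mixing} twice, once to $\sigma$ in Minkowski coordinates adapted to the frame $\frameu$ (so that $\alpha^{(\frameu)}$ takes the form of the time evolution $\alpha_F$ of Appendix \ref{appendix:scalar-field}) and once to $\omega$ in coordinates adapted to $\framew$. This yields that $\sigma$ is primary with $(\Acal_F,\alpha^{(\frameu)},\sigma)$ mixing, and that $\omega$ is primary with $(\Acal_F,\alpha^{(\framew)},\omega)$ mixing.

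Second, I must check that $\sigma \neq \omega$. Assuming for contradiction that they coincide, the common state would be a KMS state simultaneously with respect to two inertial time translation groups along the linear independent timelike directions $\frameu$ and $\framew$. Sewell's result \cite[Prop.~3.1]{Sewell2008}, cited right before Corollary \ref{cor:disjointness-kms}, forbids this under the mild condition that spacetime translations act non-trivially in the GNS representation. This non-triviality holds automatically by the argument recalled just before the corollary: for a primary state, trivial action of any spacelike translation would force the induced von Neumann algebra to reduce to $\C\mathds{1}$, which is manifestly not the case for the Fock GNS representation of a quasi-free KMS state of the free scalar field. Hence $\sigma$ and $\omega$ are distinct.

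All three hypotheses of Corollary \ref{cor:disjointness-kms} being verified, the corollary immediately delivers the disjointness of $\sigma$ and $\omega$. I do not anticipate a genuine obstacle; the only point requiring a brief comment is the automatic non-triviality of translations needed to apply Sewell's result, and this is essentially a one-line consequence of primarity together with Lemma \ref{lem:spacelike-abelian}, as already spelled out in the paragraph preceding the corollary.
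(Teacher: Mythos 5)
Your proposal follows exactly the paper's own proof: invoke Lemma \ref{lem:KMS-scalar-primary-mixing} for primarity and mixing of both states, use Sewell's result \cite[Prop.\ 3.1]{Sewell2008} (with non-triviality of translations secured by primarity) to conclude $\sigma\neq\omega$, and then apply Corollary \ref{cor:disjointness-kms}. The argument is correct and complete.
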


\begin{proof}
	Lemma \ref{lem:KMS-scalar-primary-mixing} shows that the KMS states $\sigma,\omega$ are primary and $(\Acal_F , \alpha^{(\frameu)} , \sigma)$ and $(\Acal_F , \alpha^{(\framew)} , \omega)$ satisfy the mixing property. Furthermore, $\sigma\neq\omega$ due to \cite[Prop.\ 3.1]{Sewell2008}, as explained at the beginning of this section. The statement therefore follows from Corollary \ref{cor:disjointness-kms}.
\end{proof}

\subsection{Lorentz boosts are not unitarily implemented}
\label{sec:breaking}

Temperature is a quantity to be properly assigned to a system in its rest frame, a conclusion that has been drawn from different perspectives in, e.g., \cite{Costa-Matsas1995,Landsberg-Matsas1996,Landsberg-Matsas2004,Sewell2008,Sewell-rep2009,Sewell2010} (cf.\ respective remarks in the early literature \cite{Cavalleri-Salgarelli,Callen-Horwitz}). The idea that the (instantaneous) rest frame has a distinguished role in the formulation of thermal properties also underlies the concept of local thermal equilibrium states \cite{Buchholz-Ojima-Roos}, which is used to determine the temperature in accelerated frames \cite{Buchholz-Solveen} (see Section \ref{sec:unruh}). By contrast, the Minkowski vacuum state is invariant under the Poincar\'{e} group (see Eq.\ \eqref{eq:vacuum-poincare-inv}) and therefore serves as a ground state in every inertial reference frame. 

These observations express the spontaneous breakdown of Lorentz symmetry in KMS states (cf.\ \cite[Sec.\ V.1.5]{Haag1996}, and \cite{Ojima2004,Ojima2005} for a general, abstract discussion of broken symmetries in quantum systems): The Fourier transformed form of the KMS condition reveals a Lorentz non-invariant factor, which implies that inertial KMS states are not invariant under Lorentz boosts; therefore, boosts cannot be implemented by unitary operators in the GNS representation such that the GNS vector is invariant, as boosts do not commute with inertial time translations \cite{Ojima1986}. The disjointness of primary, mixing states relative to different inertial reference frames allows to show that a unitary implementation of Lorentz boosts in these states is generally prohibited, regardless of the failure of invariance, i.e.\ even when permitting unitary implementations that do not necessarily leave the GNS vector invariant (cf.\ Definition \ref{def:unitary-imp}). A similar result for automorphisms commuting with the dynamics of a KMS state is presented in \cite{Narnhofer1977}.\medskip

We again consider a quantum field theory $\{\Acal(O)\}_{O\subset\M}$ on $\M$ satisfying (A)--(C) stated in Section \ref{sec:prelim}. In addition, we assume the stronger covariance condition (C') under the proper orthochronous Poincar\'{e} group. The proper orthochronous Lorentz group $\Lorentz$ is represented on the quasi-local algebra $\Acal$ by $^\ast$-automorphisms $\{\updelta^{\Uplambda}\}_{\Uplambda\in\Lorentz}$. We say that a state $\omega$ is Lorentz boost invariant if $\omega\circ\updelta^{\Uplambda} = \omega$ for all boosts $\Uplambda\in\Lorentz$, in which case every $\updelta^{\Uplambda}$ is implemented by a unique unitary operator $U^{\Uplambda}$ on the GNS Hilbert space that leaves the GNS vector invariant, i.e.\ $U^{\Uplambda} \Omega_\omega = \Omega_\omega$ (see \cite[Cor.\ 2.3.17]{Bratteli-Robinson1}). As mentioned above, \cite{Ojima1986} showed that inertial KMS states are never Lorentz boost invariant. The following result is a corollary of Theorem \ref{thm:primary-disjoint}.
\begin{corollary}[Lorentz boosts are not unitarily implemented]
	\label{cor:lorentz-breaking}
	$\Lorentz$ is not unitarily implemented in states on $\Acal$ that are primary, mixing with respect to time translations in some inertial frame, have a separating GNS vector, and are not Lorentz boost invariant. In particular, this applies to the quasi-free inertial KMS states of the free scalar field on four-dimensional Minkowski spacetime.
\end{corollary}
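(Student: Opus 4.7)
The plan is to derive the statement from Theorem \ref{thm:primary-disjoint} by a proof by contradiction: I will show that if every Lorentz boost were unitarily implemented, then $\omega$ would have to coincide with all its Lorentz-transformed copies, contradicting the non-invariance hypothesis. Let $\frameu$ denote the timelike unit vector specifying the inertial frame in which $\omega$ satisfies the mixing property. By hypothesis there exists a pure Lorentz boost $\Uplambda_0\in\Lorentz$ such that $\omega'\coloneqq\omega\circ\updelta^{\Uplambda_0}\neq\omega$. Since a nontrivial pure boost never fixes a timelike direction, $\framew\coloneqq\Uplambda_0^{-1}\frameu$ is a future-directed timelike unit vector linearly independent from $\frameu$.

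First I would check that $\omega'$ inherits from $\omega$ all the hypotheses of Theorem \ref{thm:primary-disjoint} with respect to the dynamics $\alpha^{(\framew)}=\{\updelta_{t\framew}\}_{t\in\R}$. Using \eqref{eq:boost-translation-transf},
\begin{gather*}
\omega'\circ\alpha^{(\framew)}_t \;=\; \omega\circ\updelta^{\Uplambda_0}\circ\updelta_{t\framew} \;=\; \omega\circ\updelta_{t\Uplambda_0\framew}\circ\updelta^{\Uplambda_0} \;=\; \omega\circ\updelta_{t\frameu}\circ\updelta^{\Uplambda_0} \;=\; \omega',
\end{gather*}
so $\omega'$ is $\alpha^{(\framew)}$-invariant. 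A GNS representation of $\omega'$ is $(\pi_\omega\circ\updelta^{\Uplambda_0},\Hcal_\omega,\Omega_\omega)$; since $\updelta^{\Uplambda_0}$ is a $^\ast$-automorphism, its image under $\pi_\omega$ is $\pi_\omega(\Acal)$, so $\Omega_\omega$ is cyclic and the induced von Neumann algebra equals $\Mcal_\omega=\pi_\omega(\Acal)''$. Hence $\omega'$ is primary and $\Omega_\omega$ is separating for $\Mcal_\omega$. The mixing property of $\omega'$ follows by the same manipulation as above:
\begin{gather*}
\omega'(B\alpha^{(\framew)}_t(A)) \;=\; \omega\bigl(\updelta^{\Uplambda_0}(B)\,\alpha^{(\frameu)}_t(\updelta^{\Uplambda_0}(A))\bigr) \;\xrightarrow{t\to\infty}\; \omega(\updelta^{\Uplambda_0}(A))\,\omega(\updelta^{\Uplambda_0}(B)) \;=\; \omega'(A)\omega'(B),
\end{gather*}
using mixing of $(\Acal,\alpha^{(\frameu)},\omega)$. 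Finally, $\alpha^{(\frameu)}$ and $\alpha^{(\framew)}$ commute since they are both spacetime translations.

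Next I would suppose, for contradiction, that $\updelta^{\Uplambda_0}$ is unitarily implemented in the GNS representation of $\omega$ by some unitary $U$ on $\Hcal_\omega$. Then
\begin{gather*}
\omega'(A) \;=\; \omega(\updelta^{\Uplambda_0}(A)) \;=\; \langle \Omega_\omega,U\pi_\omega(A)U^{-1}\Omega_\omega\rangle \;=\; \langle U^{-1}\Omega_\omega,\pi_\omega(A)\,U^{-1}\Omega_\omega\rangle,
\end{gather*}
which exhibits $\omega'$ as a vector state in $\Fol(\omega)$; applying the same argument with $\updelta^{\Uplambda_0^{-1}}$ (unitarily implemented by $U^{-1}$) gives $\omega\in\Fol(\omega')$. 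Thus $\omega$ and $\omega'$ would be quasi-equivalent. But $\omega$ and $\omega'$ are two distinct states meeting all hypotheses of Theorem \ref{thm:primary-disjoint} for the frames $\frameu$ and $\framew$, and so must be disjoint, contradicting quasi-equivalence. Hence $\updelta^{\Uplambda_0}$ is not unitarily implemented in $\omega$, which proves that $\Lorentz$ is not unitarily implemented in the sense of Definition \ref{def:unitary-imp}.

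For the application to the free scalar field on four-dimensional Minkowski spacetime, the quasi-free inertial KMS states are primary and mixing by Lemma \ref{lem:KMS-scalar-primary-mixing}, have separating GNS vectors by Proposition \ref{prop:weak-Liouvillean}, and fail to be Lorentz boost invariant by the result of Ojima \cite{Ojima1986} recalled at the beginning of this section, so the general statement applies. The one point that requires a small comment in the write-up is the choice of a pure (rather than rotational) boost $\Uplambda_0$ violating invariance, needed to ensure that $\framew$ is linearly independent from $\frameu$; this is where the physical meaning of ``Lorentz boost invariance'' is implicitly used, but otherwise the proof is a direct transposition of Theorem \ref{thm:primary-disjoint} through the automorphism $\updelta^{\Uplambda_0}$.
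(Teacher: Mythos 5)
Your proposal is correct and follows essentially the same route as the paper: assume a unitary implementation of the offending boost, verify that the boosted state $\omega\circ\updelta^{\Uplambda_0}$ inherits invariance, primarity, the separating GNS vector, and the mixing property with respect to the boosted time direction, and then contradict the disjointness guaranteed by Theorem \ref{thm:primary-disjoint} with the quasi-equivalence forced by the unitary. The only (immaterial) difference is that you exhibit the GNS representation of the boosted state as $(\pi_\omega\circ\updelta^{\Uplambda_0},\Hcal_\omega,\Omega_\omega)$ independently of the unitary, whereas the paper reads it off directly as $(\pi_\omega,\Hcal_\omega,(U^{\Uplambda})^{-1}\Omega_\omega)$.
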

\begin{proof}
	Let $\omega$ be a state on $\mathcal{A}$ that is not Lorentz boost invariant, with GNS representation $(\pi_\omega , \Hcal_\omega , \Omega_\omega)$ and separating $\Omega_\omega$. Assume that $\omega$ is primary, invariant under the one-parameter group $\alpha^{(\framew)} = \{\alpha_t^{(\framew)}\}_{t\in\R}$ of $^\ast$-automorphisms given by $\alpha_t^{(\framew)} = \updelta_{t\framew}$ for some inertial frame with time direction (future-directed timelike unit vector) $\framew$, and $(\Acal,\alpha^{(\framew)},\omega)$ is mixing. Let $\Uplambda\in\Lorentz$ be a Lorentz boost between the (linear independent) time directions $\frameu=\Uplambda^{-1} \framew$ and $\framew$ such that $\omega\circ\updelta^{\Uplambda} \neq \omega$. Suppose that $\Lorentz$ is unitarily implemented in $\omega$. Then there exists a unitary operator $U^\Uplambda$ on $\Hcal_\omega$ such that $\pi_\omega (\updelta^{\Uplambda} (A)) = U^\Uplambda \pi_\omega (A) (U^{\Uplambda})^{-1}$ for all $A\in\Acal$. The state $\sigma:=\omega\circ\updelta^{\Uplambda}$ on $\Acal$ is invariant under $\alpha^{(\frameu)} = \{\alpha_t^{(\frameu)}\}_{t\in\R}$ for $\alpha_t^{(\frameu)} = \updelta_{t\frameu}$ and has the GNS representation $(\pi_\omega , \Hcal_\omega , (U^{\Uplambda})^{-1} \Omega_\omega)$, thus $\sigma$ is also primary with separating GNS vector. The mixing property of $(\Acal,\alpha^{(\framew)},\omega)$ implies that $(\Acal,\alpha^{(\frameu)},\sigma)$ is mixing, for if $A,B\in\Acal$ we have
	\begin{gather*}
		\lim\limits_{t\to\infty} \sigma(B\alpha_t^{(\frameu)} (A)) = \lim\limits_{t\to\infty} \omega\left(\updelta^{\Uplambda} (B) \alpha_t^{(\framew)}  (\updelta^{\Uplambda} (A))\right) = \omega(\updelta^{\Uplambda} (A)) \omega(\updelta^{\Uplambda} (B)) \equiv \sigma(A)\sigma(B) \, .
	\end{gather*}
	Under these conditions Theorem \ref{thm:primary-disjoint} tells us that $\sigma$ and $\omega$ have to be disjoint. But $\sigma$ is unitarily equivalent (in particular, quasi-equivalent) to $\omega$ by construction, which proves the claim. The last sentence of the corollary follows from Theorem \ref{thm:KMS-disjoint}.
\end{proof}

\section{Disjointness: Discussion and review of literature}
\label{sec:discussion-result}

In this section some consequences of our results are presented and related literature is reviewed.

\subsection{Comments on disjointness}

\begin{itemize}[leftmargin=*]
	\item Theorem \ref{thm:primary-disjoint} establishes a superselection rule in that inertial reference frames partition the theory into different sectors given by quasi-equivalence classes (folia) of primary, mixing states (cf.\ \cite{Ojima2004,Ojima2005}).
	\item The disjointness of primary, mixing states is a global (large-scale) property of states on the quasi-local algebra of the theory. However, if these states are quasi-free Hadamard states \cite{Radzikowski1996,Khavkine-Moretti} (such as the inertial KMS states of the free scalar field \cite{Sahlmann-Verch}), then they are quasi-equivalent when restricted to local observable algebras associated to open, relatively compact regions \cite{Verch1994,DAntoni-Hollands}.
	\item The disjointness result for KMS states is independent of the KMS parameters. Even if the two KMS states in Corollary \ref{cor:disjointness-kms} or Theorem \ref{thm:KMS-disjoint} have the same KMS parameter $\beta<\infty$ (inverse temperature relative to the respective frame), they are disjoint. This is no longer the case when taking the limit $\beta\to\infty$ in the weak-$^\ast$ sense, since the KMS states in any frame approach the Poincar\'{e}-invariant Minkowski vacuum state $\omega_{\M}$:
	\begin{gather}
		\label{eq:vacuum-poincare-inv}
		\omega_{\M} \circ \updelta_{(x,\Uplambda)} = \omega_{\M} \quad \text{for all}\ (x,\Uplambda)\in\Poincare
	\end{gather}
	The Minkowski vacuum is therefore a ground state for every inertial observer \cite{Haag1996} (it is the unique quasi-free pure state that is $\Poincare$-invariant, see \cite[Prop.\ 5.2.33]{Khavkine-Moretti}).
	\item Complementary to the disjointness with respect to different inertial reference frames (irrespective of the KMS parameter), two KMS states with respect to the same dynamics but with different KMS parameters are disjoint as well, provided the von Neumann algebra induced by one of the states is of type $\mathrm{III}$ \cite{Takesaki1970-disj} (see also \cite[Thm.\ 5.3.35]{Bratteli-Robinson2}). The type $\mathrm{III}$ property (see \cite[Sec.\ V.2.4]{Haag1996} and \cite{Sorce-type-vN}, and references cited there) is typically fulfilled by the local and global von Neumann algebras induced by states of a local quantum field \cite{Yngvason2005}. For the inertial KMS states (of finite temperature) of the free scalar field on Minkowski spacetime the type $\mathrm{III}$ property of the induced von Neumann algebra can be obtained from the characterization proven in \cite{Hugenholtz-factor}, using that the von Neumann algebra is a factor (Lemma \ref{lem:KMS-scalar-primary-mixing}). (From the spectral properties of the Liouvillean one can even prove that the type is $\mathrm{III}_1$ using the classification of factors due to Connes \cite{Connes1973}, see \cite[Thm.\ 1.5]{kay-double-wedge}.) Some other results on the disjointness of states on a single dynamical system are given in \cite{Mueller-Herold} (disjointness of primary $\beta$-KMS states with different chemical potentials for every $\beta$), \cite[Thm.\ 5.3.30]{Bratteli-Robinson2} (disjointness of distinct primary $\beta$-KMS states with respect to strongly continuous dynamics for every $\beta$), and \cite[Thm.\ 4.3.19]{Bratteli-Robinson1} (disjointness of distinct ``centrally ergodic'' states with respect to a representation of a group by $^\ast$-automorphisms). 
	\item Corollary \ref{cor:lorentz-breaking} can be compared to a similar situation in supersymmetric theories. Given a (graded) C$^\ast$-dynamical system with an asymptotically abelian time evolution that commutes with a supersymmetry transformation, it has been shown in \cite{Buchholz-Ojima-supersymmetry,Buchholz-supersymmetry} that any time-invariant state for which supersymmetry can be implemented by operators on the GNS Hilbert space (called a ``super-regular'' state) has to be an excitation of a ground state. Consequently, supersymmetry is not implemented in the GNS representation of KMS states and their mixtures (the ``spontaneous collapse of supersymmetry'' in thermal states \cite{Buchholz-Ojima-supersymmetry}). Similarly, the Lorentz group is unitarily implemented in the Minkowski vacuum (and excitations or perturbations thereof, i.e.\ states in the folium), since it is $\Poincare$-invariant (Eq.\ \eqref{eq:vacuum-poincare-inv}). KMS states, on the other hand, single out a particular reference frame, which leads to the breakdown of Lorentz symmetry as well as supersymmetry.
\end{itemize}

\subsection{Comparison with the literature}
\label{sec:herman-takesaki}

A statement that is closely related to Corollary \ref{cor:disjointness-kms} has been previously considered by Herman \& Takesaki \cite{Herman-Takesaki1970}. 
\begin{proposition}[{\cite[Sec.\ §1,\ Thm.\ 4]{Herman-Takesaki1970}}]
	If $\sigma,\omega$ are two quasi-equivalent states on a C$^\ast$-algebra $\Acal$ satisfying the $(\beta=-1)$-KMS condition with respect to commuting one-parameter $^\ast$-automorphism groups, such that
	\begin{gather*}
		F_\omega := \{X \in \Mcal_\omega \, : \, \tau_t^\omega (X)=X \; \text{for all}\ t\in\R\}=\C\mathds{1} \, ,
	\end{gather*}
	where $\tau^\omega$ is the modular automorphism group associated to $\omega$ by Tomita-Takesaki theory, then $\sigma=\omega$.
\end{proposition}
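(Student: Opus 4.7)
The plan is to combine the reduction used in the proof of Theorem \ref{thm:primary-disjoint} with Tomita-Takesaki modular theory and Connes' cocycle derivative, exploiting $F_\omega=\C\mathds{1}$ to trivialize the Connes cocycle of $\tilde\sigma$ relative to $\tilde\omega$.

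First I would reduce to a common algebra. Since both $\sigma,\omega$ are KMS states, Proposition \ref{prop:weak-Liouvillean} shows that their GNS vectors are separating. Quasi-equivalence together with separating GNS vectors upgrades to unitary equivalence, so both states can be realized simultaneously as faithful normal vector states $\tilde\sigma,\tilde\omega$ on $\Mcal_\omega=\pi_\omega(\Acal)''$. The two commuting $^\ast$-automorphism groups on $\Acal$ then extend to commuting W$^\ast$-dynamics on $\Mcal_\omega$, and by Takesaki's uniqueness of the automorphism group for which a given faithful normal state fulfills the $\beta=-1$-KMS condition, these coincide with the modular groups $\tau^\sigma,\tau^\omega$. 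Since the center of $\Mcal_\omega$ is always contained in $F_\omega$, the hypothesis $F_\omega=\C\mathds{1}$ forces $\Mcal_\omega$ to be a factor.

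Next I would introduce Connes' cocycle derivative $u_t=(D\tilde\sigma:D\tilde\omega)_t$, a strongly continuous family of unitaries in $\Mcal_\omega$ satisfying $u_{s+t}=u_s\tau^\omega_s(u_t)$ and $\tau^\sigma_t=\mathrm{Ad}(u_t)\circ\tau^\omega_t$. Rewriting $\tau^\sigma_t\tau^\omega_s=\tau^\omega_s\tau^\sigma_t$ via these formulas and cancelling $\tau^\omega_{s+t}$ from both sides yields $u_t^\ast\tau^\omega_s(u_t)\in\Mcal_\omega\cap\Mcal_\omega'\subset F_\omega=\C\mathds{1}$. Hence there is a continuous character $s\mapsto e^{i\lambda(t)s}$ with $\tau^\omega_s(u_t)=e^{i\lambda(t)s}u_t$, and each $u_t$ is automatically entire analytic for $\tau^\omega$ with $\tau^\omega_z(u_t)=e^{i\lambda(t)z}u_t$.

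The crucial step forces $\lambda(t)=0$ via the KMS condition. Applying the $\beta=-1$-KMS identity $\tilde\omega(XY)=\tilde\omega(Y\tau^\omega_{-i}(X))$ (read off from Definition \ref{def:kms-state}) to $X=u_t$ and $Y=u_t^\ast$ yields
\begin{gather*}
	1=\tilde\omega(u_tu_t^\ast)=\tilde\omega\bigl(u_t^\ast\tau^\omega_{-i}(u_t)\bigr)=e^{\lambda(t)}\tilde\omega(u_t^\ast u_t)=e^{\lambda(t)}\,,
\end{gather*}
so $\lambda(t)=0$ and $u_t\in F_\omega=\C\mathds{1}$ for every $t$. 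The cocycle identity then collapses to $u_{s+t}=u_su_t$, and strong continuity yields $u_t=e^{i\mu t}\mathds{1}$ for some $\mu\in\R$; since conjugation by a scalar unitary is trivial, $\tau^\sigma=\tau^\omega$. Finally, two faithful normal states on a factor with the same modular automorphism group are related by a positive invertible element of the centralizer (the Radon-Nikodym part of Connes' theorem), here necessarily a positive scalar, and normalization forces $\tilde\sigma=\tilde\omega$, hence $\sigma=\omega$ on $\Acal$. The main obstacle is the first identification of $u_t^\ast\tau^\omega_s(u_t)$ as central: one must carefully track the consequences of the commutativity $\tau^\sigma_t\tau^\omega_s=\tau^\omega_s\tau^\sigma_t$ through the cocycle-defining relation in order to isolate an element lying in both $\Mcal_\omega$ and $\Mcal_\omega'$, after which the KMS calculation above and the collapse of the cocycle to a scalar proceed transparently.
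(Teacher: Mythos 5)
Your argument is correct, but it follows a genuinely different route from the one the paper sketches (which is Herman--Takesaki's original argument). The paper's summary proceeds by: (i) transporting both states and their dynamics to $\Mcal_\omega$, where the dynamics become commuting modular automorphism groups; (ii) invoking \cite[Sec.\ §1,\ Thm.\ 2]{Herman-Takesaki1970} to conclude that commutativity forces each state to be invariant under the other's modular group; and (iii) applying Takesaki's Radon--Nikodym theorem for invariant states \cite[Thm.\ 15.2]{Takesaki1970} to represent $\tilde{\sigma}$ by a vector $H\Omega_\omega$ with $H>0$ affiliated with $F_\omega=\C\mathds{1}$, whence $\tilde{\sigma}=\tilde{\omega}$. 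You replace steps (ii)--(iii) by a computation with the Connes cocycle $u_t=(D\tilde{\sigma}:D\tilde{\omega})_t$: commutativity of the modular groups places $u_t^\ast\tau^\omega_s(u_t)$ in the center (which is trivial because $Z(\Mcal_\omega)\subseteq F_\omega=\C\mathds{1}$, correctly noted), the $\beta=-1$ KMS identity applied to the eigenvector relation $\tau^\omega_z(u_t)=\e^{i\lambda(t)z}u_t$ kills the character, and the cocycle collapses to a scalar one-parameter group, after which the Pedersen--Takesaki/Connes Radon--Nikodym theorem with trivial centralizer finishes the argument. All the individual steps check out: the reduction to a common algebra via quasi-equivalence plus separating GNS vectors is the same as in the paper's proof of Theorem \ref{thm:primary-disjoint}, the identification of the given dynamics with the modular groups is Takesaki's uniqueness theorem (Proposition \ref{prop:modular-KMS}), and the KMS evaluation $1=\tilde{\omega}(u_tu_t^\ast)=\e^{\lambda(t)}$ is valid because $u_t$ is entire analytic with bounded orbit on the relevant strip. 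What your approach buys is a self-contained, modernized proof that makes the mechanism transparent (the Radon--Nikodym cocycle is forced to be scalar); what it costs is reliance on Connes' cocycle machinery, which postdates and is heavier than the 1970 tools the original proof uses. The only place where you could tighten the exposition is the final step: once $u_t=\e^{i\mu t}\mathds{1}$, the cocycle version of the Radon--Nikodym theorem already gives $\tilde{\sigma}=\e^{\mu}\tilde{\omega}$ directly, so normalization yields $\mu=0$ without needing to re-invoke the centralizer characterization.
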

From this result one can obtain the disjointness of primary, mixing KMS states of a quantum field theory with respect to time translations in different inertial frames, or even any commuting one-parameter $^\ast$-automorphism groups of dynamics. To retain the specific physical setup of our theorems, we again consider commuting inertial time translation $^\ast$-automorphism groups $\alpha^{(\frameu)},\alpha^{(\framew)}$ associated to distinct reference frames $\frameu,\framew$, respectively. Let $\sigma,\omega$ be quasi-equivalent, primary, mixing KMS states with respect to $\alpha^{(\frameu)},\alpha^{(\framew)}$, respectively. If $\sigma$ has the KMS parameter $\beta$, then it is a $(-1)$-KMS state with respect to the rescaled dynamics $\widetilde{\alpha}^{(\frameu)} :=\{\alpha_{-\beta t}^{(\frameu)}\}_{t\in\R}$. Similarly, $\omega$ can be viewed as $(-1)$-KMS state with respect to some rescaled dynamics $\widetilde{\alpha}^{(\framew)}$ of time translations in the frame $\framew$. The assumed quasi-equivalence allows to represent both $\sigma$ and $\omega$ and the dynamics $\widetilde{\alpha}^{(\frameu)},\widetilde{\alpha}^{(\framew)}$ on the induced von Neumann algebra $\Mcal_\omega$ of $\omega$ (cf.\ the proof of Theorem \ref{thm:primary-disjoint}). The KMS property implies that the GNS vector $\Omega_\omega$ is cyclic and separating, hence there are Tomita-Takesaki modular objects associated to the pair $(\Mcal_\omega , \Omega_\omega)$ such that the corresponding modular automorphism group $\tau^\omega$ equals the W$^\ast$-dynamics induced by $\widetilde{\alpha}^{(\framew)}$ (see Propositions \ref{prop:modular} \& \ref{prop:modular-KMS}). The mixing property for $(\Mcal_\omega , \tau^\omega , \langle\Omega_\omega , (\slot)\Omega_\omega \rangle)$ and Lemma \ref{lem:mixing-limit} show that $F_\omega \subseteq \C\mathds{1}$ for the fixed point subalgebra $F_\omega$ defined above, thus $F_\omega = \C\mathds{1}$ and \cite[Sec.\ §1,\ Thm.\ 4]{Herman-Takesaki1970} implies $\sigma=\omega$. Therefore, if $\sigma$ and $\omega$ are supposed to be distinct states, primarity implies that they have to be disjoint. For the free scalar field this gives an alternative proof of Theorem \ref{thm:KMS-disjoint}.\medskip

However, our results for KMS states in Section \ref{sec:disjoint-KMS} are derived from Theorem \ref{thm:primary-disjoint}, which shows the disjointness of any two primary states of an isotonous, local, translation-covariant algebraic quantum field theory with respect to inertial time translations, provided the mixing property is fulfilled. We point out sufficient conditions for these properties, which can be verified in concrete examples. Explicit properties of the modular automorphism group, or some relation between modular automorphisms and the dynamics, are not required. This sets our proof apart from the above argumentation based on \cite{Herman-Takesaki1970}, which applies to any commuting dynamics, but necessarily relies on the KMS property of the involved states. Let us explain this point in more detail by summarizing the proof of \cite[Sec.\ §1,\ Thm.\ 4]{Herman-Takesaki1970}:
\begin{itemize}[leftmargin=*]
	\item The dynamics of the KMS states $\sigma,\omega$ extend to the respective modular automorphism groups on the induced von Neumann algebras. Since the states are assumed to be quasi-equivalent, $\sigma$ can be implemented on the induced von Neumann algebra $\Mcal_\omega$ of $\omega$, and the modular automorphism group of $\sigma$ can be transported to $\Mcal_\omega$. By assumption, the resulting modular automorphism groups on $\Mcal_\omega$ commute.
	\item The commutativity implies the invariance of each state under the modular automorphism group of the other (see \cite[Sec.\ §1,\ Thm.\ 2]{Herman-Takesaki1970}). Hence there is a self-adjoint operator $H>0$ affiliated with $F_\omega$ so that $\sigma$ is represented on $\Mcal_\omega$ by the vector $H\Omega_\omega$ (see \cite[Thm.\ 15.2]{Takesaki1970}), which implies $\sigma=\omega$, as $F_\omega = \C\mathds{1}$.
\end{itemize}
The proof of \cite[Sec.\ §1,\ Thm.\ 4]{Herman-Takesaki1970} rests on the fact that for the two quasi-equivalent KMS states the dynamics are implemented on the von Neumann algebra level by commuting modular automorphism groups. This cannot be guaranteed if we instead start with quasi-equivalent, primary, mixing states $\sigma,\omega$ with separating GNS vectors, not necessarily satisfying the KMS condition, like in the proof of Theorem \ref{thm:primary-disjoint}. These states induce faithful normal states on $\Mcal_\omega$ which are KMS states with respect to the corresponding modular automorphism groups (cf.\ Proposition \ref{prop:modular}). But the induced W$^\ast$-dynamics for $\sigma,\omega$ on $\Mcal_\omega$ do not have to be related to their modular automorphism groups (as would be the case for KMS states by Proposition \ref{prop:modular-KMS}), and while the induced dynamics of each of the states on $\Mcal_\omega$ commutes with its respective modular automorphism group (see \cite[Sec.\ §1,\ Thm.\ 1]{Herman-Takesaki1970} or the general result \cite[Thm.\ 2.4]{Sirugue-Winnink1970}), and the induced dynamics commute by assumption, there is no reason for the two modular automorphism groups themselves to commute.\medskip 

Theorem \ref{thm:primary-disjoint} thereby provides a more general result on the disjointness with respect to different inertial reference frames, and we demonstrate the application to KMS states in the case of the scalar field. Our proof uses Lemma \ref{lem:equality}, which can be interpreted as an analog to the statement of Herman \& Takesaki. The ``tradeoff'' for not relying on the KMS condition is the assumed weak cluster point property for the combined dynamics $(\gamma^\psi_{-t} \gamma^\varphi_t (X))_{t\in\R}$, which is $(\gamma_t^{(\frameu,\framew)} (X))_{t\in\R}$ in the proof of Theorem \ref{thm:primary-disjoint}. However, as shown in the latter proof, this property is always satisfied for inertial time translations in a local translation-covariant quantum field theory due to spacelike asymptotic abelianness of the dynamics (Lemma \ref{lem:spacelike-abelian}). It might be worth investigating if the assumptions in Lemma \ref{lem:equality} can be relaxed or modified (e.g.\ by combining with the results in \cite{Herman-Takesaki1970}) to cover dynamics other than inertial time translations.

\section{(Non-)Thermalization of Unruh-DeWitt detectors}
\label{sec:detectors}

In this section we want to relate the disjointness result to asymptotic equilibrium properties of Unruh-DeWitt detectors \cite{Unruh1976,DeWitt}, i.e.\ non-relativistic two-level quantum systems, that interact with a quantum field while being in inertial or uniformly accelerating motion. Inspired by the approach in \cite{DeB-M} we cast the model of a detector coupled to and moving inertially relative to a thermal bath in the framework of open quantum systems. In Section \ref{sec:th-rte} the notion of ``thermalization'' for a quantum dynamical system is introduced. The definition is motivated by the common property of ``return to equilibrium'' (RTE), which we briefly review. An overview of the general structure of the coupled detector-field system is presented in Section \ref{sec:general-setup}. Using previous results from the literature we provide a sketch for the proof of return to equilibrium for the stationary detector in the inertial heat bath of a massless scalar field on four-dimensional Minkowski spacetime in Section \ref{sec:inertial-rte}. This serves as the basis for a comparison with the situation of a detector that moves inertially relative to the heat bath (Section \ref{sec:moving-inertial}), for which we argue that disjointness (Theorem \ref{thm:KMS-disjoint}) prohibits thermalization for a whole folium of initial detector-field states. Finally, we discuss the impact of our argument on the thermal interpretation of the Unruh effect in Section \ref{sec:unruh}.

\subsection{Thermalization property and return to equilibrium}
\label{sec:th-rte}

Consider a quantum dynamical system $(\Acal,\alpha)$, given by a C$^\ast$-algebra $\Acal$ and a one-parameter group $\alpha=\{\alpha_t\}_{t\in\R}$ of $^\ast$-automorphisms on it. Starting from a state representing the initial (time $t=0$) configuration of $(\Acal,\alpha)$, does the system eventually reach thermal equilibrium if one waits long enough? The following definition formalizes this property (see, e.g., \cite{JP2,BFS2000}, and further references in the subsequent discussion).

\begin{definition}[Thermalization and return to equilibrium]
	\label{def:th-rte-general}
	Let $\omega$ be a KMS state on a quantum dynamical system $(\Acal,\alpha)$, and $\omega'$ a state on $\Acal$ referred to as the \textit{initial state}. (If $(\Acal,\alpha)$ is a W$^\ast$-dynamical system, assume that $\omega,\omega'$ are normal.) Then $(\Acal,\alpha)$ \textit{thermalizes} to $\omega$, or $(\Acal,\alpha,\omega ; \omega')$ satisfies the \textit{thermalization property}, if
	\begin{gather}
		\label{eq:th}
		\lim\limits_{t\to\infty} \omega'(\alpha_t (A)) = \omega (A) \quad \text{for all}\ A \in \Acal \, .
	\end{gather}
	The quantum dynamical system $(\Acal,\alpha)$ \textit{returns} to the equilibrium state $\omega$, or $(\Acal,\alpha,\omega)$ satisfies the \textit{RTE (return to equilibrium) property}, if $(\Acal,\alpha,\omega ; \omega')$ satisfies the thermalization property for all initial states $\omega' \in \Fol(\omega)$. If the convergence in Eq.\ \eqref{eq:th} is replaced by ergodic mean convergence (a continuous generalization of convergence in the Ces\`{a}ro mean),
	\begin{gather*}
		\lim\limits_{t\to\infty} \frac{1}{t} \int_{0}^{t} \omega'(\alpha_{t'} (A)) \diff t' = \omega(A) \quad \text{for all}\ A \in \Acal \, ,
	\end{gather*}
	then one refers to thermalization and RTE \textit{in the ergodic mean}. 
\end{definition}
Thermalization and RTE are properties that are stronger than their ergodic mean counterparts. The subsequent discussion mainly focuses on these strong notions, although the interpretation and references also apply to the weaker notions in the ergodic mean.\medskip

The thermalization property for $(\Acal,\alpha,\omega ; \omega')$ means that if the initial state of the dynamical system $(\Acal,\alpha)$ is prescribed by $\omega'$, the expectation values of observables under the dynamics $\alpha$ approach the expectation values in the thermal equilibrium state $\omega$ at asymptotically late times. The considered systems encompass observable algebras that can be proper C$^\ast$-algebras as well as, more specifically, von Neumann algebras. We put no restrictions on the allowed family of initial states; in principle, $\omega'$ could be ``far from equilibrium'' (see the following sections). Whether thermalization to some KMS state is even conceivable in such situations depends on the given quantum dynamical system. In different approaches and formulations, thermalization problems have been considered for closed, finite-dimensional quantum systems, for which we refer to the reviews \cite{Eisert-Friesdorf-Gogolin2015,Gogolin-Eisert2016} with their comprehensive lists of references.\medskip 

In this work we are interested in a particular class of open quantum systems \cite{Breuer-Petruccione,Davies-open,Rivas-Huelga,Attal2006} that arise from coupling an Unruh-DeWitt detector (a ``small'', finite-dimensional quantum system) to a relativistic quantum field (an infinitely extended ``reservoir'') at positive temperature. This falls into the category of so-called Pauli-Fierz models (see \cite{Derezinski-Gerard,Derezinski-Jaksic2001,Derezinski-Jaksic-RTE,Moeller2014} and references therein, and \cite{Pauli-Fierz1938} for the historical origin). The coupling is introduced by a suitable perturbation of the dynamics of the uncoupled (combined, tensor product) system. One would expect that, at least for sufficiently weak and analytically ``regular'' coupling, the coupled system thermalizes to an equilibrium state at the ``reservoir'' temperature, as long as the initial state is sufficiently close to thermal equilibrium. This asymptotic dynamical stability is, in fact, a paradigmatic property of open quantum systems (cf.\ the ``Quasitheorem'' in the introductory section of \cite{Derezinski-Jaksic-RTE}), put on mathematical grounds by the RTE property: All states that are perturbations of (i.e.\ normal to) a KMS state (cf.\ below Definition \ref{def:folium}) will return to that equilibrium at late times. We do not attempt to review the extensive literature on the topic of RTE. A non-exhaustive list of works on RTE and proofs of the validity in various models and under different conditions is \cite{JP1,JP2,BFS2000,Derezinski-Jaksic-RTE,Merkli-Positive-Commutators,Froehlich-Merkli-Another-RTE,Merkli-condensate,DeB-M,Koenenberg2011,Koenenberg-Merkli-Song-2014,Merkli-Song}, with early works \cite{Lanford-Robinson1972,Robinson1973,Hume-Robinson1986} (on RTE and related problems) and overviews \cite{JP-thermal-relaxation,Derezinski-RTE-report2007} (see also \cite{Bratteli-Robinson2}).

Let us add some further remarks. The evolution of small quantum systems coupled to thermal reservoirs can be studied (under certain conditions) in terms of a Lindblad master equation for the reduced density matrix of the small system (see, e.g., \cite{Breuer-Petruccione}). Here we want to consider the thermalization of the full detector-field system under a sufficiently weak coupling. Some techniques and recent developments regarding the treatment of the (reduced) dynamical evolution of open quantum systems in different coupling strength regimes are discussed in \cite{Trushechkin2022}. We mention that, in contrast to Definition \ref{def:th-rte-general}, the term ``thermalization'' has been used in \cite{Merkli-Sigal-Berman} to describe the asymptotically thermal behavior of the small system alone, with the reservoir being traced out. Recently the RTE property has also been studied in the context of perturbative (interacting) quantum field theory \cite{Drago-Faldino-Pinamonti2018,Galanda-Pinamonti-Sangaletti2023}. \medskip 

Starting from Definition \ref{def:th-rte-general}, consider a quantum dynamical system $(\Acal,\alpha)$ and a KMS state $\omega$ (assumed to be normal if $(\Acal,\alpha)$ is a W$^\ast$-dynamical system). Let $(\Mcal_\omega , \gamma)$ be the induced W$^\ast$-dynamical system obtained from the GNS representation $(\pi_\omega , \Hcal_\omega , \Omega_\omega)$ of $\omega$, as sketched in Proposition \ref{prop:weak-Liouvillean}. There is a close relation between the RTE property for $(\Acal,\alpha,\omega)$ and spectral properties of the Liouvillean $L$ (Proposition \ref{prop:weak-Liouvillean}) generating the dynamics by $\gamma_t (X) = \e^{iLt} X \e^{-iLt}$ for $X\in\Mcal_\omega$, $t\in\R$. As shown in \cite[Thm.\ 4.2]{JP2}, RTE is equivalent to mixing (Definition \ref{def:mixing}), and Proposition \ref{prop:cond-mixing} therefore gives a criterion for RTE. As a reminder we recall it here, including the characterization of RTE in the ergodic mean. A short proof for the characterization of the RTE property is discussed in \cite[Lemma 3.6]{DeB-M}, and applications can be found in the mentioned references.
\begin{proposition}[Quantum Koopmanism for RTE {\cite{JP2,JP-thermal-relaxation,BFS2000,Pillet-Attal2006}}]
	\label{prop:quantum-koopman}
	$(\Acal,\alpha,\omega)$ satisfies the RTE property in the ergodic mean if and only if $\ker L=\C\Omega_\omega$, i.e.\ the Liouvillean $L$ has a simple eigenvalue at $0$ corresponding to the GNS vector $\Omega_\omega$ of $\omega$. If the spectrum of $L$ is absolutely continuous apart from a simple eigenvalue at $0$, then $(\Acal,\alpha,\omega)$ satisfies the RTE property.
\end{proposition}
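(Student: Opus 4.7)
The plan is to establish both statements by combining the mean ergodic theorem of von Neumann for the unitary group generated by the Liouvillean with the modular-theoretic structure guaranteed by the KMS condition. Write $P_{0}$ for the orthogonal projection onto $\ker L$ in $\Hcal_\omega$. Because $\omega$ is KMS, Proposition \ref{prop:weak-Liouvillean} tells us that $\Omega_\omega$ is cyclic and separating for $\Mcal_\omega$, that $\gamma_t(\slot)=\mathrm{e}^{iLt}(\slot)\mathrm{e}^{-iLt}$ is a W$^\ast$-dynamics, and that $L\Omega_\omega=0$; hence $\C\Omega_\omega\subseteq\ker L$, and the mean ergodic theorem gives $\tfrac{1}{t}\int_{0}^{t}\mathrm{e}^{iLs}\diff s\to P_{0}$ strongly. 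The target of both directions of the equivalence is to show that $P_{0}=|\Omega_\omega\rangle\langle\Omega_\omega|$ iff RTE in the ergodic mean holds.

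For the ``if'' direction ($\ker L=\C\Omega_\omega\Rightarrow$ RTE in ergodic mean), I would first reduce $\Fol(\omega)$ to the norm-dense subset of vector states $\omega'(\slot)=\langle X'\Omega_\omega,\pi_\omega(\slot)X'\Omega_\omega\rangle$ with $X'\in\Mcal_\omega'$ and $\|X'\Omega_\omega\|=1$, which is available by cyclicity of $\Omega_\omega$ for $\Mcal_\omega'$. Since $\Mcal_\omega'$ is $\gamma$-invariant (the unitary group $\mathrm{e}^{iLt}$ normalizes both $\Mcal_\omega$ and $\Mcal_\omega'$) and $L\Omega_\omega=0$, one can shuffle the unitaries through the $X'$ to obtain
\begin{gather*}
\omega'(\alpha_s(A))=\bigl\langle\Omega_\omega,(X')^\ast X'\,\mathrm{e}^{iLs}\pi_\omega(A)\Omega_\omega\bigr\rangle.
\end{gather*}
Integrating in $s$ and applying the mean ergodic theorem to the fixed vector $\pi_\omega(A)\Omega_\omega$ then yields $\tfrac{1}{t}\int_{0}^{t}\omega'(\alpha_s(A))\diff s\to\langle\Omega_\omega,(X')^\ast X'\omega(A)\Omega_\omega\rangle=\omega(A)$. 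The extension to all $\omega'\in\Fol(\omega)$ is by an $\varepsilon/3$ argument using the uniform bound $|\omega'(\alpha_s(A))|\le\|\omega'\|\,\|A\|$.

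For the ``only if'' direction I would argue by contraposition using the centralizer. Because $\omega$ is KMS for $\alpha$, the modular automorphism group of the induced state $\langle\Omega_\omega,(\slot)\Omega_\omega\rangle$ on $\Mcal_\omega$ coincides with $\gamma_{-t/\beta}$ (Proposition \ref{prop:modular-KMS}); hence $\ker L$ is the closure of $F_\omega\Omega_\omega$, where $F_\omega=\{Y\in\Mcal_\omega:\gamma_t(Y)=Y\text{ for all }t\}$ is the centralizer. If $\ker L\supsetneq\C\Omega_\omega$, one may choose a self-adjoint $Y\in F_\omega$ not proportional to $\mathds{1}$, and after a positive shift and normalization assume $Y\ge 0$, $\omega(Y)=1$. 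Then $\omega'(\slot):=\omega(Y\,\slot\,)$ is a normal state on $\Mcal_\omega$ in $\Fol(\omega)$, and $\gamma$-invariance follows from $\gamma_t(Y)=Y$ together with $\gamma$-invariance of $\omega$, so the ergodic mean of $\omega'\circ\alpha_s$ equals $\omega'$. RTE in the ergodic mean would force $\omega(YX)=\omega(X)$ for all $X\in\pi_\omega(\Acal)$ and, by normality, for all $X\in\Mcal_\omega$; faithfulness of $\omega$ on $\Mcal_\omega$ (guaranteed by the KMS condition) then yields $Y=\mathds{1}$, a contradiction. The final claim on the absolutely continuous spectral hypothesis is immediate from Proposition \ref{prop:cond-mixing}: that hypothesis gives $\mathrm{e}^{iLt}\to|\Omega_\omega\rangle\langle\Omega_\omega|$ in weak operator topology, hence the mixing property of $(\Acal,\alpha,\omega)$, which is equivalent to the RTE property for KMS states (\cite[Thm.\ 4.2]{JP2}).

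The main technical obstacle is the identification $\overline{F_\omega\Omega_\omega}=\ker L$, which is where the KMS condition actually does nontrivial work through modular theory; without it the ``only if'' direction would fail in general. A secondary point requiring attention is the density/extension step in the ``if'' direction, which is straightforward but has to use the predual topology rather than the weak-$\ast$ one to preserve the uniform control needed for passing to the ergodic mean limit.
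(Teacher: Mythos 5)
Your argument is essentially correct, but note that the paper does not prove this proposition itself: it is quoted from the literature (the text points to \cite[Lemma 3.6]{DeB-M} and \cite{JP2,Pillet-Attal2006} for the proof), so what you have produced is a reconstruction rather than a match to an in-paper argument. Your ``if'' direction (von Neumann mean ergodic theorem applied to $\tfrac1t\int_0^t\e^{iLs}\diff s\to P_0$, the commutant trick $\omega'(\alpha_s(A))=\langle\Omega_\omega,(X')^\ast X'\e^{iLs}\pi_\omega(A)\Omega_\omega\rangle$, and norm-density of the vector states $X'\Omega_\omega$, $X'\in\Mcal_\omega'$, in $\Fol(\omega)$) is the standard route and is sound. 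Your final claim via Proposition \ref{prop:cond-mixing} and the equivalence of mixing and RTE from \cite[Thm.\ 4.2]{JP2} is exactly how the paper frames it; alternatively the same commutant computation with $\e^{iLt}\to|\Omega_\omega\rangle\langle\Omega_\omega|$ weakly gives it directly.

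The one place where you owe more is the ``only if'' direction. The identification $\overline{F_\omega\Omega_\omega}=\ker L$ does \emph{not} follow from Proposition \ref{prop:modular-KMS} alone; the inclusion $\ker L\subseteq\overline{F_\omega\Omega_\omega}$ (which is the one you actually need, to extract $Y\in F_\omega$, $Y\notin\C\mathds{1}$, from $\ker L\supsetneq\C\Omega_\omega$) requires an extra argument, e.g.\ taking a weak-$^\ast$ cluster point $Y$ of the Ces\`{a}ro means $\tfrac1T\int_0^T\gamma_s(X)\diff s$ and checking both $\gamma_t(Y)=Y$ and $Y\Omega_\omega=P_0X\Omega_\omega$, then letting $X\Omega_\omega$ run over a dense set. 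With that supplied, the rest of your centralizer construction ($\omega'=\omega(\tilde Y\,\slot)$ is a normal, positive, $\gamma$-invariant state because $\tilde Y$ lies in the centralizer, and faithfulness forces $\tilde Y=\mathds{1}$) is correct, up to the harmless indexing slip $\gamma_{-t/\beta}$ versus $\gamma_{-\beta t}$. It is worth observing that the machinery the paper actually sets up, namely Proposition \ref{prop:cone} and the remark following it, gives the ``only if'' direction immediately: $\ker L\supsetneq\C\Omega_\omega$ yields a second normal $\gamma$-invariant state via the natural cone bijection, whose ergodic means are constant and different from $\omega$. That route avoids modular theory of the centralizer entirely and is the one implicit in the cited references; your version buys a concrete invariant perturbation $\omega(\tilde Y\,\slot)$ at the cost of the unproven density claim.
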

In particular, this yields a spectral characterization for the (ergodic mean) RTE property of a W$^\ast$-dynamical system $(\Mcal,\gamma)$ with a normal KMS state $\langle\Omega,(\slot)\Omega\rangle$ with cyclic and separating vector $\Omega$.\medskip

The definition of (ergodic mean) RTE, the equivalence between RTE and the mixing property, and the spectral characterization in Proposition \ref{prop:quantum-koopman} naturally extend to any invariant state that induces a W$^\ast$-dynamical system with a Liouvillean and a separating GNS vector \cite{JP2} (see the assumptions of Proposition \ref{prop:cond-mixing}). Since this section concerns KMS states, RTE will be viewed as ``return to \textit{thermal} equilibrium'', thereby justifying our restriction to KMS states in Definition \ref{def:th-rte-general}.\medskip

\subsection{General setup}
\label{sec:general-setup}

In the following sections we want to look at the asymptotic state behavior (thermalization and RTE) of an Unruh-DeWitt detector that interacts with a quantum field in different states. For this we need to define the corresponding quantum dynamical system, which should represent the combined system consisting of a finite-dimensional quantum system and a quantum field, with dynamics describing the time evolution of the combined system under some interaction coupling. For the sake of simplicity we will consider a two-level quantum system coupled to the free massless scalar field. The general setup, which is motivated by \cite{DeB-M}, is as follows:

\paragraph{Unruh-DeWitt detector system $(\Acal_D , \alpha_D)$} This will be the ``probe'' in our setup. The C$^\ast$-algebra is $\Acal_D = \Bcal(\C^2)$, and $\alpha_D = \{\alpha_{D,t}\}_{t\in\R}$ is the one-parameter group of $^\ast$-automorphisms 
\begin{gather*}
	\alpha_{D,t} = \e^{iH_D t} (\slot) \e^{-iH_D t} \, , \quad H_D := \diag(E,0)
\end{gather*}
for the Hamiltonian $H_D$ on $\C^2$ with $E>0$. This describes a two-level system (qubit) with ground state at zero energy and excited state at the detector excitation energy $E>0$. The parameter $t\in\R$ stands for the proper time of the detector, whose physical interpretation --- the stationary, timelike trajectory that the detector travels along --- is determined by the time evolution of the quantum field with which the detector system is combined. In our case, $t$ coincides with the time parameter in some inertial reference frame (see the next paragraph). Hence the data of this model exclusively encompasses the detector's internal degrees of freedom (cf.\ \cite[Sec.\ §3.1]{Takagi1986}, and the comment below \cite[Eq.\ (13)]{DeB-M}).

\paragraph{Quantum field system $(\Acal_F , \alpha_F)$} This will be the ``heat bath'' or ``reservoir'' in our setup. The C$^\ast$-algebra $\Acal_F$ is the quasi-local algebra of the free massless scalar quantum field on four-dimensional Minkowski spacetime $\M=\R^{1,3}$, obtained as the Weyl algebra over the classical phase space of Klein-Gordon theory, as discussed in Appendix \ref{appendix:scalar-field}. For a uniform notation we attach the index ``$F$'' to objects related to the field. The dynamics $\alpha_F = \{\alpha_{F,t}\}_{t\in\R}$ describing the field's inertial time evolution relative to some inertial reference frame can be defined by $^\ast$-automorphisms $\alpha_{F,t}$ on $\Acal_F$ as in Eq.\ \eqref{eq:dynamics-KG}, where the coordinates are chosen such that the time direction of the reference frame is $(1,\vec{0})$.

\paragraph{Coupled detector-field system} The C$^\ast$-algebra describing the combination of detector and quantum field is the tensor product $\Acal_D \otimes \Acal_F$. The dynamics of the uncoupled system, with detector and quantum field evolving independently without interaction between them, is given by $\alpha_D \otimes \alpha_F = \{\alpha_{D,t} \otimes \alpha_{F,t}\}_{t\in\R}$. The dynamics of the coupled system on $\Acal_D \otimes \Acal_F$ is introduced, in principle, by ``perturbing'' the uncoupled dynamics $\alpha_D \otimes \alpha_F$, parametrized by a coupling parameter $\lambda\in\R$ such that $\alpha_D \otimes \alpha_F$ is recovered when $\lambda\to 0$. The resulting coupled dynamics describes the time evolution of the detector-field system under the influence of an interaction between detector and quantum field. As we will discuss in the next section, the construction of the coupled dynamics can be performed rigorously in the GNS representation of a quasi-free reference state for a monopole interaction, i.e.\ a linear coupling of the detector's monopole moment to a quantum field operator smeared with some coupling test function.

\subsection{RTE of a detector at rest in a heat bath}
\label{sec:inertial-rte}

Suppose the Unruh-DeWitt detector interacts with the quantum field and follows the worldline $\R\ni t\mapsto t\frameu$, the time direction of an inertial reference frame specified by a future-directed timelike unit vector $\frameu$. Let the quantum field be in the quasi-free $\beta$-KMS state $\omega_{F,\beta}$ on $\Acal_F$ with respect to the time translation relative to $\frameu$, so the field is in thermal equilibrium relative to the rest frame of the detector. The definition of $\omega_{F,\beta}$ is given in Appendix \ref{appendix:scalar-field} and Appendix \ref{appendix:JP-glued-rep}, and as done there and also earlier we choose Minkowski coordinates on $\M\cong\R^4$ such that $\frameu=(1,\vec{0})$. One expects that, for some sufficiently ``benign'' coupling between the detector and the field, the coupled system asymptotically approaches a $\beta$-KMS state in the sense of Definition \ref{def:th-rte-general}. For the mathematical analysis it is beneficial to consider the W$^\ast$-dynamical systems induced by the GNS representation of certain reference states. The formal constructions are motivated by \cite{DeB-M,Froehlich-Merkli-Another-RTE} and follow the typical setup of Pauli-Fierz models (see also \cite{Merkli-LSO-2007} for an overview, and further references in the discussion below Proposition \ref{prop:inertial-RTE}). \medskip 

For $\beta>0$ and the detector Hamiltonian $H_D = \diag(E,0)$, $E>0$, let 
\begin{gather*}
	\omega_{D,\beta} := \frac{\tr(\e^{-\beta H_D} (\slot))}{\tr \e^{-\beta H_D}}
\end{gather*}
be the corresponding Gibbs equilibrium state at temperature $\beta^{-1}$, which is the unique $(\alpha_D,\beta)$-KMS state on $\Acal_D = \Bcal(\C^2)$. The GNS representation of $\omega_{D,\beta}$ can be given as $(\pi_D,\Hcal_D,\Omega_{D,\beta})$, where 
\begin{gather}
	\label{eq:detector-GNS}
	\Hcal_D = \C^2 \otimes \C^2 \, , \quad \pi_D = (\slot) \otimes \mathds{1}_2 \, , \quad
	\Omega_{D,\beta} = \frac{\e^{-\beta E/2} v_+ \otimes v_+ + v_- \otimes v_-}{\sqrt{1+\e^{-\beta E}}}
\end{gather}
for the identity matrix $\mathds{1}_2 := \diag(1,1)$ on $\C^2$ and the standard basis $v_+ = (1,0)^{\mathsf{T}}$, $v_- = (0,1)^{\mathsf{T}}$ of $\C^2$ (the eigenbasis of $H_D$).

For the GNS representation of $\omega_{F,\beta}$, $\beta>0$, the unique $(\alpha_F,\beta)$-KMS state on $\Acal_F$ defined in Appendix \ref{appendix:scalar-field}, we use the \JP\ glued representation discussed in Appendix \ref{appendix:JP-glued-rep}: The GNS representation $(\pi_{F,\beta} , \Fcal(\Hcal_F) , \Omega_F)$ is realized as Weyl operators on the symmetric Fock space over the one-particle Hilbert space $\Hcal_F := L^2 (\R \times S^2 , \diff s \diff\varOmega)$, with Fock vacuum $\Omega_F$ (Lemma \ref{lem:glued-GNS}). The von Neumann algebra induced by the GNS representation of the $\beta$-KMS state $\omega_{D,\beta} \otimes \omega_{F,\beta}$ on the uncoupled system $(\Acal_D \otimes \Acal_F , \alpha_D \otimes \alpha_F)$ hence is given by
\begin{gather}
	\label{eq:von-Neumann-detector-field}
	\Mcal_\beta := \pi_D (\Acal_D)'' \otimes \pi_{F,\beta} (\Acal_F)'' \subset \Bcal(\Hcal)
\end{gather}
for
\begin{gather*}
	\Hcal := \Hcal_D \otimes \Fcal(\Hcal_F) = \C^2 \otimes \C^2 \otimes \Fcal\left(L^2 (\R \times S^2 , \diff s \diff\varOmega)\right) \, .
\end{gather*}
The uncoupled dynamics $\alpha_D \otimes \alpha_F$ is represented on $\Mcal_\beta$ by a one-parameter group $\gamma_0$ of $^\ast$-automorphisms 
\begin{gather*}
	\gamma_{0,t} := \e^{iL_0 t} (\slot) \e^{-iL_0 t}
\end{gather*}
for $t\in\R$, generated by the sum $L_0$ of the Liouvilleans for the detector and field:
\begin{gather*}
	L_0 := L_D \otimes \mathds{1}_{\Fcal(\Hcal_F)} + \mathds{1}_{\Hcal_D} \otimes L_F \, , \quad L_D := H_D \otimes \mathds{1}_2 - \mathds{1}_2 \otimes H_D \, , \quad L_F = \dGamma(s)
\end{gather*}
The operator $\dGamma(s)$ is the second quantization of the self-adjoint operator of multiplication with the function $(s,\varOmega)\mapsto s$ on $\Hcal_F$ (see Lemma \ref{lem:glued-GNS}). The operator $L_0$ is self-adjoint on $\Hcal_D \otimes \dom(\dGamma(s))$ (see again Lemma \ref{lem:glued-GNS}) and satisfies
\begin{gather*}
	L_0 \Omega_{0,\beta} = 0 \, , \quad \Omega_{0,\beta} := \Omega_{D,\beta} \otimes \Omega_F \, ,
\end{gather*}
because $L_D \Omega_{D,\beta} = 0$ and $L_F \Omega_F = 0$. The W$^\ast$-dynamical system $(\Mcal_\beta , \gamma_0)$ with Liouvillean $L_0$ describes the uncoupled detector-field system in the GNS representation $(\pi_D \otimes \pi_{F,\beta} , \Hcal , \Omega_{0,\beta})$ of the $(\alpha_D \otimes \alpha_F , \beta)$-KMS state $\omega_{D,\beta} \otimes \omega_{F,\beta}$ (cf.\ Proposition \ref{prop:weak-Liouvillean}).\medskip

The dynamics of the coupled system is implemented by adding an interaction term to the Liouvillean $L_0$ of the uncoupled dynamics that encodes the specifics of the coupling between detector and quantum field. Here we consider a monopole coupling, which has been frequently used to study Unruh-DeWitt detectors in quantum field theory \cite{Birrell-Davies,Unruh-Wald1984,Takagi1986,DeB-M,Louko-Satz,Waiting-for-Unruh} and the interaction between toy atom systems and radiation fields \cite{Leggett1987,JP1,JP2,JP-thermal-relaxation,Merkli-Positive-Commutators,Froehlich-Merkli-Another-RTE,Froehlich-Merkli-Thermal-Ionization,Merkli-Sigal-Berman} (cf.\ the spin-boson model discussed later in this section). It is a linear interaction of the form
\begin{gather}
	\label{eq:monopole-coupling-I}
	I_\beta := G\otimes \mathds{1}_2 \otimes \bigl(a(h_\beta)+a^\ast (h_\beta)\bigr)
\end{gather}
for a self-adjoint operator $G\in\Bcal(\C^2)$ representing the monopole moment operator of the detector, and creation and annihilation operators $a^\ast (\slot) , a(\slot)$ on the Fock space $\Fcal(\Hcal_F)$ associated to the ``form factor'' $h_\beta \in \Hcal_F$ defined in Eq.\ \eqref{eq:hbeta} for a test function $h\in\Cinf(\M;\R)$, so $a(h_\beta)+a^\ast (h_\beta)$ is the field operator on $\Fcal(\Hcal_F)$ associated to $h_\beta$. The operator $I_\beta$ is unbounded but affiliated with the von Neumann algebra $\Mcal_\beta$. The operator $G$ is chosen to be a multiple of the first Pauli matrix (with respect to the eigenbasis of $H_D$). We remark that RTE cannot occur for interactions with diagonal $G$ due to the suppression of energy exchange (see \cite{Merkli-Sigal-Berman} for details and references). Let us also define
\begin{gather}
	\label{eq:monopole-coupling}
	V_\beta := I_\beta - JI_\beta J = G\otimes \mathds{1}_2 \otimes \bigl(a(h_\beta)+a^\ast (h_\beta)\bigr) - \mathds{1}_2 \otimes \overline{G} \otimes \bigl(a(\e^{-\beta s/2} h_\beta)+a^\ast (\e^{-\beta s/2} h_\beta)\bigr) \, ,
\end{gather}
where the equality follows from Eq.\ \eqref{eq:j-hbeta} and $\overline{G}\in\Bcal(\C^2)$ is obtained by complex conjugating the components of $G$ in the standard basis. The operator
\begin{gather}
	\label{eq:modular-conj}
	J=J_D \otimes J_F
\end{gather}
is the modular conjugation associated to $(\Mcal_\beta , \Omega_{0,\beta})$ (see Proposition \ref{prop:modular}), where $J_D$ is the modular conjugation associated to $(\pi_D (\Acal_D)'',\Omega_{D,\beta})$ given by $J_D (v_1 \otimes v_2) = \overline{v_2} \otimes \overline{v_1}$ on $\Hcal_D$ (and continuously and anti-linearly extended, with the bar denoting the complex conjugation of components in the standard basis of $\C^2$), and $J_F$ is the modular conjugation on $\Fcal(\Hcal_F)$ associated to $(\pi_{F,\beta} (\Acal_F)'',\Omega_F)$ defined in Eq.\ \eqref{eq:modular-conj-field-new}.\medskip

Let $\lambda\in\R$, which will represent the coupling parameter for the interaction between detector and field. Using the Glimm-Jaffe-Nelson commutator theorem \cite[Thm.\ X.37]{Reed-Simon-II} one proves that $L_0 + \lambda I_\beta$ and $L_0 + \lambda V_\beta$ are essentially self-adjoint on $\dom(L_0)\cap\dom(I_\beta)$ and $\dom(L_0)\cap\dom(I_\beta)\cap\dom(JI_\beta J)$, respectively (e.g., along the lines of \cite[Prop.\ 2.1, Lemma 3.2]{JP1}). General results on perturbations of Liouvilleans, which are presented in \cite[Thm.\ 3.3 \& 3.5]{DJP2003}, show that the self-adjoint closures $\widetilde{L}_{\lambda,\beta} := \overline{L_0 + \lambda I_\beta}$ and
\begin{gather}
	\label{eq:Liouv-coupled}
	L_{\lambda,\beta} := \overline{L_0 + \lambda V_\beta}
\end{gather}
generate a W$^\ast$-dynamics $\gamma_\lambda = \{\gamma_{\lambda,t}\}_{t\in\R}$ on $\Mcal_\beta$ via
\begin{gather}
	\label{eq:gamma-coupled}
	\gamma_{\lambda,t} := \e^{iL_{\lambda,\beta} t} (\slot) \e^{-iL_{\lambda,\beta} t} = \e^{i\widetilde{L}_{\lambda,\beta} t} (\slot) \e^{-i\widetilde{L}_{\lambda,\beta} t} \, .
\end{gather}
The pair $(\Mcal_\beta,\gamma_\lambda)$ is the W$^\ast$-dynamical system describing the coupled detector-field system, and one recovers the uncoupled system with Liouvillean $L_0$ for $\lambda=0$. Although $\widetilde{L}_{\lambda,\beta}$ and $L_{\lambda,\beta}$ generate the same dynamics, we will use $L_{\lambda,\beta}$ from now on, which is the (so-called ``standard'' \cite{DJP2003}) Liouvillean of $(\Mcal_\beta,\gamma_\lambda)$. This is closely related to the construction of a KMS state for the coupled dynamics $\gamma_\lambda$, as we will explain next. \medskip

One can show that $\Omega_{0,\beta} \in \dom(\e^{-\beta (L_0 + \lambda I_\beta) /2})$ for all $\lambda\in\R$ by a Dyson series expansion of the exponential and standard Fock space estimates (adapting \cite[Prop.\ 3.4]{DeB-M}, see also \cite[Sec.\ III. E]{BFS2000}, \cite[Eq.\ (95) ff.]{Merkli-LSO-2007}). Araki's perturbation theory of KMS states \cite{Araki1973,Bratteli-Robinson2,FNV,BFS2000}, which is developed for unbounded perturbations in \cite{DJP2003}, then allows to construct, for any $\lambda\in\R$ and $\beta>0$, a unique $(\gamma_\lambda , \beta)$-KMS state on $\Mcal_\beta$ given by
\begin{gather}
	\label{eq:coupled-KMS}
	\varphi_{\lambda,\beta} := \langle \Omega_{\lambda,\beta} , (\slot) \, \Omega_{\lambda,\beta} \rangle \, , \quad \Omega_{\lambda,\beta} := \frac{\e^{-\beta (L_0 + \lambda I_\beta) /2} \,  \Omega_{0,\beta}}{\|\e^{-\beta (L_0 + \lambda I_\beta) /2} \, \Omega_{0,\beta} \|} \in \Hcal \, .
\end{gather}
The vector $\Omega_{\lambda,\beta}$ is cyclic and separating for $\Mcal_\beta$ and approaches (in the norm) the GNS vector $\Omega_{0,\beta}$ of the $\beta$-KMS state $\omega_{D,\beta} \otimes \omega_{F,\beta}$ of the uncoupled system as $\lambda\to 0$ \cite[Thm.\ 5.5]{DJP2003}. In this sense, the KMS state for the coupled system is a perturbation of the uncoupled one, and these states are ``close'' to each other, as both are represented by normal states on $\Mcal_\beta$. 

The operator $L_{\lambda,\beta}$ generating $\gamma_\lambda$ (Eq.\ \eqref{eq:gamma-coupled}) has the property (see \cite[Sec.\ 5.6]{DJP2003} or \cite[Eq.\ (3.30) ff.]{Merkli-Quantum-Markovian})
\begin{gather*}
	L_{\lambda,\beta} \Omega_{\lambda,\beta} = 0 \, ,
\end{gather*}
which uniquely characterizes $L_{\lambda,\beta}$ as the Liouvillean of $(\Mcal_\beta,\gamma_\lambda)$, viewed as the W$^\ast$-dynamical system induced by $\varphi_{\lambda,\beta}$ (see Eq.\ \eqref{eq:Liouvillean} in Proposition \ref{prop:weak-Liouvillean}). We can therefore apply ``quantum Koopmanism'' (Proposition \ref{prop:quantum-koopman}) in order to study the return of the coupled system to the equilibrium state $\varphi_{\lambda,\beta}$, as discussed below. This is the motivation behind considering $L_{\lambda,\beta}$ instead of $\widetilde{L}_{\lambda,\beta} = \overline{L_0 + \lambda I_\beta}$, i.e.\ adding to $I_\beta$ the term $-JI_\beta J$ that is affiliated with the commutant ${\Mcal_\beta}'$ (cf.\ Eqs.\ \eqref{eq:monopole-coupling-I} \& \eqref{eq:monopole-coupling}). Moreover, the operator $L_{\lambda,\beta}$ is the natural generator of $\gamma_\lambda$ from the perspective of the KMS condition in terms of the modular objects associated to $(\Mcal_\beta,\Omega_{\lambda,\beta})$ (see Propositions \ref{prop:modular} \& \ref{prop:modular-KMS}, and cf.\ \cite[Eqs.\ (3.47)--(3.48)]{Kay-Wald} and \cite[Thm.\ 2.13]{DJP2003}): The modular operator is $\e^{-\beta L_{\lambda,\beta}}$ (see \cite[Thm.\ 5.5]{DJP2003}), and the modular conjugation is given by $J$ defined in Eq.\ \eqref{eq:modular-conj} and satisfies $JL_{\lambda,\beta} J = -L_{\lambda,\beta}$. (That $J$ from Eq.\ \eqref{eq:modular-conj} is the modular conjugation associated to $(\Mcal_\beta,\Omega_{\lambda,\beta})$ for every $\lambda\in\R$ follows from the fact that, by \cite[Thm.\ 5.5]{DJP2003}, the vector $\Omega_{\lambda,\beta}$ lies in the natural cone associated to $(\Mcal_\beta,\Omega_{0,\beta})$ as defined in Proposition \ref{prop:cone}, and the uniqueness of standard representations induced by faithful states with cyclic vector representatives shown in \cite[Prop.\ 2.5.30 (2)]{Bratteli-Robinson1}; see also \cite[Thm.\ 2.5]{DJP2003}.) \medskip

If the coupling parameter is small enough and the interaction satisfies sufficient regularity conditions, the coupled system will exhibit RTE. Here we just sketch the general structure of the statement and its proof and refer to some literature in which the main technical parts have been worked out in formally similar models. A systematic review of the arguments for our model may be given in future work.
\begin{proposition}[RTE for detector at rest in a heat bath]
	\label{prop:inertial-RTE}
	For every $\beta>0$, there exists $\lambda_0 (\beta)>0$ such that $(\Mcal_\beta,\gamma_\lambda,\varphi_{\lambda,\beta})$ satisfies the RTE property for $0<|\lambda|<\lambda_0 (\beta)$.
\end{proposition}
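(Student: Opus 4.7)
The plan is to invoke Proposition \ref{prop:quantum-koopman}: it suffices to prove that the Liouvillean $L_{\lambda,\beta}$ of the coupled W$^\ast$-dynamical system $(\Mcal_\beta,\gamma_\lambda)$ has absolutely continuous spectrum on $\R\setminus\{0\}$ together with a simple eigenvalue at $0$ corresponding to $\Omega_{\lambda,\beta}$. The ingredients needed to reduce the statement to a spectral problem are already in place: $L_{\lambda,\beta}$ is self-adjoint, annihilates $\Omega_{\lambda,\beta}$ by construction, and Araki's perturbation theory furnishes $\varphi_{\lambda,\beta}$ as a $(\gamma_\lambda,\beta)$-KMS state with cyclic and separating GNS vector \cite{DJP2003}. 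The equivalence between mixing and RTE from \cite[Thm.\ 4.2]{JP2} then closes the argument once the spectral picture for $L_{\lambda,\beta}$ is established.

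First I would analyze the unperturbed Liouvillean $L_0 = L_D \otimes \mathds{1}_{\Fcal(\Hcal_F)} + \mathds{1}_{\Hcal_D} \otimes L_F$. The qubit piece $L_D$ on $\C^2 \otimes \C^2$ has spectrum $\{-E,0,0,E\}$, with a two-dimensional kernel spanned by $v_\pm \otimes v_\pm$ which contains $\Omega_{D,\beta}$. The reservoir piece $L_F=\dGamma(s)$ has absolutely continuous spectrum $\R$ apart from a simple eigenvalue at $0$ corresponding to the Fock vacuum $\Omega_F$, exactly as used in the proof of Lemma \ref{lem:KMS-scalar-primary-mixing}. Consequently $L_0$ has a twofold degenerate kernel containing $\Omega_{0,\beta}$, embedded eigenvalues at $\pm E$, and absolutely continuous spectrum $\R$ elsewhere. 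The perturbation $\lambda V_\beta$ must therefore accomplish two things: (i) lift the kernel degeneracy, so that $\ker L_{\lambda,\beta}=\C\Omega_{\lambda,\beta}$, and (ii) dissolve the embedded eigenvalues at $\pm E$ into the continuum.

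Item (i) can be obtained without hard spectral work: for small $|\lambda|$ the state $\varphi_{\lambda,\beta}$ is the unique normal $(\gamma_\lambda,\beta)$-KMS state on $\Mcal_\beta$ by \cite[Thm.\ 5.5]{DJP2003}, and the factor property inherited from the uncoupled system via Lemma \ref{lem:KMS-scalar-primary-mixing} (stable under the perturbation, cf.\ again \cite[Thm.\ 5.5]{DJP2003}) together with Proposition \ref{prop:cond-primary-KMS} yields simplicity of the zero eigenvalue. Item (ii) is the main obstacle and is where Fermi's Golden Rule enters. The natural sufficient condition on the form factor from Eq.\ \eqref{eq:hbeta} is that its restriction to the atomic transition frequency does not vanish,
\begin{gather*}
    \int_{S^2} |h_\beta (E,\varOmega)|^2 \diff\varOmega > 0 \, ,
\end{gather*}
so that the effective level shift operator produced by second-order perturbation theory is strictly dissipative at $\pm E$.

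Under such a condition one of the standard spectral techniques for Pauli-Fierz models applies. The first route is the positive commutator (Mourre) method as developed in \cite{JP1,JP2,BFS2000,Merkli-Positive-Commutators,Froehlich-Merkli-Another-RTE}: one picks the conjugate operator $A=\dGamma(i\partial_s)$ on $\Fcal(\Hcal_F)$, verifies a Mourre estimate for $L_0$ away from the thresholds $\{0,\pm E\}$ after a Feshbach decoupling of the embedded eigenvalues, and checks that the estimate is preserved by the relatively bounded perturbation $\lambda V_\beta$ for $|\lambda|$ below a threshold $\lambda_0(\beta)$ quantified by the Fermi-Golden-Rule bounds. The alternative route, followed in \cite{DJP2003,DeB-M}, is complex spectral deformation by analytic translation in the radial variable $s$, requiring a strip of analyticity of $s\mapsto h_\beta(s,\varOmega)$ and exhibiting the same dissipative level shifts as eigenvalues of the deformed Liouvillean. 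The principal technical difficulty in either approach, and the step I would expect to consume most of the work, is to control the infrared behavior of the massless form factor in the Jak\v{s}i\'{c}-Pillet glued representation and ensure that $L_0+\lambda V_\beta$ remains Mourre-regular (respectively analytically dilatable); the thermal Bose-Einstein factor $(1-\e^{-\beta s})^{-1/2}$ in $h_\beta$ softens but does not remove the infrared singularity, and extracting the explicit threshold $\lambda_0(\beta)$ requires balancing this dissipation against the perturbation bounds.
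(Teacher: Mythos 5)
Your overall strategy coincides with the paper's: reduce the claim to spectral properties of $L_{\lambda,\beta}$ via Proposition \ref{prop:quantum-koopman}, analyze the unperturbed Liouvillean ($L_D$ with eigenvalues $0,\pm E$, the zero doubly degenerate; $L_F=\dGamma(s)$ with absolutely continuous spectrum $\R$ and a simple embedded zero eigenvalue), and import the positive-commutator/Mourre or spectral-deformation machinery of the spin-boson literature together with a Fermi Golden Rule condition on the form factor $h_\beta$. The paper itself only sketches this and defers the technical spectral analysis to \cite{JP2,BFS2000,Derezinski-Jaksic-RTE,Merkli-Positive-Commutators,Froehlich-Merkli-Another-RTE,DeB-M}, so at that level your outline is aligned.

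However, your item (i) contains a genuine error. You claim that $\ker L_{\lambda,\beta}=\C\Omega_{\lambda,\beta}$ follows ``without hard spectral work'' from uniqueness of the normal $(\gamma_\lambda,\beta)$-KMS state plus factoriality and Proposition \ref{prop:cond-primary-KMS}. This reverses the implication: Proposition \ref{prop:cond-primary-KMS} gives \emph{simple kernel} $\Rightarrow$ \emph{primary}, and by Proposition \ref{prop:cone} the kernel of the Liouvillean (intersected with the natural cone) parametrizes \emph{all} normal $\gamma_\lambda$-invariant states, not only the KMS ones; uniqueness of the KMS state therefore puts no bound on $\dim\ker L_{\lambda,\beta}$. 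The uncoupled system is an explicit counterexample: $\Mcal_\beta$ is a factor and $\omega_{D,\beta}\otimes\omega_{F,\beta}$ is the unique normal $(\gamma_0,\beta)$-KMS state, yet $\ker L_0$ is two-dimensional, spanned by $v_+\otimes v_+\otimes\Omega_F$ and $v_-\otimes v_-\otimes\Omega_F$ (reflecting the family of invariant product states with diagonal detector density matrix). Lifting exactly this degeneracy is the hard step: one must show that the second-order level shift operator on $\ker L_0$ produced by $\lambda V_\beta$ has one-dimensional kernel, which is where the Fermi Golden Rule condition on $h_\beta$ at the Bohr frequency $E$ actually enters. Conversely, your item (ii) --- instability of the embedded eigenvalues $\pm E$ --- comes essentially for free once the zero eigenvalue is shown to be simple, by the general result of \cite{Jaksic-Pillet2001} that the Liouvillean of a KMS state with simple zero eigenvalue has no non-zero eigenvalues. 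The emphasis in your proposal is thus inverted relative to where the real work lies, and the shortcut you propose for the kernel does not exist.
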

The proof makes use of ``quantum Koopmanism'' (Proposition \ref{prop:quantum-koopman}). For sufficiently small $|\lambda|>0$ one shows that $\ker L_{\lambda,\beta} = \C\Omega_{\lambda,\beta}$ and that the rest of the spectrum of $L_{\lambda,\beta}$ on $\{\Omega_{\lambda,\beta}\}^\perp$ is absolutely continuous. The spectrum of $L_D = H_D \otimes \mathds{1}_2 - \mathds{1}_2 \otimes H_D$ (for $H_D = \diag(E,0)$) consists of the eigenvalues $0,\pm E$, where $0$ is doubly degenerate, and $L_F = \dGamma(s)$ has a simple zero eigenvalue embedded in an otherwise absolutely continuous spectrum on $\R$. One therefore proves that the degeneracy of the zero eigenvalue of the Liouvillean $L_0$ of the uncoupled system is not stable under addition of the interaction term $\lambda V_\beta$. (The instability of the eigenvalues $\pm E$ under the interaction then follows from a general result: If the Liouvillean of a KMS state has a simple zero eigenvalue, then there do not exist any non-zero eigenvalues \cite{Jaksic-Pillet2001}.)\medskip 

The representation introduced in this section and in the appendices enables the identification of our setup with the representation of models that appear in the literature. In fact, our setup is analogous to, e.g., \cite{Froehlich-Merkli-Another-RTE} or \cite[Secs.\ 3.2 \& 3.3]{Merkli-Quantum-Markovian} and \cite[Secs.\ 2.1 \& 2.2]{Merkli2021-2}, where the ``purification'' representation of an $N$-level system coupled to a thermal gas of massless bosons is summarized. In the spin-boson model, which corresponds to $N=2$, a spin-$\frac{1}{2}$ or qubit system (which could be seen as an idealization of an atom confined in a way that there are effectively only two discrete energy levels) interacts with a reservoir of massless bosons, often described by a scalar field (which could be a proxy for a phonon or photon field, disregarding polarization degrees of freedom \cite{Froehlich-Merkli-Another-RTE,Froehlich-Merkli-Thermal-Ionization}); see, e.g., \cite{Leggett1987,FNV,Huebner-Spohn-decay,Huebner-Spohn,JP1,JP2,JP-thermal-relaxation,BFS2000,Froehlich-Merkli-Another-RTE,Merkli2021-2} (and the references therein). The massless scalar field in our setup is given in a \JP\ glued representation of Araki-Woods type (see Appendix \ref{appendix:JP-glued-rep}), similar to the thermal reservoir in spin-boson models, for which the required spectral properties of the coupled Liouvillean have been derived in \cite{JP2,BFS2000,Derezinski-Jaksic-RTE,Merkli-Positive-Commutators,Froehlich-Merkli-Another-RTE} using Mourre theory, positive commutator, or spectral deformation techniques. We refer to these works for the full, technically involved proof of the spectral properties leading to Proposition \ref{prop:inertial-RTE}, including a derivation of the required ``Fermi Golden Rule'' \cite{JP1,BFS2000,Derezinski-Jaksic-RTE,Derezinski-Frueboes}, which controls the (partial) instability of eigenvalues under the interaction at order $\lambda^2$, and conditions on the ``form factor'' $h_\beta \in \Hcal_F$ (Eq.\ \eqref{eq:hbeta}) appearing in the interaction (Eq.\ \eqref{eq:monopole-coupling-I}). A similar derivation for the Unruh effect using spectral deformation theory can be found in \cite{DeB-M}, which is summarized in \cite[Sec.\ 2.4]{Merkli-LSO-2007}. 

We also want to mention the recent paper \cite{Tjoa-Gray2024}, which considers the relation between the spin-boson model and a gapless Unruh-DeWitt detector coupled to a scalar field and discusses the construction of interacting ground states for this model under sufficient regularity conditions on the coupling function.\medskip

Proposition \ref{prop:inertial-RTE} means that the coupled system of an Unruh-DeWitt detector and a heat bath relative to its rest frame thermalizes, as long as the initial state is sufficiently close to equilibrium. In particular, the reduced state of the detector asymptotically approaches $\varphi_{\lambda,\beta} ((\slot)\otimes\mathds{1})$, which at lowest order in the coupling parameter $\lambda$ is the Gibbs equilibrium state (with respect to the free detector dynamics generated by $L_D$) at temperature $\beta^{-1}$, i.e.\
\begin{gather*}
	\lim\limits_{t\to\infty} \varphi'(\gamma_{\lambda,t}((A\otimes\mathds{1}_2)\otimes\mathds{1})) = \langle \Omega_{D,\beta} , (A\otimes\mathds{1}_2) \Omega_{D,\beta} \rangle + \Ocal(\lambda) \equiv \frac{\tr(\e^{-\beta H_D} (A))}{\tr \e^{-\beta H_D}} + \Ocal(\lambda)
\end{gather*}
for all normal (density operator) states $\varphi'$ on $\Mcal_\beta$ and detector observables $A\otimes\mathds{1}_2 \in\pi_D (\Acal_D)'' = \Bcal(\C^2)\otimes\mathds{1}_2$, as $\Omega_{\lambda,\beta} = \Omega_{D,\beta} \otimes \Omega_F + \Ocal(\lambda)$ by the perturbation theory of KMS states \cite{DJP2003}.\medskip

The admissible initial states in the RTE property lie in the folium of $\varphi_{\lambda,\beta}$, which consists of the normal states on $\Mcal_\beta$ and therefore are extensions of states in the folium of the uncoupled KMS state $\omega_{D,\beta} \otimes \omega_{F,\beta}$ on $\Acal_D \otimes \Acal_F$ (see Eq.\ \eqref{eq:von-Neumann-detector-field}). The initial state reflecting the physical setting described at the beginning of this section is the product state given by an arbitrary detector density matrix state\footnote{Every density matrix state on $\Acal_D$ is represented by a cyclic vector in $\Hcal_D$ and therefore all detector states lie in the same folium. For if $\rho = \sum_{j\in\{1,2\}} \varrho_j |\psi_j\rangle\langle \psi_j|$ is a density matrix on $\C^2$, diagonalized in a basis $\{\psi_1,\psi_2\}$ of eigenvectors with eigenvalues $\varrho_1,\varrho_2 \geq 0$, the GNS representation of $\tr(\rho(\slot))$ is $(\pi_D,\Hcal_D,\Omega_\rho)$ with $\Omega_\rho = \sum_{j\in\{1,2\}} \sqrt{\varrho_j} \psi_j \otimes \overline{\psi_j}$, where the overline denotes the complex conjugation of vector components in the eigenbasis of $H_D$. For $\rho_\beta=(\tr \e^{-\beta H_D})^{-1} \e^{-\beta H_D}$ we obtain $\Omega_{\rho_\beta} = \Omega_{D,\beta}$ (Eq.\ \eqref{eq:detector-GNS}).} and the $\beta$-KMS state $\langle \Omega_F , (\slot) \Omega_F \rangle$ of the quantum field (representing $\omega_{F,\beta}$ on $\pi_{F,\beta} (\Acal_F)''$). We emphasize that the initial state does not have to be of product form, but could also be a normal state on $\Mcal_\beta$ that is correlated (entangled) across detector and field (cf.\ comments in \cite{Merkli-Quantum-Markovian}). This ``folium invariance'' of the asymptotic state behavior, which is manifest in the RTE property, is a strong ergodic feature of the system.

\subsection{Detector moving inertially relative to a heat bath}
\label{sec:moving-inertial}

As before, assume that the worldline of the Unruh-DeWitt detector is given by $\R\ni t\mapsto t\frameu$ in coordinates where the time direction is $\frameu=(1,\vec{0})$. The detector and quantum field dynamical systems $(\Acal_D , \alpha_D)$ and $(\Acal_F , \alpha_F)$ are defined as in the previous section, with the dynamics $\alpha_D \otimes \alpha_F = \{\alpha_{D,t} \otimes \alpha_{F,t} \}_{t\in\R}$ implementing the time evolution of the combined detector-field system in the time coordinate $t$ of the detector's rest frame $\frameu$.

Let $\Uplambda\in\Lorentz$ be a Lorentz boost mediating between $\frameu$ and a different inertial frame $\framew$ with constant non-zero velocity relative to $\frameu$, such that $\framew=\Uplambda\frameu$ (cf.\ the beginning of Section \ref{sec:disjoint-primary}). The boost $\Uplambda$ is represented by a $^\ast$-automorphism $\updelta^{\Uplambda}$ on $\Acal_F$. For $\beta'>0$, consider
\begin{gather}
	\label{eq:lorentz-kms-relation}
	\omega_{F,\beta'}^{(\framew)} = \omega_{F,\beta'} \circ \updelta^{\Uplambda^{-1}} \, ,
\end{gather}
which is the unique quasi-free $\beta'$-KMS state on $\Acal_F$ with respect to the dynamics $\alpha_F^{(\framew)} = \{\alpha_{F,t}^{(\framew)}\}_{t\in\R}$ relative to $\framew$ given by $\alpha_{F,t}^{(\framew)} = \updelta^{\Uplambda} \circ \alpha_{F,t} \circ \updelta^{\Uplambda^{-1}}$. For any KMS parameters $\beta,\beta'>0$, the two KMS states $\omega_{F,\beta} , \omega_{F,\beta'}^{(\framew)}$ are distinct \cite[Prop.\ 3.1]{Sewell2008} (see Section \ref{sec:disjoint-KMS} for a discussion). We continue to denote states and dynamics relative to the detector's rest frame $\frameu$ without any index, while quantities relative to $\framew$ are decorated accordingly. The von Neumann algebra induced by $\omega_{D,\beta} \otimes \omega_{F,\beta}$ is again denoted $\Mcal_\beta$ (see Eq.\ \eqref{eq:von-Neumann-detector-field}).

\begin{theorem}
	\label{thm:far-from-eq}
	Let $\omega_D$ be any detector state on $\Acal_D$, and let $\beta'>0$ be fixed. Any state in $\Fol(\omega_D \otimes \omega_{F,\beta'}^{(\framew)})$ is \emph{far from equilibrium} relative to the rest frame $\frameu$ of the detector, that is, it is not normal to any $(\alpha_D \otimes \alpha_F , \beta)$-KMS state $\omega_{D,\beta} \otimes \omega_{F,\beta}$ on $\Acal_D \otimes \Acal_F$ for $\beta>0$.
\end{theorem}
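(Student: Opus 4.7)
The plan is to reduce the statement to the field-side disjointness established in Theorem \ref{thm:KMS-disjoint} via a folium-restriction property of tensor product states. Suppose for contradiction that some state $\varphi$ on $\Acal_D \otimes \Acal_F$ lies simultaneously in $\Fol(\omega_D \otimes \omega_{F,\beta'}^{(\framew)})$ and in $\Fol(\omega_{D,\beta} \otimes \omega_{F,\beta})$ for some $\beta>0$. I will show that the restriction of $\varphi$ to the field factor then lies in both $\Fol(\omega_{F,\beta'}^{(\framew)})$ and $\Fol(\omega_{F,\beta})$, which is incompatible with Theorem \ref{thm:KMS-disjoint}.

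The key auxiliary step is the following standard lemma on tensor products: if $\varphi$ is a state on a C$^\ast$-tensor product $\Acal_1 \otimes \Acal_2$ lying in $\Fol(\omega_1 \otimes \omega_2)$, then the partial state $\varphi_2 := \varphi(\mathds{1} \otimes \slot)$ on $\Acal_2$ lies in $\Fol(\omega_2)$. To see this, take the GNS representation of $\omega_1 \otimes \omega_2$ to be $(\pi_1 \otimes \pi_2 , \Hcal_1 \otimes \Hcal_2 , \Omega_1 \otimes \Omega_2)$; by the commutant theorem for tensor products, the induced von Neumann algebra coincides with the W$^\ast$-tensor product $\pi_1(\Acal_1)'' \,\bar\otimes\, \pi_2(\Acal_2)''$. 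Hence $\varphi$ extends to a normal state $\tr(\rho \, \slot)$ on this algebra for some density operator $\rho$ on $\Hcal_1 \otimes \Hcal_2$, and restricting to the subalgebra $\mathds{1} \otimes \pi_2(\Acal_2)''$ gives $\varphi_2 = \tr(\tr_1 (\rho) \, \pi_2(\slot))$, which is manifestly a normal state on $\pi_2(\Acal_2)''$ and thus an element of $\Fol(\omega_2)$.

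Applying this lemma to both folium memberships of $\varphi$ yields $\varphi_2 \in \Fol(\omega_{F,\beta'}^{(\framew)}) \cap \Fol(\omega_{F,\beta})$. However, $\omega_{F,\beta}$ is a quasi-free KMS state of the free scalar field with respect to inertial time translations in the direction $\frameu$, whereas $\omega_{F,\beta'}^{(\framew)} = \omega_{F,\beta'} \circ \updelta^{\Uplambda^{-1}}$ is a quasi-free KMS state with respect to inertial time translations in the linearly independent direction $\framew = \Uplambda\frameu$. Theorem \ref{thm:KMS-disjoint} then forces $\Fol(\omega_{F,\beta}) \cap \Fol(\omega_{F,\beta'}^{(\framew)}) = \emptyset$, contradicting the preceding display and completing the argument. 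The only delicate ingredient is the identification of the induced von Neumann algebra with the W$^\ast$-tensor product in the restriction lemma, which is a classical consequence of the commutant theorem for tensor products; the rest is entirely formal, and the specific choice of detector state $\omega_D$ plays no role since the field side already suffices to break folium membership.
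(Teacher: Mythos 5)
Your proof is correct and follows essentially the same route as the paper: the disjointness of the field folia $\Fol(\omega_{F,\beta'}^{(\framew)})$ and $\Fol(\omega_{F,\beta})$ from Theorem \ref{thm:KMS-disjoint} is the entire content, and the lift to the tensor product is formal. The only difference is that you spell out the restriction-to-the-field-factor lemma (normality of the partial state $\varphi(\mathds{1}\otimes\slot)$ via the partial trace of the representing density operator) which the paper's ``Hence, \dots'' leaves implicit; that step is valid as you state it, and in fact needs only the inclusion $\mathds{1}\otimes\pi_{F,\beta}(\Acal_F)''\subseteq\Mcal_\beta$ rather than the full commutation theorem.
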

\begin{proof}
	From Theorem \ref{thm:KMS-disjoint} we can conclude that
	\begin{gather}
		\label{eq:not-in-folium}
		\Fol(\omega_{F,\beta'}^{(\framew)})\cap \bigcup\limits_{\beta>0} \Fol(\omega_{F,\beta}) = \emptyset
	\end{gather}
	(in particular, $\updelta^{\Uplambda^{-1}}$ in Eq.\ \eqref{eq:lorentz-kms-relation} cannot be unitarily implemented in the GNS representation of $\omega_{F,\beta'}$, see Corollary \ref{cor:lorentz-breaking}). Hence, for any $\beta>0$, there is no normal state representing any state in $\Fol(\omega_D \otimes \omega_{F,\beta'}^{(\framew)})$ on $\Mcal_\beta$. 
\end{proof}

Theorem \ref{thm:far-from-eq} suggests that the detector will not asymptotically approach a thermal equilibrium state (in proper time of the rest frame $\frameu$) if it is coupled to the massless scalar field prepared in the KMS state $\omega_{F,\beta'}^{(\framew)}$ (viewed as thermal equilibrium state relative to the frame $\framew$ moving inertially relative to $\frameu$) or any perturbation of it. More concretely, let $\omega'\in\Fol(\omega_D \otimes \omega_{F,\beta'}^{(\framew)})$ be an initial state of the combined detector-field system (for any detector state $\omega_D$), and suppose that the detector couples to the quantum field via some interaction. The corresponding dynamics should describe the time evolution relative to $\frameu$, as the coupled system is supposed to evolve along the worldline of the detector and we ask about thermality relative to the detector's rest frame. Since the initial state is not represented as a normal state on $\Mcal_\beta$ (for any $\beta>0$), we cannot directly adopt the setup from Section \ref{sec:inertial-rte}. It is therefore suggested to treat the problem on a representation-independent C$^\ast$-algebraic level.

Assume that, for $\lambda\in\R$, there exists a one-parameter group of $^\ast$-automorphisms $\alpha_\lambda = \{\alpha_{\lambda,t} \}_{t\in\R}$ describing the coupled dynamics on $\Acal_D \otimes \Acal_F$ relative to $\frameu$, such that 
\begin{itemize}
	\item[(i)] for all $t\in\R$, $\alpha_{\lambda,t}$ weakly approaches $\alpha_{D,t} \otimes \alpha_{F,t}$ as $\lambda\to 0$,
	\item[(ii)] there exists, for any $\beta>0$, a unique $(\alpha_\lambda,\beta)$-KMS state $\omega_{\lambda,\beta}$ on $\Acal_D \otimes \Acal_F$ depending continuously on $\lambda\in\R$ such that, in the weak sense, $\lim_{\lambda\to 0} \omega_{\lambda,\beta} = \omega_{D,\beta} \otimes \omega_{F,\beta}$, and $\Fol(\omega_{\lambda,\beta})=\Fol(\omega_{D,\beta} \otimes \omega_{F,\beta})$.
\end{itemize}
Assumptions (i) and (ii) replicate the typical properties of the coupled dynamics and corresponding KMS states based on the general framework of perturbation theory of dynamics and KMS states \cite{DJP2003}. In particular, the hypothetical KMS state $\omega_{\lambda,\beta}$ should be a perturbation of the uncoupled KMS state $\omega_{D,\beta} \otimes \omega_{F,\beta}$ relative to the detector's rest frame $\frameu$. \medskip 

The coupled system with initial state $\omega'\in\Fol(\omega_D \otimes \omega_{F,\beta'}^{(\framew)})$ approaches a thermal equilibrium state with respect to the rest frame $\frameu$ of the detector, if $(\Acal_D \otimes \Acal_F , \alpha_\lambda , \omega_{\lambda,\beta} ; \omega')$ satisfies the thermalization property for some $\beta>0$ and sufficiently small $|\lambda|>0$. But Eq.\ \eqref{eq:not-in-folium} tells us that all quantum field states in $\Fol(\omega_{F,\beta'}^{(\framew)})$ are as far as possible from any thermal equilibrium state relative to $\frameu$, and thus $\omega'\notin\Fol(\omega_{\lambda,\beta})$ for any $\beta>0$. Under these conditions the coupled detector-field system is expected not to satisfy the thermalization property (which, in turn, would imply the asymptotic thermality of the detector). As $\omega'$ and $\omega_{\lambda,\beta}$ are disjoint, so are $\omega' \circ \alpha_{\lambda,t}$ and $\omega_{\lambda,\beta}$ for all $t\in\R$ \cite[Lemma 2]{Hepp1972}. Hence the system would have to leave its initial folium at asymptotically late times in order to settle in a limit state that would be deemed thermal in the reference frame of the detector.\medskip

It is important to note that this discussion provides evidence, but not a definite proof of non-thermalization for a whole folium of initial states. Under special conditions, such as a strong or finely tuned interaction, the coupled detector-field system could still end up in a thermal equilibrium state or at least some invariant state (cf.\ the final paragraph of Section \ref{sec:th-rte}) relative to the detector's rest frame $\frameu$. From the examples of Hepp's study of coherence in quantum measurement processes in \cite[Sec.\ 4]{Hepp1972} one can deduce the possibility that a system prepared in a state $\omega'$ evolves, in the sense of Eq.\ \eqref{eq:th}, into a state outside $\Fol(\omega')$ (see \cite[Sec.\ 6.6]{Landsman2007}). The problem of finding conditions for such a special equilibration process in our case remains open and will not be treated here (see Section \ref{sec:conclusions} for an outlook). Our result shows that the situation is not one of \textit{return} to equilibrium and is not covered by quantum Koopmanism (Proposition \ref{prop:quantum-koopman}).\medskip 

Finally, let us note that a favorable choice for coupled dynamics $\alpha_\lambda$ satisfying, at least formally, assumptions (i) and (ii) would be a one-parameter group of $^\ast$-automorphisms that is generated, for any $\beta>0$, by the Liouvillean $L_{\lambda,\beta}$ (Eq.\ \eqref{eq:Liouv-coupled}) with monopole coupling $V_\beta$ (Eq.\ \eqref{eq:monopole-coupling}) in the GNS representation of $\omega_{D,\beta} \otimes \omega_{F,\beta}$, so the coupled dynamics is represented by $\gamma_\lambda = \{\gamma_{\lambda,t} \}_{t\in\R}$ defined in Eq.\ \eqref{eq:gamma-coupled} on the von Neumann algebra $\Mcal_\beta$ (Eq.\ \eqref{eq:von-Neumann-detector-field}). Consider $\alpha_\lambda$ inducing $\gamma_\lambda$ in this sense. Then one can assume that $\omega_{\lambda,\beta}$ in assumption (ii) has the GNS representation $(\pi_D \otimes \pi_{F,\beta} , \Hcal , \Omega_{\lambda,\beta})$, so it is represented on $\Mcal_\beta$ by the coupled equilibrium state $\varphi_{\lambda,\beta}$ defined in Eq.\ \eqref{eq:coupled-KMS}. Still, the RTE property of $(\Mcal_\beta,\gamma_\lambda,\varphi_{\lambda,\beta})$, which holds for every $\beta>0$ and sufficiently small $|\lambda|>0$ by Proposition \ref{prop:inertial-RTE}, cannot imply thermalization in the present situation due to the above argument.

We however remark that since the free field dynamics is not strongly continuous (see the paragraph below Eq.\ \eqref{eq:dynamics-KG}), the common Dyson series construction of a one-parameter group $\alpha_\lambda$ of $^\ast$-automorphisms on $\Acal_D \otimes \Acal_F$ inducing $\gamma_\lambda$ on $\Mcal_\beta$ ($\beta>0$) can only be performed in a regularized form. This has been shown in \cite{Froehlich-Merkli-Thermal-Ionization} for the interaction between an atom (with a single electron and a static nucleus) and black body radiation. Instead, one considers a ``time-averaged'' C$^\ast$-algebra for the field on which the dynamics is represented by a strongly continuous one-parameter group (cf.\ \cite{Fewster-Verch2003} for a similar construction). On the tensor product C$^\ast$-algebra of the combined detector-field system one can define a regularized interaction operator, from which the coupled dynamics is obtained via a Dyson series. In the GNS representation of the uncoupled $\beta$-KMS state the dynamics is represented by a one-parameter group of $^\ast$-automorphisms on $\Mcal_\beta$ that weakly approach $\gamma_{\lambda,t} = \e^{iL_{\lambda,\beta} t} (\slot) \e^{-iL_{\lambda,\beta} t}$ as the regularization is removed; see \cite[Sec.\ 2.1]{Froehlich-Merkli-Thermal-Ionization} for details.

\subsection{Comparison with the Unruh effect}
\label{sec:unruh}

Theorem \ref{thm:far-from-eq} allows to highlight the difference to the behavior of an Unruh-DeWitt detector that is uniformly accelerating through Minkowski spacetime $\M\cong\R^4$ and is coupled to the Minkowski vacuum state of the free scalar field. Such a detector moves along a hyperbolic worldline in the (right) Rindler wedge $\WR := \{(x_0,x_1,x_2,x_3)\in\M\,:\,|x_0|<x_1\}$, which is a submanifold of $\M$ equipped with the induced metric and causal structure (see, e.g., \cite{Rindler1977}). If the detector is coupled to the Minkowski vacuum and follows the orbit of a Lorentz boost describing its uniformly accelerated motion with proper acceleration $\acc>0$, then it will converge (in proper time) to a thermal equilibrium state with Unruh temperature $\acc/2\pi$. This is the Unruh effect \cite{Unruh1976} (see Section \ref{sec:intro} for more comments and references), proven rigorously in \cite{DeB-M} in terms of RTE of the coupled detector-field system.\medskip

Hence the detector appears to respond as if it was coupled to a heat bath at temperature $\acc/2\pi$. Does this mean that there is, from the viewpoint of the accelerated observer, an Unruh ``heat bath'' or a ``thermal gas'' of Rindler quanta (relative to Fulling-Rindler quantization of the scalar field on the Rindler wedge \cite{Fulling1973,Wald-book,Crispino-Higuchi-Matsas})? And is the detector in thermal equilibrium with this ``heat bath'' due to exchange of heat, so that the detector temperature is the bath temperature in the sense of the zeroth law of thermodynamics? On the observable algebra associated to $\WR$, the Minkowski vacuum state is a $2\pi$-KMS state with respect to time translations given by the Killing flow of Lorentz boosts (respectively $2\pi/\acc$-KMS in proper time parametrization) \cite{Fulling1973,Unruh1976,Bisognano-Wichmann,Bisognano-Wichmann2,Sewell-BW1980,Sewell1982}. Although it might be tempting to view this KMS parameter as the inverse temperature of the Minkowski vacuum relative to an accelerating reference frame, the physical interpretation needs to be distinguished from inertial KMS states, i.e.\ KMS states with respect to ordinary time evolution relative to an inertial reference frame. 

The interpretation of the latter as global thermal equilibrium states of inertial heat baths at positive temperature is adopted from KMS states of general quantum systems \cite{Haag1996,Bratteli-Robinson2}. They represent systems that are in thermal equilibrium in the sense of the zeroth law \cite{Kossakowski1977} (describing equilibration and coexistence under thermal contact, see also \cite{Roos1972}) and the second law \cite{Pusz-Woronowicz} (KMS states are ``passive'', meaning that energy extraction via cyclic processes is prohibited, see also \cite{Kuckert2002}). In that case, the inverse of the KMS parameter specifies an empirical and absolute temperature scale of the system. 

On the other hand, as mentioned in Section \ref{sec:intro}, this correspondence fails in non-inertial situations. For the Unruh effect this has been demonstrated in \cite{Buchholz-Solveen} based on a concept of local thermality for non-equilibrium states \cite{Buchholz-Ojima-Roos} (which has been developed for curved spacetimes in the framework of locally covariant quantum field theory \cite{Buchholz-Schlemmer2007,Solveen2010,Solveen2012}, has connections to quantum energy inequalities \cite{Schlemmer-Verch2008} and cosmology \cite{Verch-local-thermal-eq}, and is related to a local version of the KMS condition \cite{Gransee-Pinamonti-Verch-LKMS}). In the formalism of \cite{Buchholz-Ojima-Roos}, states are compared with global thermal equilibrium states via the expectation values of a class of ``local thermal observables'' at any spacetime point. As a result, the Minkowski vacuum turns out to have zero temperature relative to every (inertial or non-inertial) observer \cite{Buchholz-Solveen}, and studies of macroscopic properties as well as a comparison with Tolman's law have further corroborated the absence of a heat bath or thermal gas of ``Rindler quanta'' in the Minkowski vacuum \cite{Buchholz-Verch2015,Buchholz-Verch2016}. The thermalization in the Unruh effect can be attributed to quantum excitations that are created from the vacuum by the coupling and are picked up by the detector, which responds in an ergodic fashion; the picture is that work from the accelerating force is effectively converted into heat in a dissipative process, similar to friction \cite{Buchholz-Verch2015,Buchholz-Verch2016}. Consequently, Unruh-DeWitt detectors are not adequate thermometers (in the sense of the zeroth law) under the influence of external forces. In the words of \cite{Buchholz-Verch2015}, ``instead of indicating the temperature of the vacuum, the probe indicates its own temperature at asymptotic times''. \medskip

Our analysis highlights yet another aspect of the difference to the inertial case. The proper orthochronous Lorentz group $\Lorentz$ is unitarily implemented in the Minkowski vacuum state, which is invariant under Lorentz boosts (see Eq.\ \eqref{eq:vacuum-poincare-inv}), in contrast to inertial KMS states (cf.\ Corollary \ref{cor:lorentz-breaking}). In the Unruh effect, the uniformly accelerated detector is stationary relative to the Killing flow induced by the Lorentz boost isometries of the Rindler wedge, and the Minkowski vacuum state of the field is a $2\pi$-KMS state with respect to boosts and thus is invariant under the same flow. Hence the Unruh effect is observed in any Lorentz frame, and the accelerated detector does not measure any Doppler shifts in the vacuum, as noted in \cite{Candelas-Deutsch-Sciama} (see the quotation in Section \ref{sec:intro}). On the other hand, a detector that moves inertially relative to the thermal bath of a free massless scalar field interacts with an inertial KMS state that is far from equilibrium with respect to the time evolution along the trajectory of the detector (Theorem \ref{thm:far-from-eq}). In this sense, the breakdown of Lorentz symmetry can be observed in the asymptotic behavior of detectors, which distinguishes the alleged Unruh ``heat bath'' from a proper inertial heat bath.

\section{Conclusions and outlook}
\label{sec:conclusions}

In this paper, we showed that in a local, translation-covariant quantum field theory any two distinct states that are invariant relative to different inertial reference frames are disjoint, if they have separating GNS vectors, are primary, and satisfy the mixing property (Theorem \ref{thm:primary-disjoint}). This particularly applies to the inertial KMS states of the free scalar field (Theorem \ref{thm:KMS-disjoint}). The proof uses the algebraic framework of quantum field theory and quantum statistical mechanics. From the disjointness result we obtained a proof of the Lorentz symmetry breaking in primary, mixing states that are not Lorentz boost invariant (Corollary \ref{cor:lorentz-breaking}). The mentioned statements strengthen previous results by Herman \& Takesaki \cite{Herman-Takesaki1970}, Sewell \cite{Sewell2008,Sewell-rep2009}, and Ojima \cite{Ojima1986}.\medskip

A motivation for our analysis is the clarification of the behavior of detector systems in inertial and accelerated motion. The disjointness result supports the conclusion that an Unruh-DeWitt detector moving with constant velocity relative to a free massless scalar field in thermal equilibrium does not thermalize. The initial field state is far from equilibrium relative to the detector's rest frame, which indicates that the state of the coupled system will not approach a KMS state at asymptotically late times. Our discussion illustrates that the thermalization in the Unruh effect, where a uniformly accelerated detector reaches thermal equilibrium upon coupling to the Minkowski vacuum, has to be distinguished from the interaction with a relativistic thermal gas in an inertial frame (as concluded before in \cite{Buchholz-Solveen,Buchholz-Verch2015,Buchholz-Verch2016}). The difference is that Lorentz boosts are not unitarily implementable in the GNS representation of KMS states with respect to inertial time translations (as a consequence of disjointness), while the Minkowski vacuum state is invariant under Lorentz boosts and thus a uniformly accelerated detector thermalizes to the Unruh temperature in every boosted Lorentz frame (cf.\ the statement in \cite{Candelas-Deutsch-Sciama} quoted in Section \ref{sec:intro}).\medskip

The discussion of non-thermalization of inertially moving detector systems in the present work complements earlier results from the literature. At the level of perturbation theory Costa \& Matsas \cite{Costa-Matsas-background1995,Costa-Matsas1995} showed that an inertially moving detector does not attain thermal transition rates from the heat bath of a massless scalar field due to the Doppler shift of quanta from the field. The transition rate is a function of the relative velocity and the detector's energy gap, and corresponds to radiation with a spectrum that is not of Planck (black body) form. It reduces to a Planck spectrum for zero velocity, and vanishes when the velocity approaches the speed of light. A well-defined temperature, obtained from a transition rate corresponding to a Planck spectrum, only exists for a thermal bath relative to its inertial rest frame; there is no Lorentz transformation for temperature from the frame of the bath to the rest frame of the detector \cite{Costa-Matsas1995,Landsberg-Matsas1996,Landsberg-Matsas2004} (see also \cite{Sewell2010}). Regarding equilibrium states, Sewell showed in \cite{Sewell2008,Sewell-rep2009} that KMS states in one inertial reference frame are non-KMS with respect to an inertial frame moving relative to it with constant non-zero velocity. For an Unruh-DeWitt detector coupled to and moving relative to an inertial thermal reservoir it is reasonable to expect that it does not converge to a KMS state at late times: The KMS condition characterizes systems that are thermal in the sense of the zeroth law \cite{Kossakowski1977}, but the detector is coupled to a reservoir that is not in a KMS state relative to the detector's rest frame \cite{Sewell2008}.\medskip

These results suggest that the moving detector does not reach a thermal equilibrium state at late times. However, as emphasized in Section \ref{sec:intro}, it is important to note that recording excitation spectra is a physical procedure that differs from observing the thermalization in the state of the detector or the coupled detector-field system. The system could be driven into thermal equilibrium, even though the reservoir is not thermal from the viewpoint of the detector and the detector transition rate does not correspond to a Planck spectrum. On a related note, it is difficult to exclude thermalization on the basis of thermodynamic properties alone; after all, the asymptotic temperature of a thermalizing detector is not necessarily a pointer for the local temperature of the quantum field it couples to, as can be observed in non-inertial situations such as the Unruh effect (see Section \ref{sec:unruh}). Moreover, we can expect that the asymptotic behavior of the full, coupled system will depend on the details of the interaction between detector and quantum field.

For the moving detector our Theorem \ref{thm:far-from-eq}, which is based on the disjointness of KMS states relative to different inertial frames, provides strong indication for the non-thermalization under generic conditions. However, as we noted in Section \ref{sec:moving-inertial}, thermalization of the coupled system under specially fine-tuned, coupling-dependent conditions cannot be excluded. In light of the above comments, the question arises to which degree deviations from a thermal transition rate are ``visible'' in the asymptotic state behavior. The evolution of the reduced density matrix of the Unruh-DeWitt detector (i.e.\ taking the partial trace over the field part), expressed in an open quantum systems approach by a master equation, has been shown in \cite{Papadatos-Anastopoulos2020} to reflect the non-Planckian response derived by Costa \& Matsas \cite{Costa-Matsas1995}. However, the relation to non-thermalization of the full detector-field system in the context of our work is not obvious, especially when the invariance of the asymptotic state behavior under the choice of the (not necessarily factorizing) initial state from its folium, as suggested by Theorem \ref{thm:far-from-eq}, is to be retained. We save these investigations for future work.\medskip 

We also propose a possible connection to non-equilibrium steady states (NESS) \cite{Ruelle-NESS} (see also \cite{Jaksic-Pillet-NESS-math,Jaksic-Pillet-NESS,Merkli-Mueck-Sigal-NESS} and references therein). For a detector moving inertially relative to an inertial heat bath of a massless scalar field, the master equation for the reduced density matrix reveals the effect of a continuum of heat baths interacting with the detector in its rest frame, corresponding to temperatures in a bounded, velocity-dependent interval \cite{Papadatos-Anastopoulos2020}. This is reminiscent of the ``quench'' configuration of a NESS, arising from the thermal contact of two semi-infinite heat baths \cite{DLSB,Hack-Verch}. In upcoming work we will examine the similarities and differences between such a NESS and the state of the thermal quantum field from the perspective of the moving detector's rest frame based on the transition rate of the detector. \medskip

\bigskip

\mysepline

\paragraph{Acknowledgments}

The authors would like to express their gratitude to Detlev Buchholz, Christopher J.\ Fewster, and Marco Merkli for illuminating discussions, and to an anonymous referee for helpful comments on a previous version of the paper that led to improvements of the presentation. A.G.P.\ thanks the IMPRS and MPI for Mathematics in the Sciences, Leipzig, where most parts of this work have been carried out, for support and hospitality. R.V.\ gratefully acknowledges funding by the CY Initiative of Excellence (grant ``Investissements d'Avenir'' ANR-16-IDEX-0008) during his stay at the CY Advanced Studies.

\bigskip
\bigskip

\mysepline


\newpage

\appendix

\section{More on quantum dynamics and operator algebras}
\label{appendix:qds-add}

In this appendix we compile some additional material on quantum statistical mechanics and dynamical systems, continuing Section \ref{sec:prelim-qds} and using the notation introduced there.

\begin{proposition}[Tomita-Takesaki modular theory {\cite{Takesaki1970,Bratteli-Robinson1,Summers-TT2006}}]
	\label{prop:modular}
	Let $\Mcal\subseteq\Bcal(\Hcal)$ be a von Neumann algebra on a Hilbert space $\Hcal$ with cyclic and separating unit vector $\Omega\in\Hcal$. The \emph{Tomita-Takesaki modular objects} associated to the pair $(\Mcal,\Omega)$ are given by the \emph{modular conjugation} $J$, which is an anti-unitary involution on $\Hcal$, and the \emph{modular operator} $\varDelta$, which is positive and self-adjoint on $\Hcal$, uniquely determined by the property
	\begin{gather*}
		J\varDelta^{1/2} X \Omega = X^\ast \Omega \quad \text{for all}\ X\in\Mcal \, .
	\end{gather*}
	The modular objects satisfy $J=J^\ast = J^{-1}$, $J\Omega = \varDelta\Omega = \Omega$, $J\Mcal J = \Mcal'$, and $\varDelta^{it} \Mcal \varDelta^{-it} = \Mcal$ for all $t\in\R$. The \emph{modular automorphism group} is the one-parameter group of $^\ast$-automorphisms $\varDelta^{it} (\slot) \varDelta^{-it}$, $t\in\R$, on $\Mcal$. The faithful vector state $\langle \Omega, (\slot) \Omega \rangle$ is a KMS state on $\Mcal$ with KMS parameter $-1$ with respect to modular automorphisms.
\end{proposition}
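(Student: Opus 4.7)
The plan is to follow the classical Tomita-Takesaki construction via polar decomposition of a natural antilinear operator. First I define the antilinear operator $S_0$ on the dense domain $\Mcal\Omega$ by $S_0 X\Omega := X^*\Omega$; this is well-defined because $\Omega$ is separating for $\Mcal$. Analogously, on the dense domain $\Mcal'\Omega$ (dense since $\Omega$ is cyclic for $\Mcal$, hence separating for $\Mcal'$), define $F_0 Y\Omega := Y^*\Omega$. Routine estimates give $F_0 \subseteq S_0^*$ and $S_0 \subseteq F_0^*$, so both operators are closable; let $S := \overline{S_0}$, $F := \overline{F_0}$, and verify $F = S^*$. I then apply the antilinear polar decomposition $S = J\varDelta^{1/2}$ with $\varDelta := S^*S$ positive self-adjoint and $J$ antiunitary, which is possible since $S$ has trivial kernel and dense range, both following from cyclicity and separability of $\Omega$ for $\Mcal$ and $\Mcal'$.

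Next I record the elementary properties. From $S_0^2 = \id$ on $\Mcal\Omega$ one derives $J^2 = \id$ (hence $J = J^* = J^{-1}$) and $J\varDelta J = \varDelta^{-1}$, so $J\varDelta^{it}J = \varDelta^{-it}$ for all $t\in\R$ by spectral calculus. Since $S\Omega = F\Omega = \Omega$ (because $\mathds{1}^* = \mathds{1}$), one has $\varDelta\Omega = S^*S\Omega = F\Omega = \Omega$, whence $\varDelta^{1/2}\Omega = \Omega$ and $J\Omega = S\Omega = \Omega$. Uniqueness of $J$ and $\varDelta$ follows from the uniqueness of polar decomposition together with the defining identity $SX\Omega = X^*\Omega$.

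The main obstacle is establishing the modular relations $J\Mcal J = \Mcal'$ and $\varDelta^{it}\Mcal\varDelta^{-it} = \Mcal$ for every $t \in \R$. I would follow Takesaki's original strategy: construct the $^*$-subalgebra of $\varDelta$-analytic elements of $\Mcal$ by Gaussian regularization, $X_n := \sqrt{n/\pi}\int_{\R} \e^{-nt^2}\varDelta^{it} X \varDelta^{-it}\diff t$ for $X \in \Mcal$, for which $z \mapsto \varDelta^{iz} X_n \varDelta^{-iz}$ extends to an entire $\Bcal(\Hcal)$-valued function and which is strong-operator dense in $\Mcal$. Using the defining identity $S X\Omega = X^*\Omega$ and analytic continuation across the strip $0 \leq \Im z \leq 1$ applied to matrix elements of the form $\langle Y'\Omega, \varDelta^{iz} X_n Y''\Omega\rangle$ with $Y', Y'' \in \Mcal'$, one deduces simultaneously that $\varDelta^{it} X_n \varDelta^{-it} \in \Mcal$ for all real $t$ and that $JX_n J \in \Mcal'$; a weak-operator density argument then extends both relations to all of $\Mcal$. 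A modern alternative is the Rieffel-van Daele route through the natural positive cone $\mathcal{P}^\natural := \overline{\{XJX\Omega : X\in\Mcal\}}$. In either formulation, this step carries the genuine technical weight of the theorem.

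Once the modular relations are in place, the KMS property of $\langle\Omega, (\slot)\Omega\rangle$ at KMS parameter $-1$ with respect to the modular automorphism group $\sigma_t(X) := \varDelta^{it}X\varDelta^{-it}$ is routine. For $X, Y$ in the entire-analytic subalgebra, the function $F_{X,Y}(z) := \langle\Omega, X\sigma_z(Y)\Omega\rangle = \langle X^*\Omega, \varDelta^{iz}Y\Omega\rangle$ (using $\varDelta^{-iz}\Omega = \Omega$) extends analytically and boundedly to the strip $-1 \leq \Im z \leq 0$. Combining the identities $\varDelta Y\Omega = S^*Y^*\Omega$, $\varDelta^{1/2}X\Omega = JX^*\Omega$, and the intertwining $J\varDelta^{it} = \varDelta^{-it}J$ yields the boundary value $F_{X,Y}(t-i) = \langle\Omega, \sigma_t(Y)X\Omega\rangle$, which is the condition of Definition \ref{def:kms-state} at $\beta = -1$. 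Density of the analytic elements in $\Mcal$, together with ultraweak continuity of the state, extends the KMS identity to arbitrary $X, Y \in \Mcal$, completing the proof.
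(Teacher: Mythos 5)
The paper does not prove this proposition; it is quoted as a standard result from the cited references (Takesaki, Bratteli--Robinson, Summers), and your outline reproduces exactly the classical construction found there: polar decomposition $S=J\varDelta^{1/2}$ of the closure of $X\Omega\mapsto X^\ast\Omega$, the elementary identities, Takesaki's analytic-element argument (or Rieffel--van Daele) for $J\Mcal J=\Mcal'$ and $\varDelta^{it}\Mcal\varDelta^{-it}=\Mcal$, and the boundary-value computation giving the $(-1)$-KMS property in the paper's sign convention. The one caveat is that your treatment of the commutation theorem is a sketch rather than a proof --- Gaussian regularization produces analytic elements but does not by itself yield $J\Mcal J=\Mcal'$; the resolvent/fundamental lemma relating $(\varDelta+\lambda)^{-1}$ to the commutant is the missing engine --- but you correctly identify and defer this as the step carrying the technical weight, so the outline is faithful to the standard proof.
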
 
The significance of modular theory in the context of quantum field theory is discussed in \cite{Borchers-Tomita}.

If $\omega$ is a KMS state, its GNS vector is cyclic and separating for the induced von Neumann algebra (Proposition \ref{prop:weak-Liouvillean}), so one can associate Tomita-Takesaki modular objects to it. The following relation between the modular group and the dynamics of the KMS state is shown in \cite[Thm.\ 13.2 \& Thm.\ 13.3]{Takesaki1970} (see also \cite[Thm.\ 5.3.10]{Bratteli-Robinson2} and \cite[Thm.\ 2.2]{Summers-TT2006}). We use the terminology of Proposition \ref{prop:weak-Liouvillean}.
\begin{proposition}[Modular automorphism group for KMS states]
	\label{prop:modular-KMS}
	Let $\omega$ be a $\beta$-KMS state on a quantum dynamical system $(\Acal,\alpha)$ with GNS representation $(\pi_\omega , \Hcal_\omega , \Omega_\omega)$, inducing the W$^\ast$-dynamical system $(\Mcal_\omega , \gamma)$ with Liouvillean $L$. Let $\varDelta$ be the modular operator associated to $(\Mcal_\omega , \Omega_\omega)$. Then $\ln\varDelta = -\beta L$, i.e.\ the modular automorphism group coincides with the dynamics $\gamma$ up to rescaling:
	\begin{gather*}
		\varDelta^{it} (\slot) \varDelta^{-it} = \gamma_{-\beta t} \quad \text{for all}\ t\in\R \, .
	\end{gather*}
\end{proposition}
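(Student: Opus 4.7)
My plan is to reduce the identity $\gamma_{-\beta t} = \varDelta^{it}(\slot)\varDelta^{-it}$ to Takesaki's uniqueness theorem for the modular automorphism group. The second bullet of Proposition \ref{prop:weak-Liouvillean} already supplies what I need on the von Neumann side: the vector state $\varphi := \langle \Omega_\omega , (\slot) \Omega_\omega \rangle$ is a faithful normal $\beta$-KMS state on the W$^\ast$-dynamical system $(\Mcal_\omega , \gamma)$, and $\Omega_\omega$ is cyclic and separating for $\Mcal_\omega$, so the modular objects $\varDelta, J$ of Proposition \ref{prop:modular} are at my disposal. The Liouvillean $L$ implements $\gamma$ unitarily with $L\Omega_\omega = 0$, while $\varDelta$ implements the modular automorphism group $\sigma^\varphi_s := \varDelta^{is}(\slot)\varDelta^{-is}$ with $\varDelta\Omega_\omega = \Omega_\omega$.

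The first step would be a time rescaling. Setting $\tilde{\gamma}_s := \gamma_{-\beta s}$ and $\tilde{F}_{A,B}(z) := F^\beta_{A,B}(-\beta z)$ for the analytic KMS functions of Definition \ref{def:kms-state} (with $A, B \in \Mcal_\omega$), a direct substitution shows $\tilde{F}_{A,B}$ is bounded and continuous on $\{-1 \leq \Im z \leq 0\}$, analytic on its interior, and satisfies the boundary relations $\tilde{F}_{A,B}(s) = \varphi(A\tilde{\gamma}_s(B))$ and $\tilde{F}_{A,B}(s-i) = \varphi(\tilde{\gamma}_s(B)A)$; the argument is uniform in the sign of $\beta$. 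Hence $\varphi$ is a $(-1)$-KMS state for the ultraweakly continuous W$^\ast$-dynamics $\tilde{\gamma}$. Proposition \ref{prop:modular} tells me the modular automorphism group $\sigma^\varphi$ is also an ultraweakly continuous W$^\ast$-dynamics on $\Mcal_\omega$ making $\varphi$ a $(-1)$-KMS state. The uniqueness half of Takesaki's theorem on modular dynamics (\cite[Thm.\ 13.2]{Takesaki1970}, cf.\ \cite[Thm.\ 5.3.10]{Bratteli-Robinson2}) then forces $\tilde{\gamma}_s = \sigma^\varphi_s$, which is precisely the claimed intertwining $\gamma_{-\beta s}(X) = \varDelta^{is} X \varDelta^{-is}$ on $\Mcal_\omega$.

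To upgrade this equality of automorphism groups to the operator identity $\ln \varDelta = -\beta L$, I would compare the two unitary implementers. Both $\e^{-i\beta L s}$ and $\varDelta^{is}$ implement the automorphism $\gamma_{-\beta s}$ on $\Mcal_\omega$ and fix the cyclic vector $\Omega_\omega$. Consequently $\varDelta^{-is}\e^{-i\beta L s}$ commutes with every $X \in \Mcal_\omega$ and fixes $\Omega_\omega$, so for each $X \in \Mcal_\omega$ one has $\varDelta^{-is}\e^{-i\beta L s} X \Omega_\omega = X \Omega_\omega$; cyclicity of $\Omega_\omega$ then gives $\varDelta^{is} = \e^{-i\beta L s}$ for all $s \in \R$ (this is the standard uniqueness of a unitary implementer fixing an invariant cyclic vector, in the spirit of \cite[Cor.\ 2.3.17]{Bratteli-Robinson1}). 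Stone's theorem finally yields $\ln\varDelta = -\beta L$.

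The main obstacle is the uniqueness half of Takesaki's theorem, which encodes the deep modular-theoretic content: given a faithful normal state on a von Neumann algebra, the $(-1)$-KMS condition singles out a unique ultraweakly continuous W$^\ast$-dynamics. I would invoke it as a black box. Everything else --- the rescaling of the KMS analytic function that shifts the parameter from $\beta$ to $-1$, and the passage from an automorphism identity to an operator identity via the invariant cyclic vector --- is routine and requires no analysis beyond what is already available earlier in the paper.
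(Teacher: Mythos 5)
Your proposal is correct and follows essentially the same route the paper relies on: the paper states this proposition without proof, citing Takesaki's theorem (\cite[Thm.\ 13.2 \& 13.3]{Takesaki1970}, \cite[Thm.\ 5.3.10]{Bratteli-Robinson2}), and your argument is precisely the standard derivation behind those citations --- rescale the KMS parameter to $-1$, invoke the uniqueness of the modular dynamics for the faithful normal state $\langle\Omega_\omega,(\slot)\Omega_\omega\rangle$, and then identify the two unitary implementers via the invariant cyclic vector. The routine details you fill in (the strip substitution uniform in $\sgn(\beta)$, and the commutant argument showing $\varDelta^{-is}\e^{-i\beta Ls}=\mathds{1}$) are all sound.
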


We state a useful characterization of the collection of invariant states on a W$^\ast$-dynamical system in terms of the Liouvillean (Proposition \ref{prop:weak-Liouvillean}), following \cite[Sec.\ 2.10]{DJP2003}.
\begin{proposition}[Characterization of normal invariant states]
	\label{prop:cone}
	Let $\omega$ be a state on a C$^\ast$-algebra $\Acal$ with GNS representation $(\pi_\omega , \Hcal_\omega , \Omega_\omega)$, such that $\Omega_\omega$ is separating for $\Mcal_\omega = \pi_\omega (\Acal)''$. Let $J$ be the modular conjugation associated to $(\Mcal_\omega , \Omega_\omega)$. The \textit{natural cone} associated to $(\Mcal_\omega , \Omega_\omega)$ is defined as
	\begin{gather*}
		\Hcal_\omega^+ := \overline{\{XJX\Omega_\omega : X \in\Mcal_\omega \}} \subset \Hcal_\omega \, .
	\end{gather*}
	For every normal state $\varphi$ on $\Mcal_\omega$ there is a unique unit vector $\hat{\Omega}_\varphi \in \Hcal_\omega^+$ such that $\varphi = \langle \hat{\Omega}_\varphi , (\slot) \hat{\Omega}_\varphi \rangle$. The vector $\hat{\Omega}_\varphi$ is cyclic for $\Mcal_\omega$ if and only if it is separating. If $\omega$ is an invariant state on a quantum dynamical system $(\Acal,\alpha)$ and induces a W$^\ast$-dynamical system $(\Mcal_\omega,\gamma)$ with Liouvillean $L$, then a normal state $\varphi$ on $\Mcal_\omega$ is $\gamma$-invariant if and only if $\hat{\Omega}_\varphi \in \ker L$. This provides a bijection between $\ker L \cap \Hcal_\omega^+$ and the collection of normal, $\gamma$-invariant states on $\Mcal_\omega$.
\end{proposition}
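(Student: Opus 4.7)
The plan is to handle the three assertions of the proposition in order: (i) existence and uniqueness of the vector representative of a normal state inside $\Hcal_\omega^+$, (ii) the equivalence of cyclicity and separating for vectors in the cone, and (iii) the characterization of $\gamma$-invariance via $\ker L$.

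For (i), I would invoke the Araki--Connes--Haagerup standard-form theorem: the quadruple $(\Mcal_\omega, \Hcal_\omega, J, \Hcal_\omega^+)$ constructed from the cyclic and separating pair $(\Mcal_\omega, \Omega_\omega)$ is a standard form of $\Mcal_\omega$, and the general result (compiled for instance in \cite{Bratteli-Robinson1}) asserts that each normal state $\varphi$ on $\Mcal_\omega$ admits a unique unit vector $\hat\Omega_\varphi \in \Hcal_\omega^+$ with $\varphi = \langle \hat\Omega_\varphi, (\slot)\hat\Omega_\varphi\rangle$. I would take this as a black box rather than reprove it, since its technical heart---self-duality of the cone and the homeomorphism onto the normal positive functionals---is exactly the nontrivial core of modular theory and is the main obstacle in any from-scratch proof.

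For (ii), the key lemma is that every $\xi\in\Hcal_\omega^+$ is $J$-invariant. On the generating vectors this reads $J(XJX\Omega_\omega) = (JXJ)X\Omega_\omega = X(JXJ)\Omega_\omega = XJX\Omega_\omega$, using $J\Omega_\omega=\Omega_\omega$ and that $JXJ \in \Mcal_\omega'$ commutes with $X$; the property extends to the closure by continuity of $J$. Combined with $J\Mcal_\omega J = \Mcal_\omega'$, both directions follow: if $\hat\Omega_\varphi$ is cyclic for $\Mcal_\omega$, then $\hat\Omega_\varphi = J\hat\Omega_\varphi$ is cyclic for $J\Mcal_\omega J = \Mcal_\omega'$, hence separating for $\Mcal_\omega$; conversely, separating for $\Mcal_\omega$ means cyclic for $\Mcal_\omega'$, and $J$-invariance then transports cyclicity back to $\Mcal_\omega$.

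For (iii), the crucial step is that $U_\omega(t)$ commutes with $J$, so that $e^{iLt}$ preserves the natural cone. To obtain this, I would observe that on the core $\Mcal_\omega\Omega_\omega$ the closed antilinear operator $S: X\Omega_\omega \mapsto X^*\Omega_\omega$ satisfies $U_\omega(t)\, S\, X\Omega_\omega = U_\omega(t)X^*\Omega_\omega = \gamma_t(X^*)\Omega_\omega = S\, \gamma_t(X)\Omega_\omega = S\, U_\omega(t)X\Omega_\omega$, using $U_\omega(t)\Omega_\omega=\Omega_\omega$ and the implementation of $\gamma$; uniqueness of the polar decomposition $S=J\Delta^{1/2}$ (with $J$ antiunitary and $\Delta^{1/2}$ positive) then yields $[U_\omega(t), J] = 0$ and $[U_\omega(t), \Delta]=0$. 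Because $J$ is antilinear, the first commutation is equivalent to $JLJ = -L$, and a direct computation gives $e^{iLt}(XJX\Omega_\omega) = \gamma_t(X)\,J\,\gamma_t(X)\,\Omega_\omega \in \Hcal_\omega^+$, so $e^{iLt}\Hcal_\omega^+ = \Hcal_\omega^+$. Now if $\varphi$ is $\gamma$-invariant, then $U_\omega(-t)\hat\Omega_\varphi$ is a unit vector in $\Hcal_\omega^+$ representing $\varphi$, and the uniqueness from (i) forces $U_\omega(t)\hat\Omega_\varphi = \hat\Omega_\varphi$ for all $t$, i.e.\ $\hat\Omega_\varphi \in \ker L$ by Stone's theorem. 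The converse is a one-line computation, and the bijection stated in the last sentence follows by combining this equivalence with uniqueness in the natural cone.
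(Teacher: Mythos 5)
Your proposal is correct. Note that the paper itself offers no proof of this proposition: it is compiled from the standard-form literature, with the text immediately following it deferring to \cite[Sec.\ 2.5.4]{Bratteli-Robinson1} and \cite[Sec.\ 2.10]{DJP2003}. Your treatment of part (i) makes exactly the same appeal (existence, uniqueness and norm-continuity of cone representatives is the genuinely hard content of the standard-form theorem, and outsourcing it is the right call), while your arguments for (ii) and (iii) supply the standard derivations that the paper leaves implicit, and they check out: the $J$-invariance of cone vectors, the computation $U_\omega(t)\,S\,U_\omega(t)^{-1}=S$ on the core $\Mcal_\omega\Omega_\omega$ (which $U_\omega(t)$ maps onto itself since $\gamma_t(\Mcal_\omega)=\Mcal_\omega$), the polar-decomposition uniqueness yielding $[U_\omega(t),J]=0$ and hence invariance of $\Hcal_\omega^+$, and the uniqueness-in-the-cone argument forcing $U_\omega(t)\hat{\Omega}_\varphi=\hat{\Omega}_\varphi$. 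The only cosmetic point worth recording is that the final bijection is, strictly speaking, between normal $\gamma$-invariant states and \emph{unit} vectors in $\ker L\cap\Hcal_\omega^+$ (the cone is closed under positive scaling and contains $0$), a normalization the paper's phrasing also glosses over.
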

The result is part of the rich theory of ``standard forms'' of von Neumann algebras, discussed in \cite[Sec.\ 2.5.4]{Bratteli-Robinson1} and \cite{DJP2003} (see also the original literature cited therein). Notice that $\Omega_\omega \in \Hcal_\omega^+$ and $L\Omega_\omega=0$ by Proposition \ref{prop:weak-Liouvillean}, hence $0$ is a simple (non-degenerate) eigenvalue of $L$ if and only if the normal state $\langle \Omega_\omega , (\slot) \Omega_\omega \rangle$ extending $\omega$ to $\Mcal_\omega$ is the unique normal, $\gamma$-invariant state on $\Mcal_\omega$.\medskip

We present a proof of Proposition \ref{prop:cond-primary-KMS} that is based on the concepts introduced in Section \ref{sec:prelim-qds} and this appendix as well as the equivalence between the uniqueness of KMS states on a von Neumann algebra and factoriality shown in \cite[Prop.\ 5.3.29]{Bratteli-Robinson2}. Another proof can be found in \cite[Thm.\ 1]{kay-purification} for ``double KMS states'' (which could be rewritten to apply to our setup).
\begin{proposition}[Criterion for primarity of KMS states]
	\label{prop:cond-primary-KMS-app}
	Let $\omega$ be a KMS state on $(\Acal,\alpha)$ with GNS representation $(\pi_\omega , \Hcal_\omega , \Omega_\omega)$ and Liouvillean $L$ generating the dynamics $\gamma$ of the induced W$^\ast$-dynamical system $(\Mcal_\omega,\gamma)$. If $L$ has a simple eigenvalue at $0$, i.e.\ $\ker L=\C\Omega_\omega$, then $\omega$ is primary.
\end{proposition}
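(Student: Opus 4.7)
The plan is to reduce the question of primarity of $\omega$ to the uniqueness of the normal KMS state on the induced von Neumann algebra $\Mcal_\omega$, and then invoke the well-known equivalence between uniqueness of KMS states on a W$^\ast$-dynamical system and factoriality (as in \cite[Prop.\ 5.3.29]{Bratteli-Robinson2}). First I would move from $(\Acal,\alpha)$ to the induced W$^\ast$-dynamical system $(\Mcal_\omega,\gamma)$: by Proposition \ref{prop:weak-Liouvillean}, the vector state $\hat\omega:=\langle\Omega_\omega,(\slot)\Omega_\omega\rangle$ is a faithful normal $\beta$-KMS state on $(\Mcal_\omega,\gamma)$, and $\gamma$ is generated by the Liouvillean $L$, which satisfies $L\Omega_\omega=0$.

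The key step is to observe that the hypothesis $\ker L=\C\Omega_\omega$ forces $\hat\omega$ to be the \emph{unique} normal $\gamma$-invariant state on $\Mcal_\omega$. This is precisely the content of Proposition \ref{prop:cone} applied to $\omega$: normal $\gamma$-invariant states are in bijection with unit vectors in $\ker L\cap\Hcal_\omega^+$, so a one-dimensional kernel (which necessarily intersects the natural cone only in $\{\Omega_\omega\}$, since $\Omega_\omega\in\Hcal_\omega^+$) yields a unique normal invariant vector representative, hence a unique normal invariant state.

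Since every $\beta$-KMS state for $\gamma$ is in particular $\gamma$-invariant, uniqueness of normal invariant states implies uniqueness of normal $(\gamma,\beta)$-KMS states on $\Mcal_\omega$. The final step is to apply \cite[Prop.\ 5.3.29]{Bratteli-Robinson2} (or the equivalent statement phrased via extremality in the simplex of normal $\beta$-KMS states, cf.\ \cite[Thm.\ 5.3.30]{Bratteli-Robinson2}): a normal $\beta$-KMS state on a W$^\ast$-dynamical system is a factor state if and only if it is extremal in the convex set of normal $\beta$-KMS states. Since $\hat\omega$ is the unique normal $\beta$-KMS state, it is trivially extremal, so $\Mcal_\omega=\pi_\omega(\Acal)''$ is a factor; that is, $\omega$ is primary.

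The main obstacle I anticipate is the continuity issue: the general theory of the KMS simplex and its extremal points is most cleanly stated for strongly continuous dynamics on a C$^\ast$-algebra, whereas here we only have a W$^\ast$-dynamics and a faithful normal KMS state. One should therefore take care to cite the W$^\ast$-version of the factoriality/uniqueness correspondence, which is available precisely because $\hat\omega$ is faithful (so Tomita-Takesaki theory applies and, by Proposition \ref{prop:modular-KMS}, the modular automorphism group coincides with $\gamma$ up to rescaling). Once this subtlety is settled, the argument is a direct chain: simple zero of $L$ $\Rightarrow$ unique normal invariant state via the natural cone $\Rightarrow$ unique normal $\beta$-KMS state $\Rightarrow$ $\Mcal_\omega$ is a factor $\Rightarrow$ $\omega$ is primary.
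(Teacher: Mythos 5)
Your proposal is correct and follows essentially the same route as the paper's own proof: simple zero eigenvalue of $L$ $\Rightarrow$ unique normal $\gamma$-invariant state via the natural cone bijection of Proposition \ref{prop:cone} $\Rightarrow$ unique normal KMS state, identified (after rescaling and via Proposition \ref{prop:modular-KMS}) with the unique KMS state for the modular automorphism group $\Rightarrow$ factoriality by \cite[Prop.\ 5.3.29]{Bratteli-Robinson2}. The continuity subtlety you flag is exactly the one the paper handles by passing to the $(-1)$-KMS condition for the modular group, so no gap remains.
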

\begin{proof}
	Assume that $\omega$ is an $(\alpha,\beta)$-KMS state for $\beta>0$, and let $\Hcal_\omega^+$ be the natural cone associated to $(\Mcal_\omega , \Omega_\omega)$. Due to the bijection between $\ker L \cap \Hcal_\omega^+$ and the collection of normal, $\gamma$-invariant states (Proposition \ref{prop:cone}), the assumption $\ker L = \C\Omega_\omega$ implies that $\langle\Omega_\omega , (\slot)\Omega_\omega \rangle$ is the unique normal, $\gamma$-invariant state on $\Mcal_\omega$. That state is a $(\gamma,\beta)$-KMS state by Proposition \ref{prop:weak-Liouvillean}, thus a $(-1)$-KMS state with respect to the rescaled dynamics $\{\gamma_{-\beta t}\}_{t\in\R}$. Tomita-Takesaki theory (Proposition \ref{prop:modular-KMS}) tells us that the latter coincides with the modular automorphism group. Hence $\langle\Omega_\omega , (\slot)\Omega_\omega \rangle$ is the unique $(-1)$-KMS state on $\Mcal_\omega$ with respect to modular automorphisms. By \cite[Prop.\ 5.3.29]{Bratteli-Robinson2} this uniqueness is equivalent to the factor property of $\Mcal_\omega$.
\end{proof}

The following lemma shows the claim at the end of Section \ref{sec:prelim-mixing-primary} that the mixing property (Def.\ \eqref{def:mixing}) is equivalent to weak asymptotic abelianness (defined in Eq.\ \eqref{eq:weak-asymp-abelian}) for states that are primary and have a separating GNS vector. This and similar statements have appeared before in \cite{Narnhofer-Thirring_Galilei1991,Narnhofer-Thirring-Wiklicky,Thirring1992} (see also \cite[p.\ 403]{Bratteli-Robinson1}).
\begin{lemma}[Mixing versus weak asymptotic abelianness]
	\label{lem:weak-asymp-ab}
	Let $\omega$ be an $\alpha$-invariant state on a quantum dynamical system $(\Acal,\alpha)$ with GNS representation $(\pi_\omega , \Hcal_\omega , \Omega_\omega)$ and induced von Neumann algebra $\Mcal_\omega = \pi_\omega (\Acal)''$. If $\Omega_\omega$ is separating for $\Mcal_\omega$ and $(\Acal,\alpha,\omega)$ satisfies the mixing property, then it is weakly asymptotically abelian, i.e.\ $\lim_{t\to\infty} [\pi_\omega (\alpha_t (A)),\pi_\omega (B)] = 0$ for all $A,B\in\Acal$ in weak operator topology. If $\omega$ is primary, then weak asymptotic abelianness implies mixing.
\end{lemma}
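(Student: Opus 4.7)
The plan is to handle the two implications separately, using Lemma \ref{lem:mixing-limit} for one direction and an asymptotic commutant argument together with primarity for the other.

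For the first implication (mixing implies weak asymptotic abelianness), since $\omega$ is mixing with separating GNS vector, Lemma \ref{lem:mixing-limit} directly yields $\lim_{t\to\infty}\pi_\omega(\alpha_t(A)) = \omega(A)\mathds{1}$ in weak operator topology for every $A\in\Acal$. I would then use that for any fixed $B\in\Acal$ and $v,w\in\Hcal_\omega$, the functionals $X\mapsto\langle v, X\pi_\omega(B)w\rangle$ and $X\mapsto\langle \pi_\omega(B)^\ast v, Xw\rangle$ are weakly continuous. Applying these to the limit gives
\begin{gather*}
	\lim_{t\to\infty}\langle v,[\pi_\omega(\alpha_t(A)),\pi_\omega(B)]w\rangle = \omega(A)\langle v,\pi_\omega(B)w\rangle - \omega(A)\langle v,\pi_\omega(B)w\rangle = 0,
\end{gather*}
which is the desired weak asymptotic abelianness.

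For the converse, assume $\omega$ is primary and $(\Acal,\alpha,\omega)$ is weakly asymptotically abelian. Fix $A\in\Acal$ and consider the norm-bounded net $(\pi_\omega(\alpha_t(A)))_{t\in\R}$, which lies in the closed norm-ball of radius $\|A\|$ in $\Mcal_\omega$. By Banach-Alaoglu this ball is compact in the ultraweak topology, and ultraweak convergence agrees with weak operator convergence on bounded subsets, so the net has a weak cluster point $X\in\Mcal_\omega$; let $(\pi_\omega(\alpha_{f(i)}(A)))_{i\in I}$ be a subnet converging to $X$ in weak operator topology. Weak asymptotic abelianness implies $\lim_{t\to\infty}[\pi_\omega(\alpha_t(A)),\pi_\omega(B)] = 0$ in weak operator topology for every $B\in\Acal$, hence passing to the subnet gives $[X,\pi_\omega(B)]=0$ for all $B\in\Acal$. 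Therefore $X\in\pi_\omega(\Acal)' = \Mcal_\omega'$, and primarity yields $X\in\Mcal_\omega\cap\Mcal_\omega' = \C\mathds{1}$.

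The constant is then pinned down by $\alpha$-invariance of $\omega$: writing $X=c\mathds{1}$, one computes
\begin{gather*}
	c = \langle\Omega_\omega,X\Omega_\omega\rangle = \lim_{i\in I}\omega(\alpha_{f(i)}(A)) = \omega(A).
\end{gather*}
Since every cluster point equals $\omega(A)\mathds{1}$, the net $(\pi_\omega(\alpha_t(A)))_{t\in\R}$, sitting in a weakly compact set, must itself converge to $\omega(A)\mathds{1}$ in weak operator topology. For any $B\in\Acal$ this gives
\begin{gather*}
	\lim_{t\to\infty}\omega(B\alpha_t(A)) = \lim_{t\to\infty}\langle\pi_\omega(B^\ast)\Omega_\omega,\pi_\omega(\alpha_t(A))\Omega_\omega\rangle = \omega(A)\omega(B),
\end{gather*}
establishing the mixing property. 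The only point requiring care, and the nearest thing to an obstacle, is the identification of cluster points as elements of $\Mcal_\omega'$: this uses that weak asymptotic abelianness is stated for the $^\ast$-algebra $\pi_\omega(\Acal)$, which already suffices because $\pi_\omega(\Acal)'=\pi_\omega(\Acal)'''=\Mcal_\omega'$ by the double commutant theorem, so no extension of the commutator relation to the von Neumann closure is needed.
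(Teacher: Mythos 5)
Your proof is correct and follows essentially the same route as the paper: Lemma \ref{lem:mixing-limit} for the forward direction, and for the converse a Banach--Alaoglu cluster-point argument placing the weak cluster points of $(\pi_\omega(\alpha_t(A)))_{t\in\R}$ in the center $\Mcal_\omega\cap\Mcal_\omega'=\C\mathds{1}$, pinned down to $\omega(A)\mathds{1}$ by $\alpha$-invariance. The additional details you supply (separate weak continuity of multiplication by a fixed bounded operator, and $\pi_\omega(\Acal)'=\Mcal_\omega'$) are exactly the points the paper leaves implicit, and they are handled correctly.
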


\begin{proof}
	For the first statement, Lemma \ref{lem:mixing-limit} shows that $\lim_{t\to\infty} \pi_\omega (\alpha_t (A)) = \omega(A)\mathds{1}$ in weak operator topology for all $A\in\Acal$, which implies weak asymptotic abelianness. For the second one notices that, for every $A\in\Acal$, the net $(\pi_\omega (\alpha_t (A)))_{t\in\R}$ has weak cluster points in the center of $\Mcal_\omega$ due to weak asymptotic abelianness, which is $\C\mathds{1}$ by primarity. Using arguments similar to the ones in the proof of Lemma \ref{lem:equality}, the $\alpha$-invariance of $\omega$ shows that $\lim_{t\to\infty} \pi_\omega (\alpha_t (A)) = \omega(A)\mathds{1}$ in weak operator topology for every $A\in\Acal$, and the mixing property for $\omega$ follows (cf.\ \cite{Narnhofer-Thirring_Galilei1991}).
\end{proof}

We close this section by noting that the mixing property by itself (without the assumption in Lemma \ref{lem:weak-asymp-ab} that the GNS vector is separating) merely implies $\lim_{t\to\infty} \omega([\alpha_t (A),B])=0$ for all $A,B\in\Acal$, which is slightly weaker than weak asymptotic abelianness.

\section{Inertial KMS states of the free scalar field}
\label{appendix:scalar-field}

In this appendix we summarize the theory of the free scalar field of mass $m\geq 0$ on four-dimensional Minkowski spacetime $\M=\R^{1,3}$ using Weyl quantization and define the quasi-free KMS states of the theory. As these are established constructions and results, we will only present an overview. For details on the quantization of the free scalar field on globally hyperbolic spacetimes we refer to \cite{Dimock1980,Kay-Wald,Wald-book,Fewster-Rejzner-AQFT} (see also the summary in \cite[Sec.\ 2]{Verch1993}). The construction of quasi-free KMS states follows the formalism of one-particle structures \cite{Kay1978,kay-double-wedge,kay-purification,kay-uniqueness,Dimock-Kay1987,Kay-Wald}. This is a more abstract and elaborate approach compared to the specification of the thermal two-point function on a Fock space, but the implementation of the dynamics in the GNS representation is rather simple and concise on a formal level, which makes it well-suited for the purposes of this work.

\subsection{Quantum field}
\label{appendix:quantum-field-KG}

Let $P:=\square+m^2$ be the Klein-Gordon operator for mass $m\geq 0$, where $\square$ is the d'Alembert (wave) operator on Minkowski spacetime $\M$, and let $\Cinf(\M;\R)$ be the space of real-valued, smooth, compactly supported functions on $\M$. Let 
\begin{gather*}
	\class{\slot}:\Cinf(\M;\R) \to \Cinf(\M;\R)/P\Cinf(\M;\R) =: \Csymp
\end{gather*}
be the canonical quotient map. Let $E_P : \Cinf(\M ; \R)\to C^{\infty} (\M ; \R)$ be the advanced-minus-retarded Green operator associated to $P$ (also called Pauli-Jordan or causal propagator) \cite{Khavkine-Moretti,Baer-Ginoux-Pfaeffle}. It has the property that $\ker E_P = P\Cinf(\M;\R)$, and the map $\Csymp \to C^{\infty} (\M ; \R)$ induced by $E_P$ is a well-defined $\R$-linear isomorphism onto the space of real-valued smooth solutions of the Klein-Gordon equation (i.e.\ elements of $\ker P \cap C^\infty (\M;\R)$) with compact support on Cauchy surfaces. The bilinear form $(\class{f},\class{g})\mapsto \mathcal{E}(\class{f},\class{g}):=\int_{\M} f E_P g$ defines a real-valued symplectic form on $\Csymp$, and the symplectic space $(\Csymp,\mathcal{E})$ represents the classical phase space of the theory.

The C$^\ast$-algebra $\Acal_F$ is defined as the Weyl (CCR) algebra over the symplectic space $(\Csymp,\mathcal{E})$. It is generated by elements $W(\class{f})$, $f\in\Cinf(\M;\R)$, such that $W(-\class{f})=W(\class{f})^\ast$ and $W(\class{f})W(\class{g}) = \e^{-\frac{i}{2}\mathcal{E}(\class{f},\class{g})} \, W(\class{f+g})$ for all $f,g\in\Cinf(\M;\R)$, which determines $\Acal_F$ uniquely up to $^\ast$-isomorphism \cite{Bratteli-Robinson2,Baer-Ginoux-Pfaeffle,Petz-CCR,Dimock1980}. The local algebras $\Acal_F (O)$ for regions (open, bounded, causally convex subsets) $O\subset\M$ are the C$^\ast$-subalgebras of $\Acal_F$ that are generated by elements $W(\class{f})$ for test functions $f\in\Cinf(O;\R)$ compactly supported in $O$. The net $\{\Acal_F (O)\}_{O\subset\M}$ of C$^\ast$-algebras satisfies the axioms (A)--(C') from Section \ref{sec:prelim-aqft} \cite{Dimock1980} (see also \cite{Fewster-Rejzner-AQFT} and \cite[Sec.\ I.2]{Borchers1996}). (The axioms correspond to Axiom 1, 3, 5 for the free scalar field on a globally hyperbolic spacetime in \cite{Dimock1980}, with the covariance Axiom 5 implying Poincar\'{e} covariance for the field on Minkowski spacetime.) The algebra $\Acal_F$ is the associated quasi-local algebra (see \cite[Prop.\ 5.2.10]{Bratteli-Robinson2}).\medskip

The time translations in an inertial reference frame $\frameu$ are implemented as a $^\ast$-automorphism group $\alpha^{(\frameu)}$ on $\Acal_F$, such that $\alpha_t^{(\frameu)} (\Acal_F (O)) = \Acal_F (O+t\frameu)$ for all $t\in\R$ and regions $O\subset\M$. Without loss of generality, as we only consider a single fixed reference frame in this appendix, we will employ the isomorphism $\M\cong\R^4$ and coordinates $(x_0 , x_1 , x_2 , x_3)\equiv(t , \xu)$ on $\R^4$ such that $\frameu=(1,\vec{0})$. In the chosen coordinates the Klein-Gordon operator is expressed in the familiar form $P=\partial_t^2 - \Delta + m^2$, where $\Delta$ is the Laplace operator on $\R^3$. The dynamics $\alpha^{(\frameu)} \equiv \alpha_F=\{\alpha_{F,t}\}_{t\in\R}$ is given by (cf.\ \cite[Thm.\ 5.2.8]{Bratteli-Robinson2})
\begin{gather}
	\label{eq:dynamics-KG}
	\alpha_{F,t} (W(\class{f})) = W(\class{f\circ T_{-t}})
\end{gather}
on generators of $\Acal_F$ for the time translation $T_t : \M\to\M$, $(t',\xu)\mapsto(t+t',\xu)$ (action of the Killing flow along $\partial_t$). The pair $(\Acal_F , \alpha_F)$ is the quantum dynamical system (Definition \ref{def:dynamical-system}) of the free scalar field with respect to inertial time translations along $\frameu$. Note that $t\mapsto\alpha_{F,t}$ is not strongly continuous since the C$^\ast$-norm distance of distinct Weyl algebra elements is always $2$ (see \cite[Prop.\ 4.2.3]{Baer-Ginoux-Pfaeffle}).

\subsection{KMS states}

The KMS states on $\Acal_F$ with respect to inertial time translations $\alpha_F$ are quasi-free, i.e.\ they are completely determined by their two-point function, with all odd $n$-point functions vanishing (see, e.g., \cite{Petz-CCR,Kay-Wald,Khavkine-Moretti}). Given a symmetric $\R$-bilinear form $\upmu:\Csymp \times \Csymp \to \R$ on $\Csymp$ satisfying the inequality $|\mathcal{E} (\class{f},\class{g})|^2 \leq 4\upmu(\class{f} , \class{f}) \upmu(\class{g} , \class{g})$ for all $f,g\in \Cinf(\M;\R)$, one can define a quasi-free state by $\omega (W(\class{f})) := \e^{-\frac{1}{2} \upmu(\class{f} , \class{f})}$ (extended linearly and continuously to all of $\Acal_F$), with two-point function given by $(f , g) \mapsto \upmu(\class{f} , \class{g}) + \frac{i}{2} \mathcal{E} (\class{f},\class{g})$ \cite{Kay-Wald}. Quasi-free states are equivalently characterized by their so-called one-particle structure on some Hilbert space, that is, a pair $(\kf,\Hcal)$ consisting of a symplectic map $\kf:(\Csymp,\mathcal{E}) \to (\Hcal,2\Im\langle \slot,\slot \rangle_{\Hcal})$ to a ``one-particle Hilbert space'' $\Hcal$ such that $\ran\kf+i\ran\kf\subset\Hcal$ is dense. Given a one-particle structure $(\kf,\Hcal)$, the bilinear form $\upmu$ defined by $\upmu(\class{f} , \class{g}) = \Re \langle \kf \class{f} , \kf \class{g} \rangle_{\Hcal}$ satisfies the above inequality and induces the quasi-free state $\omega$; conversely, given a bilinear form $\upmu$ satisfying this inequality, there exists a one-particle structure $(\kf,\Hcal)$, unique up to unitary equivalence, such that $\upmu$ arises this way \cite{kay-purification,kay-uniqueness,Kay-Wald} (see also \cite{Wald-book} for a general discussion). 

For the construction of the inertial KMS states on $\Acal_F$ the necessary ingredients in terms of one-particle structures can be found in \cite[Sec.\ 3.2]{Kay-Wald}. The explicit construction is presented in \cite[Secs.\ 7.1 \& A4]{Dimock-Kay1987}. Here we only summarize the most important points.\medskip

Let $\Sigma := \{0\}\times\R^3 \cong \R^3$ be the $t=0$ Cauchy surface in $\M$. For $f\in\Cinf(\M;\R)$, let
\begin{gather*}
	\utilde{f}_1 := E_P f\restr_{\Sigma} \, \in \Cinf(\Sigma;\R) \, , \quad \utilde{f}_2 := \partial_t E_P f\restr_{\Sigma} \, \in \Cinf(\Sigma;\R)
\end{gather*}
be the pair of Cauchy data on $\Sigma$ of the solution $E_P f$ to the Klein-Gordon equation corresponding to $f$ under the isomorphism of $\Csymp$ with the space of real-valued smooth solutions mentioned in the previous section. Due to the well-posedness and uniqueness of the Cauchy problem for the Klein-Gordon equation, solutions and Cauchy data can be identified (see \cite[Thm.\ 3.2.11]{Baer-Ginoux-Pfaeffle}), and thus $\Csymp\cong\Cinf(\Sigma;\R) \oplus \Cinf(\Sigma;\R)$. 

Let
\begin{gather*}
	\Gcal := L^2(\Sigma)\equiv L^2(\R^3,\Diff 3\xu) \, .
\end{gather*}
The operator $-\Delta+m^2$ is essentially self-adjoint on $\Cinf(\Sigma;\C)\subset\Gcal$ (see, e.g., \cite{Chernoff1973}), and we denote its self-adjoint closure by the same symbol. The $\R$-linear symplectic map $\gf : (\Csymp,\mathcal{E}) \to (\Gcal,2\Im\langle \slot,\slot \rangle_{\Gcal})$ defined by
\begin{gather}
	\label{eq:ground-one-particle}
	\gf\class{f} := \frac{1}{\sqrt{2}} \left( \ham_0^{1/2} \utilde{f}_1 + i\ham_0^{-1/2} \utilde{f}_2 \right) \, , \quad \ham_0 := (-\Delta+m^2)^{1/2}
\end{gather}
is part of the so-called ground one-particle structure $(\gf,\Gcal,\e^{i\ham_0 t})$ for the Minkowski vacuum state with one-particle Hamiltonian $\ham_0$. We will not go into the details here and instead refer to \cite[Secs.\ 1 \& 2]{kay-double-wedge} (in particular Eq.\ (2.8) ff.) and \cite[Secs.\ 7.1 \& A4]{Dimock-Kay1987}. On generators of $\Acal_F$, the Minkowski vacuum state $\omega_{F,\infty}$ is uniquely determined by $\omega_{F,\infty} (W(\class{f})) := \e^{-\frac{1}{2}\|\gf\class{f} \|^2_{\Gcal}}$ for $f\in\Cinf(\M;\R)$. The map $\gf$ has the properties (see \cite[Sec.\ A4]{Dimock-Kay1987})
\begin{gather}
	\label{eq:range-ground}
	\overline{\ran(\gf)}=\Gcal \, , \quad \ran(\gf) \subset \dom(\ham_0^{-1/2}) \, ,
\end{gather}
where the first is equivalent to the statement that $\omega_{F,\infty}$ is a pure state by \cite[Lemma A.2]{Kay-Wald}, and the second property, called ``regularity'' in \cite{kay-double-wedge,kay-uniqueness}, is essential to the construction of thermal states.\medskip 

The definition of the KMS states is based on a standard construction presented in \cite[\S 3.1]{kay-uniqueness} and summarized in \cite[p.\ 102 \& Lemma 6.2]{Kay-Wald}. For $\beta>0$, the $\R$-linear symplectic map $\kfbeta : (\Csymp,\mathcal{E}) \to (\Kcal,2\Im\langle \slot,\slot \rangle_{\Kcal})$ on the one-particle Hilbert space 
\begin{gather}
	\label{eq:KMS-Hilbert}
	\Kcal:=\Gcal\oplus\Gcal
\end{gather}
is defined by 
\begin{gather}
	\label{eq:kfbeta}
	\kfbeta\class{f} := \left(\sqrt{1+\mu_\beta (\ham_0)} \gf\class{f}\right) \oplus \left(\sqrt{\mu_\beta (\ham_0)} \Cc \gf\class{f}\right) \equiv \frac{1}{\sqrt{1-\e^{-\beta\ham_0}}} \gf\class{f} \oplus \frac{1}{\sqrt{\e^{\beta\ham_0} -1}} \Cc\gf\class{f}
\end{gather}
using spectral calculus, where $\mu_\beta : (0,\infty) \to (0,\infty)$, $x \mapsto (\e^{\beta x} -1)^{-1}$, and the anti-unitary map $\Cc:\Gcal\to\Gcal$ is the complex conjugation. The one-particle Hamiltonian on $\Kcal$ is given by the self-adjoint operator (on the usual domain)
\begin{gather}
	\label{eq:one-particle-Hamiltonian-KMS}
	\ham=\ham_0 \oplus -\ham_0 \, .
\end{gather}
It implements the time translations on $\Kcal$, that is,
\begin{gather}
	\label{eq:dyn-one-particle}
	\e^{i\ham t} \circ \kfbeta = \kfbeta \circ (T_t)_\ast
\end{gather}
on $\Csymp$, where $\e^{i\ham t} = \e^{i\ham_0 t} \oplus \e^{-i\ham_0 t}$ as an operator on $\Kcal$, and $(T_t)_\ast (\class{f}) := \class{f\circ T_{-t}}$ for $f\in\Cinf(\M;\R)$ and $T_t : \M\to\M$ the time translation as defined below Eq.\ \eqref{eq:dynamics-KG}. The triple $(\kfbeta,\Kcal,\e^{i\ham t})$ is the KMS one-particle structure (see \cite{kay-uniqueness} for the general definition) of the unique quasi-free $(\alpha_F,\beta)$-KMS state $\omega_{F,\beta}$ on $\Acal_F$, uniquely determined on generators by
\begin{gather}
	\label{eq:field-KMS-state}
	\omega_{F,\beta} (W(\class{f})) := \e^{-\frac{1}{2}\|\kfbeta\class{f} \|^2_{\Kcal}} \equiv \e^{-\frac{1}{2}\langle\gf\class{f} , \coth(\beta\ham_0 /2)\gf\class{f} \rangle_{\Gcal}}
\end{gather}
for $f\in\Cinf(\M;\R)$. The definition requires the regularity condition in \eqref{eq:range-ground}, see \cite[§A2]{kay-uniqueness}. The final expression follows from the definition of $\kfbeta$ and $1+2\mu_\beta(\ham_0)=\coth(\beta\ham_0 /2)$ (cf.\ \cite[Eq.\ (3.2)]{kay-uniqueness}, \cite[Eq.\ (A4.3)]{Dimock-Kay1987}) and reveals the similarity to the generating function for the canonical state of the thermal (infinitely extended) free Bose gas in the Araki-Woods representation \cite{Araki-Woods} (see also \cite{Cannon1973,JP1,JP-thermal-relaxation,Merkli-ideal-2006}, and the comment on top of p.\ 1021 in \cite{kay-uniqueness}), where $\mu_\beta (\nu) = (\e^{\beta \nu} -1)^{-1}$ is the Planckian momentum density distribution of the gas at temperature $\beta^{-1}$ with single boson energy $\nu$.\medskip 

The GNS representation of $\omega_{F,\beta}$ can be expressed via Weyl operators (exponentials of field operators) on a Fock space (see \cite[p.\ 76 \& Lemma A.2]{Kay-Wald}): It is given by $(\pi_\beta , \Fcal(\Kcal), \Omega_F)$, where $\Fcal(\Kcal):=\C\oplus\bigoplus_{n=1}^\infty \Kcal^{\odot n}$ is the symmetric (bosonic) Fock space over the one-particle space $\Kcal$ with symmetric tensor product $\odot$ and Fock vacuum vector $\Omega_F := 1\oplus 0\oplus 0\oplus\ldots$, and 
\begin{gather}
	\label{eq:WF}
	\pi_\beta (W(\class{f})) := \e^{i (a(\kfbeta\class{f})+a^\ast (\kfbeta\class{f}))} =: \WF(\kfbeta\class{f})
\end{gather}
for $f\in\Cinf(\M;\R)$, in which the self-adjoint closure is implied in the exponent, and $a(\slot)$ and $a^\ast (\slot)$ denote the annihilation and creation operators on $\Fcal(\Kcal)$, respectively. (For a comprehensive treatment of the theory of Fock spaces we refer to \cite{Arai}.) 

Using Eqs.\ \eqref{eq:dynamics-KG}, \eqref{eq:dyn-one-particle}, \eqref{eq:WF} one finds that, on Weyl algebra generators of $\Acal_F$,
\begin{multline*}
	\pi_\beta (\alpha_{F,t} (W(\class{f}))) = \pi_\beta (W(\class{f\circ T_{-t}})) = \WF(\kfbeta\class{f\circ T_{-t}}) = \WF(\e^{i\ham t} \kfbeta\class{f}) = \\ = \Gamma(\e^{i\ham t}) \WF(\kfbeta\class{f}) \Gamma(\e^{i\ham t})^{-1} = \e^{i\dGamma(\ham) t} \pi_\beta (W(\class{f})) \e^{-i\dGamma(\ham) t} \, ,
\end{multline*}
where $\Gamma(\e^{i\ham t}) = \bigoplus_{n=0}^\infty (\e^{i\ham t})^{\odot n} = \e^{i\dGamma(\ham) t}$ for the second quantization map $\dGamma$, and the step from the first to the second line follows from \cite[Thm.\ 5.30]{Arai}. This extends to $\Acal_F$ by linearity and continuity, and it holds $\dGamma(\ham)\Omega_F = 0$. Using Proposition \ref{prop:weak-Liouvillean}, the Liouvillean $L$ generating the W$^\ast$-dynamics on the von Neumann algebra $\pi_\beta (\Acal_F)''$ induced by $\omega_{F,\beta}$ is therefore given by the second quantization 
\begin{gather}
	\label{eq:Liouv-Fock}
	L=\dGamma(\ham)
\end{gather}
of the one-particle Hamiltonian $\ham$.

The ``exponential law'' for Fock spaces, $\Fcal(\Kcal)\cong\Fcal(\Gcal) \otimes \Fcal(\Gcal)$, allows to express the Fock space over $\Kcal = \Gcal \oplus \Gcal$ as a tensor product of Fock spaces over $\Gcal$ (see \cite[Thm.\ 5.38]{Arai}). The tensor product Fock space is frequently used for the GNS Hilbert space of the thermal massless Bose gas in the Araki-Woods representation, see, e.g., \cite{JP1,Merkli-Sigal-Berman,Merkli-Quantum-Markovian} and the overview in \cite[Sec.\ 2.2]{Merkli-LSO-2007}. We will not make use of this representation here.

\subsection{Comments on the mixing property}
\label{appendix:comments-mixing}

In Lemma \ref{lem:KMS-scalar-primary-mixing} in the main text we discussed the mixing property for $(\Acal_F , \alpha_F , \omega)$ for any inertial KMS state $\omega=\omega_{F,\beta}$ using the spectral characterization of Proposition \ref{prop:cond-mixing}. Let us illustrate the connection to spectral properties of the Liouvillean using the explicit definition of the mixing property (Definition \ref{def:mixing}) and the one-particle structure formalism (see also \cite[Prop.\ 4.1]{Buchholz-Verch2015} and \cite[Sec.\ 4.5]{Merkli-ideal-2006} for similar discussions). Recall from the proof of Lemma \ref{lem:KMS-scalar-primary-mixing} that $L=\dGamma(\ham)$ (Eq.\ \eqref{eq:Liouv-Fock}) has spectrum $\R$ that is absolutely continuous with a simple embedded eigenvalue at $0$ corresponding to $\Omega_F$. For the mixing property, it suffices to prove that $\lim_{t\to\infty} \omega(W(\class{g})\alpha_{F,t} W(\class{f})) = \omega(W(\class{f}))\omega(W(\class{g}))$ for all $f,g\in\Cinf(\M;\R)$; this then extends by linearity to the span of all Weyl generators, which is dense in $\Acal_F$, yielding the desired mixing property. By the Weyl relations for $\WF$ (Eq.\ \eqref{eq:WF}) and the definition of $\omega$ (Eq.\ \eqref{eq:field-KMS-state}),
\begin{multline*}
	\omega(W(\class{g})\alpha_{F,t} W(\class{f})) = \langle \Omega_F , \WF(\kfbeta\class{g}) \WF(\kfbeta\class{f\circ T_{-t}}) \Omega_F \rangle = \\ = \e^{-i\Im\langle \kfbeta\class{g} , \kfbeta\class{f\circ T_{-t}} \rangle_{\Kcal}} \langle \Omega_F , \WF(\kfbeta\class{g+f\circ T_{-t}}) \Omega_F \rangle = \\ = \e^{-\langle \kfbeta\class{g} , \kfbeta\class{f\circ T_{-t}} \rangle_{\Kcal}} \cdot \omega(W(\class{f}))\omega(W(\class{g})) \, ,
\end{multline*}
where in the last step it is used that $\langle \Omega_F , \WF(\kfbeta\class{g+f\circ T_{-t}}) \Omega_F \rangle = \e^{-\frac{1}{2}\|\kfbeta\class{g+f\circ T_{-t}} \|^2_{\Kcal}}$ and $\|\kfbeta\class{g+f\circ T_{-t}} \|^2_{\Kcal} = \|\kfbeta\class{g}\|^2_{\Kcal} + \|\kfbeta\class{f}\|^2_{\Kcal} + 2\Re\langle \kfbeta\class{g} , \kfbeta\class{f\circ T_{-t}} \rangle_{\Kcal}$, since $\|\kfbeta\class{f\circ T_{-t}}\|^2_{\Kcal} = \|\e^{i\ham t} \kfbeta\class{f}\|^2_{\Kcal} = \|\kfbeta\class{f}\|^2_{\Kcal}$ by Eq.\ \eqref{eq:dyn-one-particle}. 

Hence we have to show that $\lim_{t\to\infty} \Wcal_2 (g,f\circ T_{-t}) = 0$, where $\Wcal_2 (f,g) := \langle\kfbeta\class{f},\kfbeta\class{g} \rangle_{\Kcal} = \langle \Omega_F , Q(f)Q(g) \Omega_F \rangle$ defines the smeared two-point (Wightman) function of $\omega$ for $f,g\in\Cinf(\M;\R)$ and self-adjoint field operator $Q(g):=a(\kfbeta\class{g})+a^\ast (\kfbeta\class{g})$ (the closure of this expression is implied) on the Fock space $\Fcal(\Kcal)$. The desired decay of the two-point function at timelike separations is a manifestation of Huygens' principle \cite{Buchholz-Verch2015}. The absolute continuity of the spectrum of $L=\dGamma(\ham)$ on $\{\Omega_F\}^\perp$ implies that $\lim_{t\to\infty} \e^{iLt} = |\Omega_F \rangle\langle \Omega_F |$ in the weak operator sense (as a consequence of the Riemann-Lebesgue lemma), and therefore
\begin{gather*}
	\lim\limits_{t\to\infty} \Wcal_2 (g,f\circ T_{-t}) = \lim\limits_{t\to\infty} \langle \Omega_F , Q(g) \e^{iLt} Q(f) \Omega_F \rangle = \langle \Omega_F , Q(g) \Omega_F \rangle \cdot \langle \Omega_F , Q(f) \Omega_F \rangle = 0
\end{gather*}
since $\langle \Omega_F , Q(f) \Omega_F \rangle = \langle \Omega_F , a^\ast (\kfbeta\class{f}) \Omega_F \rangle = 0$.

\section{\JP\ glued representation}
\label{appendix:JP-glued-rep}

Consider the free scalar field of mass $m=0$, using the setup of Appendix \ref{appendix:scalar-field}. We present a unitary transformation (isometric isomorphism) of the field representation in which the Liouvillean $L=\dGamma(\ham)$ (Eq.\ \eqref{eq:Liouv-Fock}) is given by the second quantization of a simple multiplication operator. The main part of the construction is due to \cite{JP1} and is therefore known as ``\JP\ gluing''. We first unitarily transform the KMS one-particle structure $(\kfbeta,\Kcal,\e^{i\ham t})$ (Appendix \ref{appendix:scalar-field}), from which a unitarily equivalent GNS representation of $\omega_{F,\beta}$ is obtained.\medskip

Let $\hat{\Delta}:=-\Delta$, so we have $\ham_0 = \hat{\Delta}^{1/2}$ in Eq.\ \eqref{eq:ground-one-particle} and $\ham=\hat{\Delta}^{1/2} \oplus -\hat{\Delta}^{1/2}$ in Eq.\ \eqref{eq:one-particle-Hamiltonian-KMS}. The operator $\hat{\Delta}$ is positive, unbounded, densely defined and self-adjoint on $\Gcal = L^2(\R^3,\Diff 3\xu)$ with $\spec\hat{\Delta} = [0,\infty)$. Its spectral representation is provided by the Fourier transform
\begin{gather*}
	\Ffrak : L^2 (\R^3,\Diff3\xu) \to L^2 (\R^3,\Diff3\ku) \, , \quad
	\Ffrak[g](\ku) := \frac{1}{(2\pi)^{3/2}} \int\limits_{\R^3} \e^{-i\xu\ku} g(\xu) \Diff3\xu \, ,
\end{gather*}
which is a unitary operator by Plancherel's theorem. For any Borel subset $X\subseteq\R^3$ (equipped with the Lebesgue measure) and measurable function $f:X \to \R$ we define the self-adjoint multiplication operator $M_f$ by $M_f \, g := f\cdot g$ on the dense domain $\dom(M_f) := \{ g\in L^2 (X) \, : \, f\cdot g \in L^2 (X) \}$. To shorten the notation we will write the image of $f$ in the index. For example, for $f:\R^3 \to \R$, $\ku\mapsto\|\ku\|^2$ we write $M_{\|\ku\|^2}$ to denote the multiplication operator $M_f$ on $L^2 (\R^3,\Diff3\ku)$. It holds $\Ffrak[\hat{\Delta} g](\ku)=\|\ku\|^2 \Ffrak[g](\ku)$ for all $g\in\dom(\Delta)\subset \Gcal$ and $\dom(M_{\|\ku\|^2})=\Ffrak\dom(\Delta)$. By spectral calculus,
\begin{gather*}
	\Ffrak\ham_0 \Ffrak^{-1} = M_{\|\ku\|}
\end{gather*}
on the natural domain. Define
\begin{gather*}
	\Ftfrak :=\Ffrak\oplus\Cc\Ffrak\Cc : \Kcal = \Gcal \oplus \Gcal \to \Ffrak\Gcal \oplus \Ffrak\Gcal \, , \quad g \oplus g' \mapsto \Ffrak [g] \oplus \overline{\Ffrak [\overline{g'}]} \, ,
\end{gather*}
where the map $\Cc$ is the natural complex conjugation operator, which is an anti-unitary involution on the respective $L^2$-spaces whose action on $L^2$-functions is denoted by an overline. The maps $\Cc\Ffrak\Cc$ and $\Ftfrak$ are unitary, and we immediately obtain that
\begin{gather*}
	\Ftfrak \ham \Ftfrak^{-1} = \Ffrak \hat{\Delta}^{1/2} \Ffrak^{-1} \oplus -\Cc\Ffrak\Cc \hat{\Delta}^{1/2} \Cc \Ffrak^{-1} \Cc = M_{\|\ku\|} \oplus -M_{\|\ku\|} \, ,
\end{gather*}
where we used that $[\Cc,\hat{\Delta}]=0$ and $[\Cc,M_{\|\ku\|}]=0$. 

Let
\begin{gather*}
	\Hcal_F := L^2 (\R \times S^2 , \diff s \diff\varOmega) \, ,
\end{gather*}
where $\varOmega$ is the standard Riemannian metric on the unit 2-sphere $S^2 \subset \R^3$. There is an isomorphism $\Ffrak\Gcal = L^2 (\R^3 , \Diff3\ku) \to L^2 ((0,\infty) \times S^2 , q^2 \diff q \diff\varOmega)$ induced by introducing spherical coordinates on $\R^3$, denoted $f\mapsto f^\circ$. In particular, $(M_{\|\ku\|} f)^\circ (q,\varOmega) = qf^\circ (q,\varOmega)$ almost everywhere for $f\in\dom(M_{\|\ku\|})\subset L^2 (\R^3 , \Diff3\ku)$. For $\zeta\in\R$, define (almost everywhere)
\begin{gather*}
	\JPmap_{\zeta} : \Ffrak\Gcal \oplus \Ffrak\Gcal \to \Hcal_F \, , \quad 
	\JPmap_{\zeta} [f\oplus g](s,\varOmega) := s \left\{ \begin{array}{lr}
		f^\circ (s,\varOmega) & \textup{for }s> 0\\
		\e^{i\zeta} g^\circ (-s,\varOmega) & \textup{for }s<0
	\end{array} \right.  \, ,
\end{gather*}
called the \JP\ gluing map \cite{JP1,Derezinski-Jaksic-RTE} (see also \cite[Appendix A]{Merkli-Sigal-Berman}). The map $\JPmap_{\zeta}$ is unitary, and it holds
\begin{gather*}
	\JPmap_{\zeta} [M_{\|\ku\|} f \oplus -M_{\|\ku\|} g](s,\varOmega) =  s \left\{ \begin{array}{lr}
		s f^\circ (s,\varOmega) & \textup{for }s> 0\\
		-\e^{i\zeta} (-s) g^\circ (-s,\varOmega) & \textup{for }s<0
	\end{array} \right. = (M_s\JPmap_{\zeta} [f\oplus g])(s,\varOmega)
\end{gather*}
almost everywhere, where $M_s$ is the operator of multiplication with the function $(s,\varOmega)\mapsto s$ on $\Hcal_F$. We therefore obtain a unitary map
\begin{gather*}
	\Tfrak_{\zeta} := \JPmap_{\zeta} \circ \Ftfrak: \Kcal \to \Hcal_F
\end{gather*}
such that $\ham=\hat{\Delta}^{1/2} \oplus -\hat{\Delta}^{1/2}$ (Eq.\ \eqref{eq:one-particle-Hamiltonian-KMS}) is represented by the multiplication operator
\begin{gather*}
	\Tfrak_{\zeta} \ham\Tfrak_{\zeta}^{-1} = M_s
\end{gather*}
on $\Tfrak_{\zeta} \dom(\ham)=\dom(M_s)\subset\Hcal_F$.\medskip

The unitary map $\Tfrak_{\zeta}$ can be used to unitarily transform the one-particle structure $(\kfbeta,\Kcal,\e^{i\ham t})$ of $\omega_{F,\beta}$ (defined in Eqs.\ \eqref{eq:KMS-Hilbert}, \eqref{eq:kfbeta}, \eqref{eq:one-particle-Hamiltonian-KMS}). For all $\beta>0$ and $\zeta\in\R$, the triple
\begin{gather*}
	(\Kfbeta , \, \Hcal_F, \, \e^{i M_s t}) \, , \quad \Kfbeta := \Tfrak_{\zeta} \circ \kfbeta : (\Csymp,\mathcal{E}) \to \Hcal_F
\end{gather*}
is a $\beta$-KMS one-particle structure such that $\omega_{F,\beta} (W(\class{f})) = \e^{-\frac{1}{2}\|\Kfbeta\class{f} \|^2_{\Hcal_F}}$ for $f\in\Cinf(\M;\R)$ (cf.\ Eq.\ \eqref{eq:field-KMS-state}). The KMS one-particle structures describe the same KMS state $\omega_{F,\beta}$ since $\|\Kfbeta\class{f} \|_{\Hcal_F} = \|\kfbeta\class{f} \|_{\Kcal}$ for all $f\in\Cinf(\M;\R)$ by unitarity of $\Tfrak_{\zeta}$ (cf.\ the uniqueness result \cite[Thm.\ 1b]{kay-uniqueness}). The definitions, functional calculus, and the fact that $\Cc$ commutes with $\hat{\Delta}$ and $(\slot)^\circ$ imply that
\begin{multline}
	\label{eq:hbeta}
	f_\beta (s,\varOmega) := (\Kfbeta\class{f})(s,\varOmega) = s \left\{ \begin{array}{lr}
		\sqrt{1+\mu_\beta (s)} \Ffrak[\gf\class{f}]^\circ (s,\varOmega) & \textup{for }s> 0\\
		e^{i\zeta} \sqrt{\mu_\beta (-s)} \overline{\Ffrak[\gf\class{f}]^\circ}(-s,\varOmega) & \textup{for }s<0
	\end{array} \right. \, = \\ = \sqrt{\frac{s}{1-\e^{-\beta s}}} |s|^{1/2} \left\{ \begin{array}{lr}
		\Ffrak[\gf\class{f}]^\circ (s,\varOmega) & \textup{for }s> 0\\
		-\e^{i\zeta} \overline{\Ffrak[\gf\class{f}]^\circ}(-s,\varOmega) & \textup{for }s<0
	\end{array} \right.
\end{multline}
almost everywhere for all $f\in\Cinf(\M ; \R)$, where $\gf$ is defined in Eq.\ \eqref{eq:ground-one-particle}, and $\mu_\beta$ is given below Eq.\ \eqref{eq:kfbeta}.\medskip

From this unitarily transformed one-particle structure, the GNS representation of $\omega_{F,\beta}$ is now obtained as in Eq.\ \eqref{eq:WF} (using \cite[p.\ 76 \& Lemma A.2]{Kay-Wald}): 
\begin{lemma}
	\label{lem:glued-GNS}
	The GNS representation of $\omega_{F,\beta}$ is given by $(\pi_{F,\beta} , \Fcal(\Hcal_F), \Omega_F)$ for the symmetric Fock space $\Fcal(\Hcal_F)$ with Fock vacuum vector\footnote{Here we slightly abuse notation by using the same symbol as in Appendix \ref{appendix:scalar-field}.} $\Omega_F$, and $\pi_{F,\beta}$ is defined analogous to Eq.\ \eqref{eq:WF} (replacing $\kfbeta$ with $\Kfbeta$). By construction, this GNS representation is unitarily equivalent to $(\pi_\beta , \Fcal(\Kcal), \Omega_F)$. The Liouvillean $L_F$ generating the dynamics $\alpha_F$ (Eq.\ \eqref{eq:dynamics-KG}) in this representation is unitarily equivalent to Eq.\ \eqref{eq:Liouv-Fock}, and is given by the second-quantized multiplication operator
	\begin{gather*}
		L_F = \dGamma(M_s) \, ,
	\end{gather*}
	which is commonly denoted $\dGamma(s)$ and is self-adjoint on the natural domain $\dom(\dGamma(s))$ (see \cite[Thm.\ 5.2]{Arai}).
\end{lemma}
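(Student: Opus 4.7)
The plan is to transport the Fock GNS realization already constructed in Appendix \ref{appendix:scalar-field} through the unitary gluing map $\Tfrak_\zeta:\Kcal\to\Hcal_F$ and then read off the Liouvillean from the one-particle intertwining relations established earlier in this appendix. First I would verify that $(\pi_{F,\beta},\Fcal(\Hcal_F),\Omega_F)$ is a GNS triple for $\omega_{F,\beta}$. Since $\Kfbeta=\Tfrak_\zeta\circ\kfbeta$ is a one-particle structure for $\omega_{F,\beta}$ on $\Hcal_F$ (unitarily equivalent to $(\kfbeta,\Kcal)$ by construction), the standard Weyl/Fock machinery yields a $^\ast$-representation $\pi_{F,\beta}$ with generating functional $\langle\Omega_F,\pi_{F,\beta}(W(\class{f}))\Omega_F\rangle=\e^{-\frac{1}{2}\|\Kfbeta\class{f}\|^2_{\Hcal_F}}=\omega_{F,\beta}(W(\class{f}))$, and cyclicity of $\Omega_F$ follows from density of $\ran\Kfbeta+i\ran\Kfbeta\subset\Hcal_F$, itself a consequence of the corresponding density for $\kfbeta$ via unitarity of $\Tfrak_\zeta$. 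This is precisely the construction of \cite[p.\ 76 \& Lemma A.2]{Kay-Wald}, already invoked for $(\pi_\beta,\Fcal(\Kcal),\Omega_F)$.

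For unitary equivalence of the two GNS triples, the natural candidate is the second quantization $\Gamma(\Tfrak_\zeta):\Fcal(\Kcal)\to\Fcal(\Hcal_F)$, which is unitary, fixes the Fock vacuum, and intertwines Weyl operators by $\Gamma(\Tfrak_\zeta)\WF(v)\Gamma(\Tfrak_\zeta)^{-1}=\WF(\Tfrak_\zeta v)$ for $v\in\Kcal$. Specializing to $v=\kfbeta\class{f}$ and using $\Kfbeta=\Tfrak_\zeta\circ\kfbeta$ gives $\Gamma(\Tfrak_\zeta)\,\pi_\beta(\slot)\,\Gamma(\Tfrak_\zeta)^{-1}=\pi_{F,\beta}$, which establishes the claimed equivalence.

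For the Liouvillean I combine Eq.\ \eqref{eq:dyn-one-particle}, namely $\e^{i\ham t}\circ\kfbeta=\kfbeta\circ(T_t)_\ast$, with the one-particle intertwining $\Tfrak_\zeta\ham\Tfrak_\zeta^{-1}=M_s$ established above, to obtain $\e^{iM_s t}\circ\Kfbeta=\Kfbeta\circ(T_t)_\ast$. The standard identities $\Gamma(\e^{iM_s t})=\e^{i\dGamma(M_s)t}$ and $\Gamma(U)\WF(v)\Gamma(U)^{-1}=\WF(Uv)$ for unitary $U$ on $\Hcal_F$ then yield
\begin{equation*}
\pi_{F,\beta}(\alpha_{F,t}(W(\class{f})))=\e^{i\dGamma(M_s)t}\,\pi_{F,\beta}(W(\class{f}))\,\e^{-i\dGamma(M_s)t}
\end{equation*}
for all $f\in\Cinf(\M;\R)$ and $t\in\R$; linearity and norm continuity in the Weyl generators extend this to $\Acal_F$. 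Since $\dGamma(M_s)\Omega_F=0$, the uniqueness clause of Proposition \ref{prop:weak-Liouvillean} identifies $L_F=\dGamma(M_s)$, and self-adjointness on the natural domain is the cited Fock-space result \cite[Thm.\ 5.2]{Arai}.

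No individual step is technically difficult; the only point requiring care is the bookkeeping that tracks the unitary gluing $\Tfrak_\zeta$ consistently through the one-particle data, its second quantization $\Gamma(\Tfrak_\zeta)$, and the Weyl functor, so that the unitary equivalence of the two GNS presentations and the identification $L_F=\dGamma(M_s)$ both fall out of the same commutative diagram rather than being argued for separately.
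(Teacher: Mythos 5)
Your proposal is correct and follows essentially the same route as the paper, which justifies the lemma implicitly via the preceding construction: the Kay--Wald Fock-space realization of the transported one-particle structure $(\Kfbeta,\Hcal_F,\e^{iM_s t})$, unitary equivalence through the second quantization $\Gamma(\Tfrak_\zeta)$, and the same Weyl-operator intertwining computation used in Appendix B to identify $L_F=\dGamma(M_s)$ via $L_F\Omega_F=0$ and Proposition \ref{prop:weak-Liouvillean}. The only detail worth keeping explicit is the one you already flag, namely that cyclicity of $\Omega_F$ rests on the density of $\ran\Kfbeta+i\ran\Kfbeta$, inherited from the corresponding property of $\kfbeta$ by unitarity of $\Tfrak_\zeta$.
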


The same result is also obtained if we first transform the one-particle structure $(\gf,\Gcal,\e^{i\ham_0 t})$ for the Minkowski vacuum state to $(\Ffrak\circ\gf,\Ffrak\Gcal,\e^{iM_{\|\ku\|} t})$ using $\Ffrak$, then perform the construction of the KMS one-particle structure for this unitarily transformed ground one-particle structure following \textit{mutatis mutandis} Eq.\ \eqref{eq:KMS-Hilbert} ff., and finally apply the gluing map $\JPmap_{\zeta}$. Comparing with, e.g., \cite{Froehlich-Merkli-Another-RTE} or the summaries in \cite[Secs.\ 3.2 \& 3.3]{Merkli-Quantum-Markovian} or \cite[Secs.\ 2.1 \& 2.2]{Merkli2021-2}, we see that the GNS representation of the quasi-free KMS state $\omega_{F,\beta}$ of the free massless scalar field formally coincides with the \JP\ glued Araki-Woods representation \cite{Araki-Woods,JP1} of a thermal massless Bose gas, which is a common choice for the reservoir in the study of RTE (see Section \ref{sec:th-rte}). As a complex-valued function, $f_\beta$ appears in the interaction term of the generator of dynamics for the coupled system (see Eq.\ \eqref{eq:monopole-coupling}). The phase parameter $\zeta\in\R$ is chosen such that the two parts of $f_\beta$ fit together at $s=0$ in a continuous, sufficiently regular way \cite{Froehlich-Merkli-Another-RTE,Merkli-Quantum-Markovian}.\medskip 

Since the Fock vacuum vector $\Omega_F$ represents a KMS state, there exists a Tomita-Takesaki modular conjugation $J_F$ on $\Fcal(\Hcal_F)$ associated to the pair $(\pi_{F,\beta} (\Acal_F)'' , \Omega_F)$, and the modular operator is given by $\e^{-\beta L_F}$ (Propositions \ref{prop:modular} \& \ref{prop:modular-KMS}). The operator $J_F$ is uniquely determined by the relations $J_F \e^{iL_F t} J_F = \e^{iL_F t}$ and
\begin{gather*}
	J_F \e^{-\beta L_F /2} \pi_{F,\beta} (W(\class{f})) \Omega_F = \pi_{F,\beta} (W(\class{f}))^\ast \Omega_F
\end{gather*}
for all $f\in\Cinf(\M;\R)$, which extends to $\pi_{F,\beta} (\Acal_F)''$ and implies, together with the fact that $\pi_{F,\beta} (\Acal_F)'' \Omega_F$ is a core for $\e^{-\beta L_F /2}$, that such $J_F$ indeed coincides with the Tomita-Takesaki modular conjugation (see \cite[Eqs.\ (3.47)--(3.48)]{Kay-Wald} and the proof of \cite[Thm.\ 1]{kay-purification}). From the one-particle formalism \cite{kay-double-wedge,kay-purification,kay-uniqueness,Kay-Wald} one obtains
\begin{gather}
	\label{eq:modular-conj-field-new}
	J_F = \Gamma(j_F) \, ,
\end{gather}
where $j_F$ is the anti-unitary involution on $\Hcal_F$ given almost everywhere by
\begin{gather*}
	j_F (f)(s,\varOmega) = -\e^{i\zeta} \overline{f}(-s,\varOmega)
\end{gather*}
(recall that the GNS representation $\pi_{F,\beta}$ depends on the parameter $\zeta\in\R$ appearing in the definition of $\Tfrak_{\zeta}$). A direct calculation using Eq.\ \eqref{eq:hbeta} and the identity $\mu_\beta (s) = \e^{-\beta s} (1+\mu_\beta (s))$, $s\in\R\setminus\{0\}$, shows that, almost everywhere,
\begin{gather}
	(j_F f_\beta)(s,\varOmega) = \e^{-\beta s/2} f_\beta (s,\varOmega) \, .
	\label{eq:j-hbeta}
\end{gather}


\bigskip

\mysepline

\newpage

\renewcommand{\refname}{\Large References}
\begin{footnotesize}
\providecommand{\etalchar}[1]{$^{#1}$}
\providecommand{\doi}[1]{\url{https://doi.org/#1}}

\end{footnotesize}

\end{document}